\newtheorem{theorem}{Theorem}
\newtheorem{corollary}[theorem]{\indent Corollary}
\newtheorem{proposition}[theorem]{\indent Proposition}
\newtheorem{remark}{Remark}
\begin{document}

    \title{Reconfigurable Distributed Antennas and Reflecting Surface: A New Architecture for Wireless Communications }
% ~\IEEEmembership{Member,~IEEE}
\author{Chengzhi Ma,
        Xi Yang,
        Jintao Wang,
        Guanghua Yang,
        Wei Zhang,
        and Shaodan Ma
        % <-this % stops a space
\thanks{C. Ma, J. Wang and S. Ma are with the State Key Laboratory of Internet of Things for Smart City and the Department
of Electrical and Computer Engineering, University of Macau, Macao SAR, China (e-mails: yc07499@um.edu.mo; 
wang.jintao@connect.um.edu.mo;
shaodanma@um.edu.mo).}% <-this % stops a space
\thanks{X. Yang is with the Shanghai Key Laboratory of Multidimensional Information Processing, School of Communication and Electronic Engineering, East China Normal University, Shanghai 200241, China (e-mail: xyang@cee.ecnu.edu.cn).}
\thanks{G. Yang is with the School of Intelligent Systems Science and Engineering and GBA and B{\&}R International Joint Research Center for Smart Logistics, Jinan University, Zhuhai 519070, China (e-mail: ghyang@jnu.edu.cn).}
\thanks{W. Zhang is with the School of Electrical Engineering \& Telecommunications, University of New South Wales, Sydney, Australia (e-mail: w.zhang@unsw.edu.au).}
}
\UseRawInputEncoding

% The paper headers
% \markboth{Journal of \LaTeX\ Class Files,~Vol.~14, No.~8, August~2015}%
% {Shell \MakeLowercase{\textit{et al.}}: Bare Demo of IEEEtran.cls for IEEE Journals}

\maketitle
% As a general rule, do not put math, special symbols or citations
% in the abstract or keywords.
\begin{abstract}

% Distributed antenna systems (DAS) can provide uniform coverage with the potential to reduce power and utilize multiple spatially distributed remote radio units (RRUs). On the other hand, reconfigurable intelligent surfaces (RISs) have the potential to improve the spectral and energy efficiencies of wireless communication by leveraging a large number of reconfigurable units for passive beamforming gain. However, various challenges are expected when directly adding RIS into the current DAS in practical deployment, e.g., a large number of elements on RIS are needed to combat the ``multiplicative fading" effect, and high-quality control link is required for configuration to fully unleash the passive beamforming gain of RIS. 

Distributed Antenna Systems (DASs) employ multiple antenna arrays in remote radio units to achieve highly directional transmission and provide great coverage performance for future-generation networks. However, the utilization of fully digital or hybrid active antenna arrays results in a significant increase in hardware costs and power consumption for DAS. To address these issues, integrating DAS with Reconfigurable Intelligent Surfaces (RIS) offers a viable approach to ensure coverage and transmission performance while maintaining low hardware costs and power consumption. To incorporate the merits of RIS into the DAS from practical consideration, a novel architecture of ``Reconfigurable Distributed Antennas and Reflecting Surfaces (RDARS)'' is proposed in this paper. Specifically, based on the design of the additional direct-through state together with the existing high-quality fronthaul link, any element of the RDARS can be dynamically programmed to connect with the base station (BS) via fibers and perform the \textit{connected mode} as remote distributed antennas of the BS to receive or transmit signals. Additionally, RDARS also inherits the low-cost and low-energy-consumption benefits of fully passive RISs by default configuring the elements as passive to perform the \textit{reflection mode}. As a result, RDARS encompasses both DAS and RIS as special cases, offering flexible control over the trade-off between \textit{distribution gain} and \textit{reflection gain} to enhance performance. To unveil the potential of such architecture, the ergodic achievable rate under the RDARS architecture is analyzed and closed-form expression with meaningful insights is derived. The theoretical analysis proves that the RDARS can achieve a higher achievable rate than both DAS and fully passive RIS with the passive beamforming gain provided by elements acting \textit{reflection mode} while combating the ``multiplicative fading'' suffered by RISs through the \textit{connected mode} performed at the RDARS. Simulation results also demonstrate the superiority of the RDARS architecture over DAS and passive RIS-aided systems and its flexible trade-off between performance and cost. To further validate the feasibility and effectiveness, an RDARS prototype with 256 elements is built for real experiments. Experimental results show that the RDARS-aided system with only one element operating in connected mode can achieve an additional 21\% and 170\% throughput improvement over DAS and RIS-aided systems, respectively.

\end{abstract}

\begin{IEEEkeywords}
Reconfigurable distributed antennas and reflecting surfaces (RDARS), distributed antenna system (DAS), reconfigurable intelligent surface (RIS), performance analysis.
\end{IEEEkeywords}

\IEEEpeerreviewmaketitle

\section{Introduction}

    Distributed antenna system (DAS) is one of the promising wireless communication architectures where multiple remote radio units (RRUs) distributed spatially throughout the network are connected to a home base station (BS) for signal processing\cite{Robert_Heath}.\footnote{The deployment of DAS depends on network structure, and can be implemented in various methods, e.g., the RRUs can also be connected to a central processing units (CPU) \cite{Lisu_Yu, Lin_Dai}. As such, we take one of the definitions of DAS as an example here\cite{Robert_Heath}. } By leveraging state-of-art radio over fiber (ROF) technologies \cite{ROF1, ROF2, Lisu_Yu}, the DAS can potentially reduce the average distance of propagation to or from the nearest antenna, thus reducing the required power for transmission by creating uniform coverage with micro/macro-diversity gain as compared to co-located antenna deployment \cite{Feng_Wei}. Though the original idea of DAS was to improve indoor cellular communication by increasing coverage and reducing the outage without introducing extensive BS deployments \cite{Saleh}, prior studies have also identified its potential to improve system energy efficiency and system capacity \cite{Robert_Heath, Lin_Dai, Lin_Dai2}. For example, a code division multiple access (CDMA)-DAS with maximal ratio combining (MRC) was analyzed in \cite{Lin_Dai}, where the benefits of macro-diversity introduced by DAS were revealed. The superiority of DAS as compared to co-located BS antennas was demonstrated theoretically in \cite{Lin_Dai2}. As a result, the DAS architecture has been applied in various systems and scenarios\cite{6G}, e.g., the cloud radio access network and ultra-dense network. Moreover, by utilizing the inherent macro-diversity provided by DAS with the merits introduced by massive multiple-input-multiple-output (MIMO) \cite{CZ}, the combination of DAS and linear signal processing technologies has given rise to a new communication paradigm, i.e., cell-free massive MIMO (CF-mMIMO)\cite{Ngo}. CF-mMIMO can alleviate interference and provide uniformly good service and thus is anticipated to play an essential role in future wireless communication systems. 
    
    % CF-mMIMO can alleviate interference and provide uniformly good service and thus is anticipated to play an essential role in future wireless communication systems. Nevertheless, deploying massive MIMO in each RRU will bring tremendous hardware costs and energy consumption.

    Despite the significant advantages exemplified above, many densely deployed antennas or remote radio units (RRUs) are needed, and large antenna arrays are preferred to be deployed in these distributed units to obtain sufficient antenna gain and alleviate interference\cite{Lisu_Yu}. This undoubtedly introduces tremendous hardware costs and energy consumption, thus limiting its practical applications. On the other hand, reconfigurable intelligent surface (RIS), also known as intelligent reflecting surface (IRS), has been proposed as a promising technology to extend the coverage while maintaining a low cost and energy consumption for improving spectrum efficiency (SE) and energy efficiency (EE) of the communication network \cite{Qingqing_Wu, Chongwen_Huang, LIS}. Specifically, RIS is a planar array composed of many passive elements, each of which can impose an independent phase shift on the incident signals. By smartly tuning the phase shifts, the wireless propagation environment becomes controllable and programmable, which thus provides an extra degree of freedom to optimize communication performance. 

    However, applying RIS for current wireless communication systems faces great challenges due to RIS's inherent characteristics. For example, the RIS reflective link generally suffers from the ``multiplicative fading'' effect, i.e., the equivalent channel gain of the transmitter-RIS-receiver link is the product of the transmitter-RIS and RIS-receiver channel gains \cite{Zijian_Zhang}. Therefore, the path loss over the RIS reflective link may be high due to the ``multiplicative fading'', resulting in an overall small performance gain. The merits of directly adding RIS into current wireless communication systems may be only exhibited when the direct link is relatively weak or suffers from blockage\cite{Van}. As a result, many RIS elements are generally needed for performance enhancement. However, the high signaling overhead introduced by the training pilots required for channel estimation and channel information feedback \cite{Hu_Chen} as well as the high complexity in real-time beamforming impose great challenges for RIS deployment with a large number of passive elements in practical wireless network \cite{Cunhua_Pan}. Moreover, from the implementation perspective, high-quality feedback or control link for delivering the control signal for phase shift design is required to fully leverage the passive beamforming gain of RIS.
    Although some RIS variants, e.g., active RIS \cite{Zijian_Zhang} or hybrid relay-reflecting intelligent surface (HR-RIS) \cite{Nguyen} were proposed to mitigate some of these limitations. Nevertheless, these variants require additional reflection-type amplifiers or relaying to replace the original passive elements for performance enhancement, which is highly dependent on the availability of external power supply.\footnote{The signal model of RDARS-aided system is also different from active RIS\cite{Zijian_Zhang} or HR-RIS-aided system\cite{Nguyen}. This will be demonstrated by \eqref{system_model_SISO} in Section III.}
    
    As such, motivated by the design philosophy of RIS in terms of ``reconfigurability" as well as its ability to provide passive beamforming gain with low costs\cite{Qingqing_Wu}, it is of great interest to rethink the architecture of DAS and incorporate the merits of RIS into DAS to fulfill the unprecedented requirements for future communication, e.g., ubiquitous coverage, ultra-high-rate throughput, and so on \cite{6G, Qing_Xue}, but with low hardware costs and energy consumption.
    These, thus, motivate us to think outside the box and to revisit the design philosophy of both systems, i.e., instead of simply adding RIS into DAS, \textit{How about we empower RRUs in DAS with the ability for passive beamforming by low-cost programmable reflection elements while maintaining the distribution gain via programmable selected active antenna elements?}
    % Therefore both industry and academia have started to look for new disruptive technologies to combine with existing DAS to fulfill the unprecedented requirements for future communication aiming at ubiquitous coverage, ultra-high-rate throughput,   and so on\cite{6G, Qing_Xue}.
        
    % For example, the study in \cite{Kangda_Zhi} shows that by combing RIS with a massive MIMO system, the performance can be greatly enhanced by applying a two-time scale design, i.e., the base station (BS) is adapted to the instantaneous CSI while the passive beamforming performing at RIS is configured according to statistical CSI. Also, it is shown that even with phase noise and imperfect CSI, through proper joint optimization and efficient design algorithm, the total average mean-square-error (MSE) can be greatly reduced in the RIS-aided system, and incorporating RIS is still capable of boosting system performance in such system setup\cite{Jintao_Wang}.

    % Besides, conventional RIS also faces challenges in channel estimations since the number of channel coefficients to be estimated scales linearly with the product of the numbers of antennas at the BS, UE, and the elements on RIS \cite{YinHaifan}. Even with appropriate design of the channel estimation schemes and leveraging the channel characteristics, e.g., sparsity, the pilot overhead is still large and becomes unaffordable when deploying many passive elements for performance enhancement\cite{Wang_Peilan}. 
    
    To this end, Reconfigurable Distributed Antennas and Reflecting Surface (RDARS) is proposed in this paper as a promising new architecture that exhibits the merits of both DAS and RIS while addressing their limitations from a practical implementation perspective. Specifically, a RDARS is a planar array consisting of reconfigurable elements, where each element can switch between two modes, namely, the \textit{reflection mode} and the \textit{connected mode}. When working under \textit{reflection mode}, the elements work as passive reflecting elements similar to the elements in fully passive RIS. When working under \textit{connected mode}, the elements act as distributed antennas connected to the BS via dedicated wires or fibers and can receive the incoming or transmit wireless signal. As a result, an element working under \textit{connected mode} on RDARS can be viewed as a remote antenna in traditional DAS\cite{Robert_Heath, Lisu_Yu}. Consequently, different from conventional DAS, where all the antenna elements are connected with energy-demanding components, e.g., radio frequency (RF) chains, some of them are envisioned to be replaced with low-cost passive reflection elements to reduce the hardware complexity and cost while still maintaining satisfying performance by leveraging passive beamforming gain. Moreover, leveraging the ``reconfigurability'' introduced by RDARS with dynamically configured different modes for each element, RDARS is envisioned to provide more degree of freedom and flexibility for system design with higher performance gain than its counterparts. 

    \subsection{Main Contributions}
    To unveil the performance gain introduced by RDARS, this paper focuses on proposing the concept and unveiling its potential with \textit{theoretical analysis} and then presents a showcase of the RDARS prototype with \textit{experimental results}. The main contributions of the paper are as follows:
        \begin{itemize}
            \item We propose a novel RDARS architecture for future wireless communications. One key feature of RDARS is the reconfigurability, i.e., each element of RDARS can be dynamically programmed to perform either two modes, i.e., 
            perform as a remote antenna to receive or transmit signals under \textit{connected mode} or reflect signals as a conventional fully-passive RIS element under \textit{reflection mode}. As a result, RDARS serves as a flexible and reconfigurable combination of distributed antennas and reflecting surfaces. It covers the DAS and RIS as special cases with great potential for performance enhancement and practical deployment. 

            \item We first analyze the performance of the RDARS-aided system under a single-antenna BS scenario with optimal phase shift design for RDARS configuration based on instantaneous channel state information (CSI). Closed-form expression for ergodic achievable rate is derived under the Rayleigh fading channel. The analytical results indicate that the RDARS-aided system always outperforms the DAS counterpart and RIS-aided systems with practical surface size.
            
            \item We further characterize the performance of the RDARS-aided system considering multi-antenna BS with arbitrary RDARS configuration. Closed-form approximations of ergodic achievable rate are provided where the more general Rician fading channel with line-of-sight (LoS) component between UE-RDARS and RDARS-BS links are considered.

            \item We provide extensive simulation results to validate the analysis and investigate the impact of various parameters with clear insights. The results confirm the positive benefits of the RDARS-aided system, e.g., the transmit power and the required number of antennas/RF chains can be reduced significantly compared to the counterparts. 
            
            % in certain scenarios, e.g., the required total number of antennas at BS and elements on the surface can be reduced with an RDARS-aided system while guaranteeing certain performance gain. 
            
            \item We conduct practical experiments to demonstrate one showcase of RDARS's applications by establishing a RDARS-aided communication system prototype with $256$ configurable elements. The experimental results show the superiority of the proposed RDARS in terms of improvements in the throughput, the received SNR, and the block error rate (BLER), thus serving as vivid proof of performance enhancement and manufacturability.   
        \end{itemize}
        
    The rest of the paper is organized as follows. In Sections II and III, we propose the idea of the RDARS architecture and the system model of the RDARS-aided system with single-antenna BS. Then, the performance of the RDARS-aided system is analyzed in Section IV. In Section V, we extend the performance analysis to multi-antenna BS scenarios. Simulation results are given in Section VI, followed by experimental results in Section VII. The conclusion is drawn in Section VIII.
    
    \textit{Notations:} Throughout the paper, numbers, vectors, and matrices are denoted by lower-case, bold-face lower-case, and bold-face upper-case letters, respectively. $(\cdot)^{T}$, $(\cdot)^{H}$ denote the transpose and conjugate transpose of a matrix or vector. ${\rm{diag}}(\mathbf{v})$ denotes a diagonal matrix with each diagonal element being the corresponding element in $\mathbf{v}$. Similarly, ${\rm{blk}}(\cdot)$ denote the block-diagonalization operation. Furthermore, $|\cdot|$ denotes the modulus of a complex number. ${\rm{Tr}}(\cdot)$ and $\mathbb{E}[\cdot]$ represents trace and expectation operator, respectively. The element in the $i^{th}$ row and $j^{th}$ column of matrix $\mathbf{A}$ is denoted as $\mathbf{A}(i,j)$. Finally, a Gamma-distributed random variable $x$ is denoted as $x \sim \Gamma(k,p)$, where $k$ and $p$ are the shape and scale parameters, respectively.

\section{RDARS}

    The proposed RDARS consists of a total of $N$ elements, and each element can switch between two modes, namely, the \textit{connected mode} and the \textit{reflection mode} as shown in Fig. 1. The RDARS controller can dynamically configure the specific mode of each element. When working under \textit{connected mode}, the elements act as a remote antenna connected to the BS via dedicated wires or fibers and can receive the incoming or transmit wireless signal.\footnote{In this paper, we focus on the characterizing the performance of RDARS in the uplink where the element acting as \textit{connected mode} is used for receiving the incoming wireless signal \cite{ROF3}. The extension to RDARS acting as one of the transmission points in the downlink will be our future work.} When working under \textit{reflection mode}, the elements work as passive reflecting units similar to the elements in fully passive RIS. As a result, RDARS serves as a flexible and reconfigurable combination of distributed antennas and reflecting surfaces. To reduce the hardware costs, we only consider the case where the number of elements performing \textit{connected mode} is very small, i.e., $a << N$. To better characterize the configuration of this architecture, we introduce an indicating matrix $\mathbf{A} \in \mathbb{C}^{a \times N}$ and $\mathbf{A}^{H}\mathbf{A}(i,i) \in \{0,1\}$, where $\mathbf{A}^{H}\mathbf{A}(i,i) = 1$ means the $i^{th}$ element works at \textit{connected mode}, while $\mathbf{A}^{H}\mathbf{A}(j,j) = 0$ indicates that the $j^{th}$ element operates at \textit{reflection mode}, i.e., the $j^{th}$ element can only alter the phase of incoming signals.
    Let $\mathbf{\Theta} = {\rm{diag}}(\boldsymbol{\theta}) \in \mathbb{C}^{N \times N}$ denote the RDARS reflection-coefficient matrix whose $i^{th}$ diagonal element is expressed as $\mathbf{\Theta}(i,i) = \boldsymbol{\theta}(i) =e^{j{\rm{arg}}(\boldsymbol{\theta}(i))}, \forall i$, and ${\rm{arg}}(\boldsymbol{\theta}(i))$ is the phase shift the $i^{th}$ element will induce.
    As such, the equivalent RDARS reflection-coefficient matrix can be modeled as $\mathbf{B} \triangleq (\mathbf{I}-\mathbf{A}^{H}\mathbf{A}) \mathbf{\Theta} \in \mathbb{C}^{N \times N}$. 

    An implementation example of RDARS is presented in Section VII. It should be noted that unlike the design philosophy of active RIS \cite{Zijian_Zhang} or HR-RIS \cite{Nguyen}, the key idea of RDARS is to build upon the DAS with low-cost passive elements for passive beamforming gain and to leverage the existing high-quality fronthaul link with ROF technologies\cite{Lisu_Yu} as exemplified in Fig. 2. Accordingly, such architecture demonstrates significant potential for practical deployment, i.e., the control link of RDARS can be integrated with the fronthaul of the current DAS leveraging the wildly applied ROF technologies\cite{Lisu_Yu, ROF1, ROF2}. This, thus guarantees the high-quality feedback and control signal for RDARS configuration to fully unleash the potential of passive beamforming gain.
    Besides, by leveraging the signal received by the elements acting \textit{connected mode}, RDARS has the potential to obtain ``global CSI'' with the method proposed in \cite{Abdelrahman, YinHaifan}.\footnote{
    In \cite{YinHaifan} and \cite{Abdelrahman}, ``semi-passive RIS'' was proposed merely for channel estimation, and the received pilot signals were mainly processed locally on RIS with low-cost signal processing units. This may limit the estimation accuracy. On the contrary, the RDARS-aided system has the freedom to jointly process signals at BS (or CPU) with more available resources and signal processing units. The study of RDARS for CSI acquisition will be our future research topic. 
    }
    The ``global CSI'', i.e.,  the individual acquisition of BS-RDARS and RDARS-User equipment(UE) channels, is infeasible for traditional RIS to acquire due to the inherent scaling ambiguity issue\cite{Qingqing_Wu, RIS_Overview}. 
    However, such ``global CSI'' can be obtained by RDARS and to further extract useful information about the surrounding environment. This, thus, enables a wide range of sensing applications. As a matter of fact, the RDARS can also be applied for integrated sensing and communication (ISAC) applications to facilitate reliable user localization without compromising the communication rate as in\cite{RDARS_demo}. On top of that, RDARS benefits from the added advantage of "reconfigurability," which allows for dynamically configured different modes for each element. This feature grants RDARS more freedom and flexibility in system design, making it highly anticipated to deliver significant performance enhancements compared to its counterparts with more potential applications.
    % \footnote{After the initial version of this paper was submitted on March 13, 2023, more experiments regarding RDARS-aided ISAC are conducted, details can be found on \cite{RDARS_demo}.}

    % Denote the number of elements working on \textit{connected mode} as $a$. Then, for $a = 0$, the RDARS-aided system becomes a fully passive RIS-aided system; for $a = N$, the RDARS-aided system can be interpreted as a DAS that has two distributed antenna sets, and one set has $N$ antennas located at a remote place while the other set of antennas is deployed at the BS \cite{Junyuan_Wang}. 

    \begin{figure} [htb]
    	\centering
            % \vspace{-0.8cm}
            % \setlength{\belowcaptionskip}{-0.5cm}
            \includegraphics[width=0.5\textwidth]{./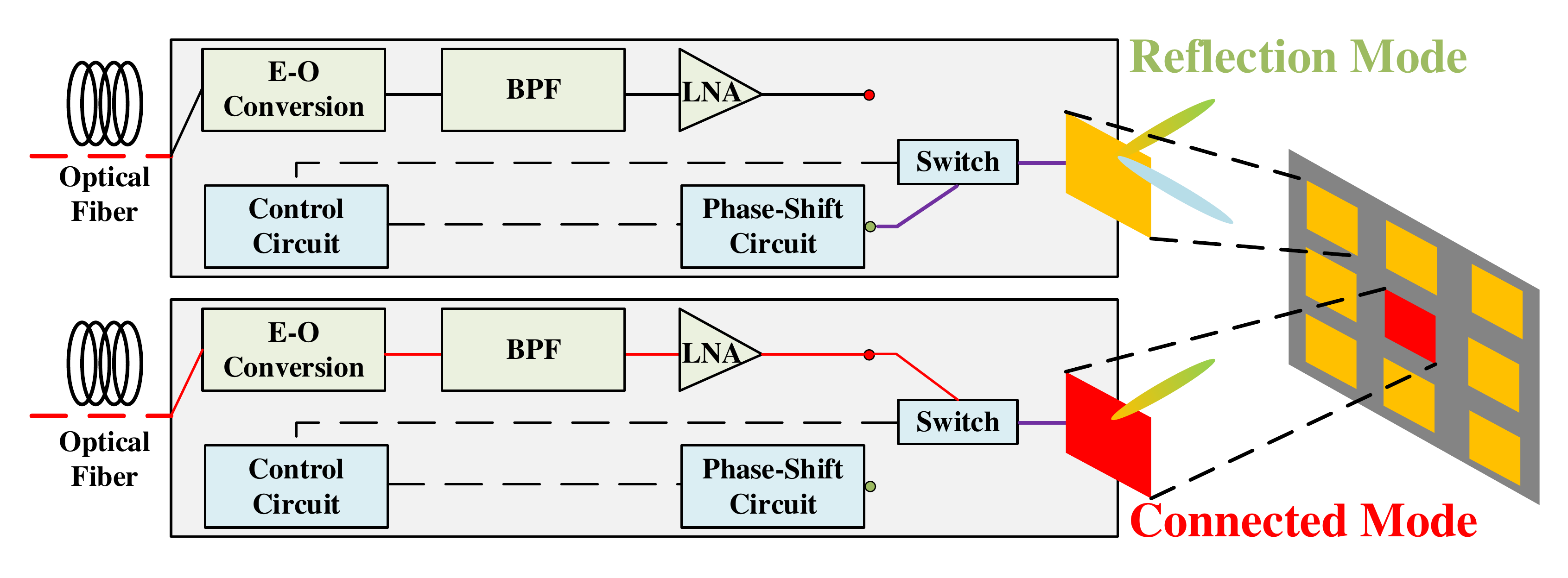}
        \caption{ An implementation example of RDARS. The \textit{connected mode} is built with the electrical-optical interface (E-O Conversion), the bandpass filter (BPF), and low-noise amplifier (LNA) by leveraging existing high-quality fronthaul links (fiber or cable) using ROF technologies\cite{Lisu_Yu}.}
    \end{figure}

    \begin{figure*} [htb]  
        \centering
        \setlength{\belowcaptionskip}{-3mm}
        \subfigure[uplink transmission] {
        \includegraphics[width=0.85\columnwidth]{./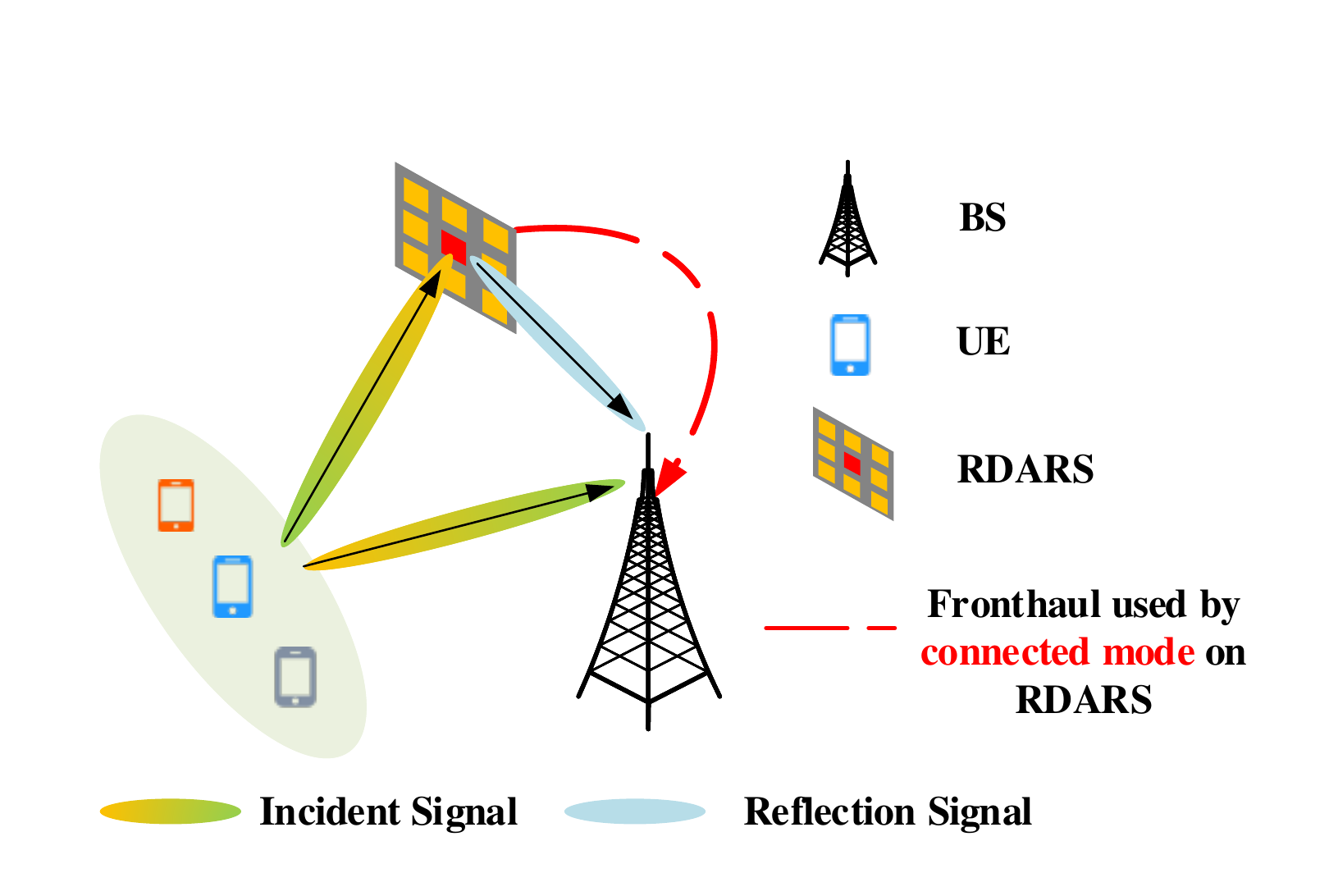} 
        } 
        \subfigure[downlink transmission] {
        \includegraphics[width=0.85\columnwidth]{./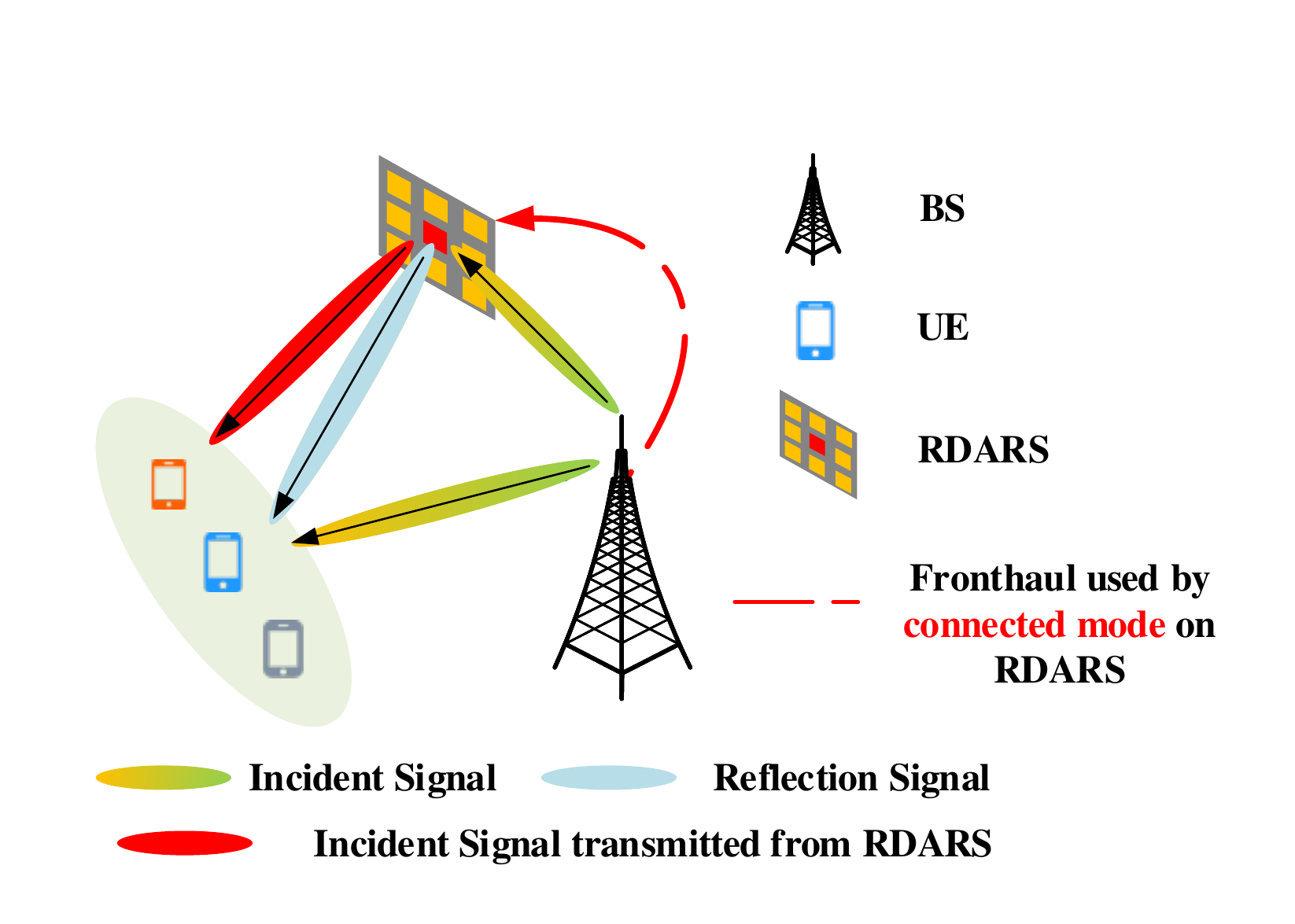} 
        }
        % \subfigure[An illustration of RDARS] {
        % \includegraphics[width=0.93\columnwidth]{./FigV2/RDARS_Architecture.pdf} 
        % }
        \caption{ 
        Illustration of RDARS-aided system. (a) The element acting \textit{connected mode} on RDARS is used for receiving signals. (b) The element acting \textit{connected mode} on RDARS is used for transmitting signals.  }    
    \end{figure*}

    % \subfigure[RDARS-aided system] {
    %     \includegraphics[width=0.5\columnwidth]{./FigV2/RDARS_Architecture.pdf} 
    %     }

    % \begin{figure} [htb] 
    % 	\centering
    %         % \vspace{-0.8cm}
    %         % \setlength{\belowcaptionskip}{-0.5cm}
    %         \includegraphics[width=0.5\textwidth]{./FigV2/RDARS_Architecture.pdf}
    % 	\caption{ A potential realization of RDARS: The \textit{connected mode} is built upon existing RRU (consists of the electrical-optical interface (E-O Conversion), the bandpass filter (BPF), and power amplifier (PA))  using  
    %     ROF technologies. \cite{Lisu_Yu}.  }
    % \end{figure}

\section{System Model}

    % \footnote{Here we only consider single RDARS for a starter since the system model of multiple RDARSs may involve multi-bounce signal reflected by the passive reflecting elements, which is still an open problem even for multi-RIS-aided systems\cite{Pingping_Zhang}.}
    In this paper, we first consider RDARS-aided uplink communications where an RDARS is deployed to assist the communications from a single-antenna UE to a single-antenna BS to shed light on the potential of such novel architecture as shown in Fig. 2(a). The multi-antenna BS scenario is then studied in Section V. The channels between RDARS-BS link, UE-RDARS, and UE-BS link are denoted as $\mathbf{h}_{RB} \in \mathbb{C}^{N \times 1}$, $\mathbf{h}_{UR} \in \mathbb{C}^{N \times 1}$, and $h_{UB} \in \mathbb{C}$, respectively. Armed with the notation of indicating matrix $\mathbf{A}$, the received signal $y \in \mathbb{C}$ from the antenna of BS and the received signal $\mathbf{u} \in \mathbb{C}^{N \times 1}$ from the $a$ elements acting with \textit{connected mode} at the RDARS can be expressed as\footnote{ We assume perfect fronthaul that offers the error-free and infinite capacity to characterize the performance upper bound for tractable insights similar to the analysis in DAS \cite{Lin_Dai} and cell-free mMIMO \cite{Ngo}. Future work is needed to quantify the impact of practical constraints.}

    % \begin{small}
    \begin{align} \label{system_model_SISO}
                \underbrace{\begin{bmatrix}
                    y^{T} \
                    \mathbf{u}^{T}
                \end{bmatrix}^{T}}_{\in \mathbb{C}^{(1+a) \times 1}} 
                &=
                \underbrace{\begin{bmatrix}
                    \overbrace{
                    h_{UB} + 
                    \mathbf{h}_{RB}^{H} \mathbf{B} \mathbf{h}_{UR}} ^ {\in \mathbb{C}^{1 \times 1}}  \\
                    \underbrace{ \mathbf{A} \mathbf{h}_{UR}}_{\in \mathbb{C}^{a \times 1}} 
                \end{bmatrix}}_{\in \mathbb{C}^{(1+a) \times 1}} \underbrace{x}_{\in \mathbb{C}} 
                +
                \underbrace{\begin{bmatrix}
                    n_{B}\\
                    \mathbf{n}_{R}
                \end{bmatrix}}_{\in \mathbb{C}^{(1+a) \times 1}}, 
                % = 
                % \widetilde{\mathbf{h}} x + \widetilde{\mathbf{n}},
    \end{align}
    % \end{small}
    
    \noindent where $y \in \mathbb{C}$ and $\mathbf{u} \in \mathbb{C}^{a \times 1}$ are the received signal at BS and at the elements acting \textit{connected mode} on RDARS, respectively, $x \in \mathbb{C}$ denotes the transmit symbol with total transmit power $\mathbb{E}[|x|^2] \leq P$. $n_{B} \sim \mathcal{CN}(0, \sigma_B^2)$ and $\mathbf{n}_R \sim \mathcal{CN}(\mathbf{0},\sigma_R^2\mathbf{I})$ are the additive white Gaussian noise (AWGN) received at BS and the element performing \textit{connected mode} on RDARS, respectively.
    \begin{remark}
     \begin{itemize}
         \hspace*{\fill}
         \item [(1)] For $a = 0$, the RDARS-aided system degrades into an RIS-aided system \cite{Qingqing_Wu}.
         \item [(2)] For $a = N$, the RDARS-aided system can be regarded as a DAS that has two distributed antenna sets, and one set has $N$ antennas located at a remote place while the other set of antennas is deployed at the BS \cite{Lin_Dai}. 
     \end{itemize}
    \end{remark}

    It can be seen from the above \textbf{Remark 1} that RDARS serves as a flexible and reconfigurable combination of DAS and RIS via different system parameters. To further demonstrate the potential of RDARS, the performance gain introduced by the use of RDARS will be characterized in the following sections.

    % \begin{figure} [htb]
    % 	\centering
    %         % \vspace{-0.8cm}
    %         % \setlength{\belowcaptionskip}{-0.5cm}
    %         \includegraphics[width=0.5\textwidth]{./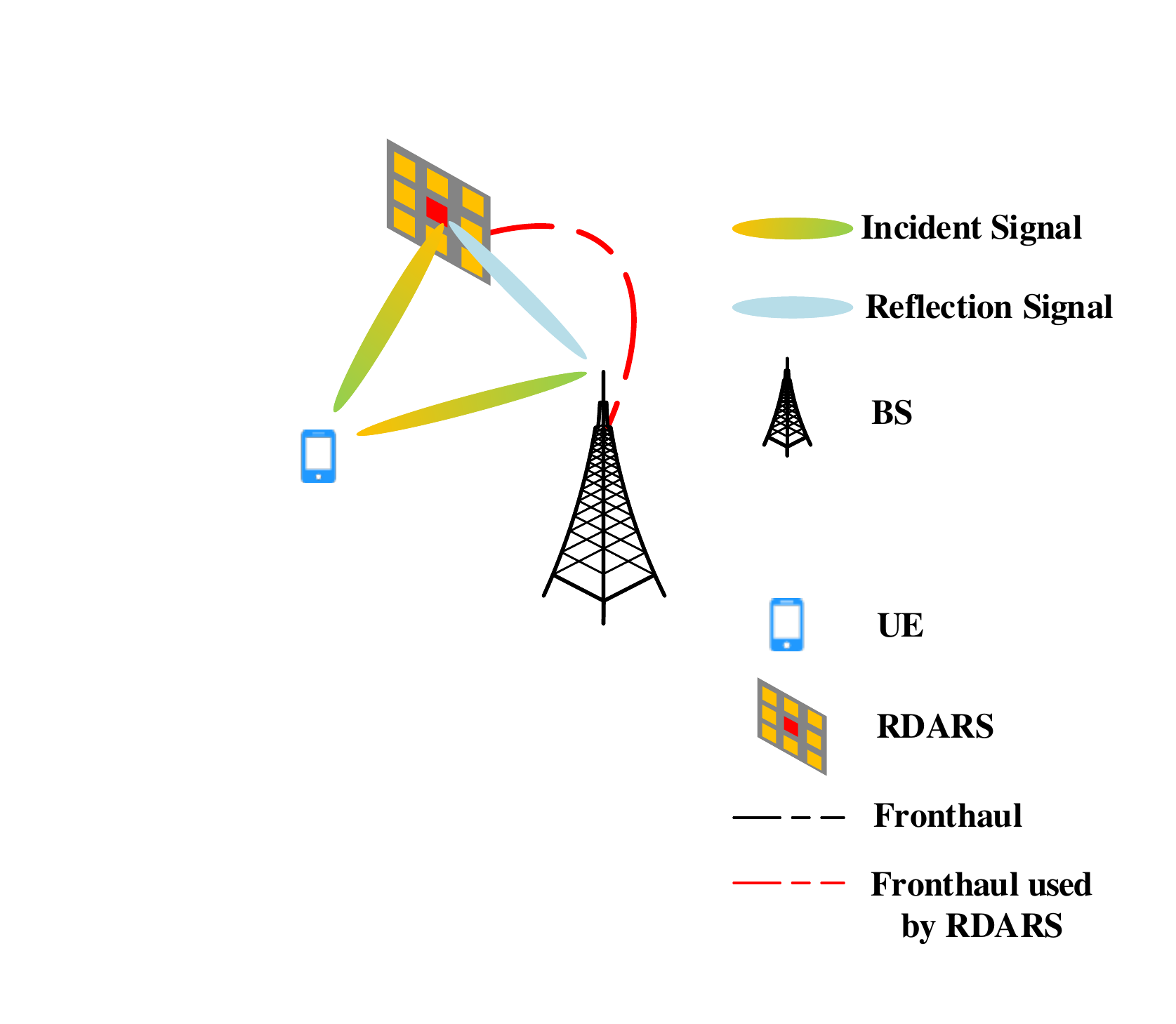}
    % 	\caption{ The RDARS-aided system.}
    % \end{figure} 

    \section{Performance Analysis}

    In this section, we reveal the potential of the proposed RDARS in terms of ergodic achievable rate. To this end, we first analyze the distribution of the received SNR and then derive the ergodic achievable rate. After that, we compare the performance of RDARS-aided systems with that of RIS-aided and DAS systems. To demonstrate the superiority of the proposed architecture and provide meaningful insights, we assume Rayleigh-fading channels for the above channel, i.e., $\mathbf{h}_{RB} \sim \mathcal{CN}(0, \beta^2\mathbf{I})$, $\mathbf{h}_{UR} \sim \mathcal{CN}(0, \alpha^2\mathbf{I})$, and $h_{UB} \sim \mathcal{CN}(0, \gamma^2)$ similar to \cite{Qingqing_Wu,Ngo}. 
    
    \subsection{Ergodic Achievable Rate}

    Recalling the system model in (\ref{system_model_SISO}), to effectively combine the received signal, a low-complexity combining scheme, i.e., the maximum ratio combining (MRC), is adopted at the BS.\footnote{The analysis can be easily extended to the RDARS-aided system with other combining schemes.} 
    We also assume $\sigma_{B} = \sigma_{R} = \sigma$ similar to the setting in DAS \cite{Lin_Dai} in this section.\footnote{ Such noise variance assumption is relaxed in Section V.} By applying the MRC and denoting the transmit SNR as $\overline{\gamma}_s \triangleq \frac{P}{\sigma^2}$, the received SNR $\gamma_s$ is given by 
    % \begin{small}
    \begin{align}  \label{gamma_s}
        \gamma_s 
        &=
        \overline{\gamma}_s ( | h_{UB} + \mathbf{h}_{RB}^{H} \mathbf{B} \mathbf{h}_{UR} |^2 + \mathbf{h}_{UR}^{H} \mathbf{A}^{H} \mathbf{A} \mathbf{h}_{UR} ).
    \end{align}
    % \end{small}
    
    \noindent It is clear that the indicator matrix $\mathbf{A}$ and the reflection-coefficient matrix $\mathbf{\theta}$ in $\mathbf{B}$ affect the value of $\gamma_s$. To unleash the potential of the RDARS system, the optimal $\boldsymbol{\theta}$ for RDARS can be derived as
    
    % Some insightful observations can be found by considering different system parameter configurations as follows:
    % \begin{itemize}
    %     \item  When $a = 0$ (or $\mathbf{A} = \mathbf{0}_N$), 
    % $\gamma_s$ becomes the received SNR of a fully-passive RIS-aided SISO system with a direct link \cite{Charishma}.
    %     \item When $a = N$ (or $\mathbf{A} = \mathbf{I}_N$),
    % $\gamma_s$ is the received SNR of a DAS with two distributed antenna sets where one set is the home-based station with one antenna and the other deployed at the remote RDARS consists of $N$ co-located antennas \cite{Junyuan_Wang}.
    % \end{itemize}

    \begin{theorem}
       The optimal phase shifts for elements acting \textit{reflection mode} on RDARS with perfect instantaneous CSI in the above system setup can be obtained in closed-form as: 
       % ${\rm{arg}}(\boldsymbol{\theta}(i)) =  {\rm{arg}} (h_{UB}) 
       %      + {\rm{arg}}( h_{RB,i} )
       %      - {\rm{arg}}( h_{UR,i} ),
       %      \forall \mathbf{A}^{H}\mathbf{A}(i,i) = 0$
        \begin{align}
            &{\rm{arg}}(\boldsymbol{\theta}(i)) = \notag 
            \\ 
            &{\rm{arg}} (h_{UB}) 
            + {\rm{arg}}( h_{RB,i} )
            - {\rm{arg}}( h_{UR,i} ),
            \forall \mathbf{A}^{H}\mathbf{A}(i,i) = 0,
        \end{align}
    where ${\rm{arg}}(\cdot) $ retrieves the phase of the input complex number. Also, $h_{RB,i}$ and $h_{UR,i}$ are the $i^{th}$ element of the channel vectors $\mathbf{h}_{RB}$ and $\mathbf{h}_{UR}$, respectively.
    \end{theorem}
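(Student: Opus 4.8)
The plan is to maximize the received SNR in \eqref{gamma_s} over the phase shifts $\{\mathrm{arg}(\boldsymbol{\theta}(i))\}$ of the reflection-mode elements. The key observation is that the phase shifts appear only in the first term $|h_{UB} + \mathbf{h}_{RB}^{H}\mathbf{B}\mathbf{h}_{UR}|^2$, since the second term $\mathbf{h}_{UR}^{H}\mathbf{A}^{H}\mathbf{A}\mathbf{h}_{UR}$ depends only on the connected-mode channel entries and is unaffected by $\boldsymbol{\theta}$. Hence the optimization decouples, and it suffices to maximize $|h_{UB} + \mathbf{h}_{RB}^{H}\mathbf{B}\mathbf{h}_{UR}|$.

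First I would expand the reflected term. Recalling $\mathbf{B} = (\mathbf{I} - \mathbf{A}^{H}\mathbf{A})\mathbf{\Theta}$ with $\mathbf{\Theta} = \mathrm{diag}(\boldsymbol{\theta})$, the product $\mathbf{h}_{RB}^{H}\mathbf{B}\mathbf{h}_{UR}$ is a sum over only those indices $i$ with $\mathbf{A}^{H}\mathbf{A}(i,i)=0$, namely $\sum_{i:\,\mathbf{A}^{H}\mathbf{A}(i,i)=0} h_{RB,i}^{*}\, \boldsymbol{\theta}(i)\, h_{UR,i}$. Writing $\boldsymbol{\theta}(i) = e^{j\mathrm{arg}(\boldsymbol{\theta}(i))}$ and using $|\boldsymbol{\theta}(i)| = 1$, each summand has modulus $|h_{RB,i}|\,|h_{UR,i}|$, which is fixed, and a controllable phase equal to $\mathrm{arg}(\boldsymbol{\theta}(i)) - \mathrm{arg}(h_{RB,i}) + \mathrm{arg}(h_{UR,i})$. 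By the triangle inequality $|h_{UB} + \sum_i h_{RB,i}^{*}\boldsymbol{\theta}(i) h_{UR,i}| \le |h_{UB}| + \sum_i |h_{RB,i}|\,|h_{UR,i}|$, and equality is attained exactly when every term is co-phased with $h_{UB}$, i.e. when $\mathrm{arg}(\boldsymbol{\theta}(i)) - \mathrm{arg}(h_{RB,i}) + \mathrm{arg}(h_{UR,i}) = \mathrm{arg}(h_{UB})$. Solving for $\mathrm{arg}(\boldsymbol{\theta}(i))$ gives the claimed closed form.

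There is no serious obstacle here; the only point requiring a little care is to confirm that the phase-shift constraint is only the unit-modulus constraint $|\boldsymbol{\theta}(i)|=1$ (so any phase in $[0,2\pi)$ is feasible) and that the second SNR term is genuinely independent of $\boldsymbol{\theta}$, so the bound is simultaneously achievable for all reflection-mode indices without conflict. I would also note for completeness that $\overline{\gamma}_s > 0$ is a common positive scaling, so maximizing $\gamma_s$ is equivalent to maximizing the bracketed expression. With these points in place, the triangle-inequality argument delivers the result directly, and the optimal phase shifts coincide with the standard co-phasing rule adapted to the RDARS structure, with the direct link $h_{UB}$ serving as the common reference phase.
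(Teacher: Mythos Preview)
Your proposal is correct and follows essentially the same approach as the paper, which simply invokes the triangle inequality to co-phase each reflected summand with the direct link $h_{UB}$, exactly as in the standard RIS phase-alignment argument. Your added observations that the second SNR term is independent of $\boldsymbol{\theta}$ and that the unit-modulus constraint makes the bound simultaneously achievable are the right details to make the argument complete.
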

    \begin{proof}
        The proof can be obtained by leveraging the triangle inequality similar to \cite[(24),(28)]{Qingqing_Wu} thus omitted here for brevity.
    \end{proof}
    \textbf{Theorem 1} indicates that the optimal phase shift design for elements performing \textit{reflection mode} is the same as the traditional RIS system\cite{Qingqing_Wu} and therefore does not require additional complexity in phase shift design in RDARS-aided systems when compared to RIS-aided systems.

    With the optimal phase shifts for RDARS\footnote{
    Here, $\mathbf{A}$ is assumed to be fixed to ease our analysis and give meaningful insights with fair comparison with RIS and DAS. However, the design of $\mathbf{A}$ provides an extra degree of freedom in optimization for RDARS-aided systems and will be investigated in the future.}, $\gamma_s$ in \eqref{gamma_s} can be further formulated as
    
    % \begin{small}
    \begin{align}
        &\gamma_s 
        =
        \overline{\gamma}_s 
        \times \notag \\
        &\bigg [ \underbrace{(|h_{UB}| +  \sum_{i=1}^{N}(1-a_i)|h_{RB,i}||h_{UR,i}| )^2}_{\textit{reflection gain}} 
        +
        \underbrace{\mathbf{h}_{UR}^{H} \mathbf{A}^{H} \mathbf{A} \mathbf{h}_{UR} )}_{\textit{distribution gain}}  \bigg ], \label{SNR_SISO}
    \end{align}
    % \end{small}
    
    \noindent where $a_i = \mathbf{A}^{H}\mathbf{A}(i,i) \in \{0, 1\}$. Here we define $(|h_{UB}| + |\mathbf{h}_{RB}^{H} \mathbf{B} \mathbf{h}_{UR}| )^2$ as the $\textit{reflection gain}$ from the elements in RDARS performing \textit{reflection mode}, while $\mathbf{h}_{UR}^{H} \mathbf{A}^{H} \mathbf{A} \mathbf{h}_{UR}$ as the $\textit{distribution gain}$ provided by the elements in RDARS working under \textit{connected mode}. 
    
    It can be seen from \eqref{SNR_SISO} that $\gamma_s$ involves the sum of random variables (RVs). Specifically, the $\textit{reflection gain}$ in $\gamma_s$ provided by the passive reflecting elements and the direct link contains the square of the sum of RVs, i.e., $(|h_{UB}| +  \sum_{i=1}^{N}(1-a_i)|h_{RB,i}||h_{UR,i}| )^2$. Therefore, obtaining the exact distribution of $\gamma_s$ involves multiple integrals, convolution, and transformation of RVs and thus makes  $\gamma_s$ intractable. Besides, the analysis is different from the power scaling law derived for RIS-aided and active RIS-aided systems without the consideration of direct link in \cite{Qingqing_Wu, Zijian_Zhang}, where the Law of Large Number (LLN) is used for analysis. Existing analytical results can not be directly applied due to the involvement of direct link $h_{UB}$ and the introduced distribution gain from the connected-mode elements in \eqref{SNR_SISO}, thus making the distribution derivation of $\gamma_s$ significantly challenging. To address this issue, we first propose the following proposition based on Gamma moments matching \cite{Van_Chien, Charishma} to characterize the distribution of $\gamma_s$, and then the ergodic achievable rate is derived.

    \begin{proposition}
        The distribution of $\gamma_s$ can be approximated as
            % \begin{small}
            \begin{align}
                \gamma_s \ \overset{approx}{\sim} \Gamma(k,
                p
                ), \label{gamma approx}
            \end{align} 
            % \end{small}
            
        \noindent with the shape and scale parameters as
        
        % \begin{small}
        \begin{align}
            k = \frac{\mathbb{E}[\gamma_s]^2}{ \mathbb{E}[\gamma_s^2] - \mathbb{E}[\gamma_s]^2}, \
            p = \frac{\mathbb{E}[\gamma_s^2] - \mathbb{E}[\gamma_s]^2}{\mathbb{E}[\gamma_s]},
        \end{align}
        % \end{small}
        
        \noindent where the first and second moments of $\gamma_s$, i.e., $\mathbb{E}[\gamma_s]$ and $\mathbb{E}[\gamma_s^2]$, can be evaluated 
        from \eqref{EX} and \eqref{EX2} at the top of the next page, respectively.
    \end{proposition}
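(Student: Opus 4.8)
The plan is to prove the Gamma approximation by the classical moment–matching argument \cite{Van_Chien, Charishma}: since $\gamma_s$ in \eqref{SNR_SISO} is, up to the deterministic factor $\overline{\gamma}_s$, the sum of the nonnegative \emph{reflection gain} and \emph{distribution gain} terms, it is a positively–skewed nonnegative random variable for which the two–parameter Gamma family is the natural surrogate. Granting this \emph{form}, the proof reduces to computing the first two moments $\mathbb{E}[\gamma_s]$ and $\mathbb{E}[\gamma_s^2]$ in closed form and then matching them: solving $kp=\mathbb{E}[\gamma_s]$ and $kp^2=\mathbb{E}[\gamma_s^2]-\mathbb{E}[\gamma_s]^2$ for $(k,p)$ yields precisely the stated expressions. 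So essentially all the work is in evaluating \eqref{EX} and \eqref{EX2}.

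The key structural step that makes those moments tractable is an independence decomposition of the two gain terms. Write $\gamma_s=\overline{\gamma}_s(X^2+Y)$, where $X\triangleq |h_{UB}|+\sum_{i:a_i=0}|h_{RB,i}||h_{UR,i}|$ and $Y\triangleq\sum_{i:a_i=1}|h_{UR,i}|^2$. Because the index sets $\{i:a_i=0\}$ and $\{i:a_i=1\}$ are disjoint and, under the Rayleigh–fading assumption, the entries of $\mathbf{h}_{UR}$ and $\mathbf{h}_{RB}$ and the scalar $h_{UB}$ are mutually independent, $X$ and $Y$ are independent. Hence $\mathbb{E}[\gamma_s]=\overline{\gamma}_s(\mathbb{E}[X^2]+\mathbb{E}[Y])$ and $\mathbb{E}[\gamma_s^2]=\overline{\gamma}_s^2(\mathbb{E}[X^4]+2\,\mathbb{E}[X^2]\,\mathbb{E}[Y]+\mathbb{E}[Y^2])$, so the problem splits into the low-order moments of $X$ and of $Y$ separately.

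For $Y$: it is a sum of $a$ i.i.d.\ $\mathrm{Exp}(\alpha^2)$ variables, so $Y\sim\Gamma(a,\alpha^2)$ and immediately $\mathbb{E}[Y]=a\alpha^2$, $\mathbb{E}[Y^2]=a(a+1)\alpha^4$. For $X$: it is the sum of the independent Rayleigh variable $|h_{UB}|$, with $\mathbb{E}[|h_{UB}|^k]=\gamma^k\Gamma(1+k/2)$, and $M\triangleq N-a$ i.i.d.\ product terms $|h_{RB,i}||h_{UR,i}|$, which are products of two independent Rayleighs and therefore satisfy $\mathbb{E}[(|h_{RB,i}||h_{UR,i}|)^k]=(\alpha\beta)^k\Gamma(1+k/2)^2$. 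I would then set $W\triangleq\sum_{i:a_i=0}|h_{RB,i}||h_{UR,i}|$, use the moments of a sum of i.i.d.\ terms (e.g.\ $\mathbb{E}[W]=M\mu_w$, $\mathbb{E}[W^2]=M\,\mathbb{E}[w^2]+M(M-1)\mu_w^2$, and the analogous third and fourth ones), expand $(|h_{UB}|+W)^2$ and $(|h_{UB}|+W)^4$ by the binomial theorem, invoke independence of $|h_{UB}|$ and $W$, and finally substitute the Rayleigh constants $\Gamma(3/2),\Gamma(5/2),\Gamma(3)$. Collecting terms gives the closed forms $\mathbb{E}[\gamma_s]$ and $\mathbb{E}[\gamma_s^2]$ in \eqref{EX}, \eqref{EX2}.

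The main obstacle is purely computational: $\mathbb{E}[X^4]$ requires expanding a fourth power of a sum of $M+1$ independent terms and carefully bookkeeping the mixed moments of $W$ up to order four together with the Gamma–function numerical constants; no new probabilistic idea is needed beyond the independence split and the standard Rayleigh/product-Rayleigh moment formulas. A secondary, more conceptual point is that the Gamma \emph{form} itself is not exact — it is the accepted moment-matching surrogate — so I would state it as an approximation, attribute it to \cite{Van_Chien, Charishma}, and defer its numerical validation to the simulation section.
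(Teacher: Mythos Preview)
Your proposal is correct and follows essentially the same route as the paper: the paper's Appendix~A writes $\gamma_s=\overline{\gamma}_s((\gamma_1+\gamma_2)^2+\gamma_3)$ with $\gamma_1=|h_{UB}|$, $\gamma_2=\sum_i(1-a_i)|h_{RB,i}||h_{UR,i}|$, $\gamma_3=\mathbf{h}_{UR}^H\mathbf{A}^H\mathbf{A}\mathbf{h}_{UR}$ (your $X=\gamma_1+\gamma_2$, $Y=\gamma_3$, $W=\gamma_2$), computes the Rayleigh and product-Rayleigh moments of $\gamma_1,\gamma_2$ up to fourth order and the Gamma moments of $\gamma_3$ up to second order, then assembles $\mathbb{E}[\gamma_s]$ and $\mathbb{E}[\gamma_s^2]$ and matches to the Gamma family. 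Your only addition is the explicit remark that $X$ and $Y$ are independent because the connected- and reflection-mode index sets are disjoint, which the paper uses tacitly; otherwise the arguments coincide.
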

    \begin{proof}
        See Appendix A.
    \end{proof}

    %   E^{signal}_appendix
    \newcounter{Expectation}
        \begin{figure*}[!htb]
        \small
        % \hrulefill
        % Ensure that we have normal size text
        % \normalsize
        % Store the current equation number.
        \setcounter{Expectation}{\value{equation}}
        \setcounter{equation}{6}
        \begin{align} 
         \mathbb{E}[\gamma_s] \label{EX}  
            &=
            \overline{\gamma_s} \bigg[
            N^2 
                    \big(
                    \frac{\pi^2}{16}\alpha^2\beta^2
                    \big)
                    +
                    \alpha^2\beta^2 
                    \big( 
                    N(1 - \frac{\pi^2}{8}a - \frac{\pi^2}{16}) + \frac{\pi^2}{16}a(a+1)-a
                    \big)
                    +
                    \alpha \beta \gamma
                    \big(
                    \frac{\pi}{4}\sqrt{\pi}(N-a)
                    \big)
                    +
                    a\alpha^2 + \gamma^2 \bigg], 
            \\
            %%%
        \mathbb{E}[\gamma_s^2] \label{EX2}
            &=
            \overline{\gamma}_s^2 [
            2\gamma^{4} 
            +
            6 (N-a)(1+ \frac{\pi^2}{16}(N-a-1)) \gamma^2 \alpha^{2} \beta^{2}
            +
            \frac{3\pi}{4} \sqrt{\pi}  (N-a) \gamma^{3} \alpha \beta 
            + 
            C_4 \alpha^{4} \beta^{4} 
            +
            2\sqrt{\pi}  C_3 \gamma^3 \beta^3 \gamma 
            + 
            2 a \gamma^2 \alpha^2 \notag \\ 
            &+ 
            2 (N-a)(1+ \frac{\pi^2}{16}(N-a-1))a \alpha^{4} \beta^{2} 
            +
            \frac{\pi}{2}\sqrt{\pi} (N-a) a \alpha^3 \gamma \beta
            +
            a(a+1) \alpha^4
            ]. 
        \end{align}
        % Restore the current equation number.
        \setcounter{equation}{\value{Expectation}}
        % The IEEE uses it as a separator
        \hrulefill
        \end{figure*}
        \addtocounter{equation}{2}

    \subsubsection{Ergodic Achievable Rate}
    Since $\gamma_s$ is approximated as a Gamma RV, a tractable ergodic achievable rate can be obtained as
    
        % \begin{small}
        \begin{align}
            R
            &= \mathbb{E}[(\log_2(1+\gamma_s))],
            \\
            &=
            \frac{1}{\Gamma(k){\rm{ln}}(2)} 
            H_{3, 2}^{1, 3}
            \bigg[
            p \bigg |  
                    \begin{matrix}
                        (1,1) ,& (1,1) ,& (-k+1,1) \\
                        (1,1) ,& (0,1)
                    \end{matrix}
            \bigg] , \label{EAC Gamma}
        \end{align}
        % \end{small}
        
        \noindent where 

        % \begin{small}
        \begin{align}
            H(z) = H_{p, q}^{m, n}
            \bigg[
                p \bigg |  
                    \begin{matrix}
                        (a_i,A_i)_{1,p} \\
                        (b_i,B_i)_{1,q}
                    \end{matrix}
            \bigg],
        \end{align}
        % \end{small}
        
        \noindent is the Fox's H-function (FHF) defined via the Mellin-Barnes type integral \cite[(2)]{Fox1}.
        \begin{proof}
        Defining $\frac{1}{p}\gamma_s = y$, we have
        $R 
            =
            \int_{0}^{\infty} \log_2(1+\gamma_s) f_{\gamma_s}(\gamma_s) d\gamma_s
            =
            \frac{1}{\Gamma(k)} \int_{0}^{\infty} y^{k-2} \log_2(1+py) \ y  \ {\rm{exp}}(-y) dy 
            \overset{a}{=}
            \frac{1}{\Gamma(k) {\rm{ln}}(2)}
            % \times 
            \int_{0}^{\infty} y^{k-2} 
            H_{2, 2}^{1, 2}
            \bigg[
                py \bigg |  
                    \begin{matrix}
                        (1,1) & (1,1) \\
                        (1,1) & (0,1)
                    \end{matrix}
            \bigg]
            \
            H_{0, 1}^{1, 0}
            \bigg[
                y \bigg |  
                    \begin{matrix}
                        -- \\
                        (1,1)
                    \end{matrix}
            \bigg]
            dy$, where $\overset{a}{=}$ is derived by expressing ${\rm{log}_2}(1+py)$ and $y {\rm{exp}}(-y)$ as FHF. After applying the integral involving the product of two FHFs from \cite[(2.8.4)]{Fox2}, we obtain \eqref{EAC Gamma}.
        \end{proof}
        
        % \begin{align}
        %     R 
        %     &=
        %     \int_{0}^{\infty} \log_2(1+\gamma_s) f_{\gamma_s}(\gamma_s) d\gamma_s
        %     \notag
        %     \\
        %     &=
        %     \frac{1}{\Gamma(k)} \int_{0}^{\infty} y^{k-2} \log_2(1+py) \ y  \ {\rm{exp}}(-y) dy 
        %     \notag
        %     \\
        %     &\overset{a}{=}
        %     \frac{1}{\Gamma(k) {\rm{ln}}(2)}
        %     % \times 
        %     \int_{0}^{\infty} y^{k-2} 
        %     H_{2, 2}^{1, 2}
        %     \bigg[
        %         py \bigg |  
        %             \begin{matrix}
        %                 (1,1) & (1,1) \\
        %                 (1,1) & (0,1)
        %             \end{matrix}
        %     \bigg]
        %     \
        %     H_{0, 1}^{1, 0}
        %     \bigg[
        %         y \bigg |  
        %             \begin{matrix}
        %                 -- \\
        %                 (1,1)
        %             \end{matrix}
        %     \bigg]
        %     dy. 
        % \end{align}

    % The Gamma moments matching is shown to be accurate by well capturing the probability mass of $\gamma_s$, and further leads to accurate approximation in (\ref{EAC Gamma}) which is numerically validated in Section VI. 
    
    \subsubsection{Upper Bound}
    
    By leveraging Jensen's inequality and $E[\gamma_s]$ in \eqref{EX}, an upper bound of the ergodic achievable rate can be derived as expressed in \eqref{EAC upper bound} at the top of the next page.
            % \begin{align}
            %     R^{U} =  \notag \label{EAC upper bound}
            %     \log_2 \bigg[ & 1
            %     +
            %     \overline{\gamma}_s
            %     \bigg(
            %     N^2 
            %     \big(
            %     \frac{\pi^2}{16}\alpha^2\beta^2
            %     \big)
            %     +
            %     \alpha^2\beta^2 
            %     \big( 
            %     N(1 - \frac{\pi^2}{8}a^2 - \frac{\pi^2}{16}) + \frac{\pi^2}{16}(a^2 + a - 1)
            %     \big) \\
            %     &+
            %     \gamma \alpha \beta
            %     \big(
            %     \frac{\pi}{4}\sqrt{\pi}(N-a)
            %     \big)
            %     +
            %     a\alpha^2 + \gamma^2
            %     \bigg)
            %      \bigg].
            % \end{align}
    Such upper bound is also shown to be accurate with large $N$ which will be validated numerically in Section VI.

        %   E^{signal}_appendix
    \newcounter{EAC_upperbound}
        \begin{figure*}[!htb]
        \small
        % Ensure that we have normal size text
        % \normalsize
        % Store the current equation number.
        \setcounter{EAC_upperbound}{\value{equation}}
        \setcounter{equation}{11}
        \begin{align} 
        R^{U} &=  \label{EAC upper bound}
                \log_2 \bigg[  1
                +
                \overline{\gamma}_s
                \bigg(
                N^2 
                \big(
                \frac{\pi^2}{16}\alpha^2\beta^2
                \big)
                +
                \alpha^2\beta^2 
                \big( 
                N(1 - \frac{\pi^2}{8}a^2 - \frac{\pi^2}{16}) + \frac{\pi^2}{16}(a^2 + a - 1)
                \big)
                +
                \gamma \alpha \beta
                \big(
                \frac{\pi}{4}\sqrt{\pi}(N-a)
                \big)
                +
                a\alpha^2 + \gamma^2
                \bigg)
                 \bigg].
        \end{align}
        % Restore the current equation number.
        \setcounter{equation}{\value{EAC_upperbound}}
        % The IEEE uses it as a separator
        \hrulefill
        \end{figure*}
        \addtocounter{equation}{1}

    \begin{remark} \label{Remark 1}
        It can be seen that, when N is large, the upper bound $R^{U}$ in \eqref{EAC upper bound} is scaling up with $\mathcal{O}(\log_2(N^2))$ which coincides with the power scaling law for conventional RIS \cite{Qingqing_Wu}. 
        This indicates that the RDARS-aided system (using MRC) can effectively acquire the benefits from the passive elements as the corresponding RIS-aided system in the asymptotic regime (with optimal phase shifts), i.e., $N \rightarrow \infty$. Note that the SNR of the recently proposed active RIS-aided system in \cite{Zijian_Zhang} only scales linearly with $\mathcal{O}(\log_2(N))$ in the asymptotic regime due to the additional noises introduced by the use of active components.
    \end{remark}

    % Moreover, as shown in the Fig. 2 and the simulation results provided in Section \uppercase\expandafter\romannumeral5\uppercase\expandafter{\romannumeral5}, the RDARS-aided system will perform better than the RIS-aided system for the practical number of NN by compensating the ``multiplicative fading'' effect with distribution gain\textit{distribution gain} provided by the elements performing \textit{connected mode}.  These emphasize the practical application value of the proposed RDARS architecture.

    \subsection{Comparison between RDARS, RIS, and DAS}

    As indicated from {\textbf{Remark\,\ref{Remark 1}}}, the RDARS inherits the advantages of conventional RIS in terms of \textit{reflection gain} when $N$ is large. Apart from the \textit{reflection gain}, the \textit{distribution gain} will dominate in the RDARS for a limited $N$. This can be utilized to efficiently alleviate the ``multiplicative fading'' \cite{Zijian_Zhang} encountered by the conventional RIS and enhance the performance of conventional DAS via low-cost components simultaneously. The following corollary specifies the condition that the RDARS-aided system outperforms its DAS and RIS-aided system counterparts. 

    \begin{corollary}
        \begin{itemize}
            \hspace*{\fill}
            \item [(1)] For any given $a$ ($0<a <N$), the RDARS-aided system always outperforms its DAS counterpart.
            \item [(2)] For any given $a$ ($0<a<N$), the RDARS-aided system will outperform its RIS-aided system counterpart if the total number of elements satisfies
                % \begin{small}
                \begin{align} \label{Lemma1_2}
                    0< N < \frac{8}{\pi^2}(\frac{1}{\beta^2} -1) + \frac{a+1}{2} - \frac{2\sqrt{\pi}}{\pi} \frac{\gamma}{\alpha \beta}. 
                \end{align}
                % \end{small}
        \end{itemize}        
    \end{corollary}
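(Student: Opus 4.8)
The plan is to cast both claims as comparisons of the mean received SNR $\mathbb{E}[\gamma_s]$: since the Jensen bound $R^{U}=\log_2\!\big(1+\mathbb{E}[\gamma_s]\big)$ in \eqref{EAC upper bound} is strictly increasing in $\mathbb{E}[\gamma_s]$, it suffices to compare the quantity in \eqref{EX} across the three architectures. First I pin down the ``counterparts''. To keep the comparison fair, the DAS counterpart uses the same $a$ RF-chain-equipped antennas (at the RDARS site) but no passive elements, so its post-MRC SNR is $\overline{\gamma}_s\big(|h_{UB}|^{2}+\mathbf{h}_{UR}^{H}\mathbf{A}^{H}\mathbf{A}\mathbf{h}_{UR}\big)$ with mean $\overline{\gamma}_s(\gamma^{2}+a\alpha^{2})$, while the RIS counterpart keeps all $N$ elements passive, i.e.\ it is the $a=0$ instance of \eqref{SNR_SISO}, with mean obtained by setting $a=0$ in \eqref{EX}.

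\emph{Part (1).} With the optimal phase shifts of Theorem~1, the \emph{reflection gain} term in \eqref{SNR_SISO} equals $\big(|h_{UB}|+\sum_{i}(1-a_i)|h_{RB,i}||h_{UR,i}|\big)^{2}\ge |h_{UB}|^{2}$, being the square of a sum of nonnegative quantities, while the \emph{distribution gain} term coincides with that of the DAS counterpart for any common channel realization. Hence $\gamma_s^{\mathrm{RDARS}}\ge\gamma_s^{\mathrm{DAS}}$ pointwise, so $R^{\mathrm{RDARS}}\ge R^{\mathrm{DAS}}$ for the true ergodic rates, strictly once $0<a<N$ since at least one reflecting term is a.s.\ nonzero. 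Equivalently, subtracting $\overline{\gamma}_s(\gamma^{2}+a\alpha^{2})$ from \eqref{EX} and using the identity $\tfrac{\pi^{2}}{16}N^{2}+\big(1-\tfrac{\pi^{2}}{8}a-\tfrac{\pi^{2}}{16}\big)N+\tfrac{\pi^{2}}{16}a(a+1)-a=(N-a)\big[\tfrac{\pi^{2}}{16}(N-a-1)+1\big]$ collapses the gap to $\overline{\gamma}_s\,\alpha\beta(N-a)\big(\alpha\beta\big[\tfrac{\pi^{2}}{16}(N-a-1)+1\big]+\tfrac{\pi^{3/2}}{4}\gamma\big)$, whose factors are all strictly positive for $0<a<N$.

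\emph{Part (2).} Applying the same factoring identity to both $\mathbb{E}[\gamma_s]$ and $\mathbb{E}[\gamma_s^{\mathrm{RIS}}]$, the $\alpha^{2}\beta^{2}$-, $\alpha\beta\gamma$- and $\alpha^{2}$-contributions telescope to $\mathbb{E}[\gamma_s]-\mathbb{E}[\gamma_s^{\mathrm{RIS}}]=\overline{\gamma}_s\,a\big(\alpha^{2}-\alpha^{2}\beta^{2}\big[\tfrac{\pi^{2}}{16}(2N-a-1)+1\big]-\tfrac{\pi^{3/2}}{4}\alpha\beta\gamma\big)$. Dividing by $\overline{\gamma}_s\,a\alpha^{2}>0$, the RDARS beats the RIS counterpart iff $\beta^{2}+\beta^{2}\tfrac{\pi^{2}}{16}(2N-a-1)+\tfrac{\pi^{3/2}}{4}\tfrac{\beta\gamma}{\alpha}<1$; solving this linear inequality for $N$ (and writing $2/\sqrt{\pi}=2\sqrt{\pi}/\pi$) gives exactly the range in \eqref{Lemma1_2}. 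Monotonicity of $R^{U}$ in $\mathbb{E}[\gamma_s]$ then transfers the conclusion to the rate (upper bound), and since this last step is an equivalence the stated condition is in fact tight for the mean-SNR comparison.

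The real work is the polynomial identity $\tfrac{\pi^{2}}{16}N^{2}+\big(1-\tfrac{\pi^{2}}{8}a-\tfrac{\pi^{2}}{16}\big)N+\tfrac{\pi^{2}}{16}a(a+1)-a=(N-a)\big[\tfrac{\pi^{2}}{16}(N-a-1)+1\big]$, i.e.\ recognizing the quadratic-in-$N$ part of \eqref{EX} as $(N-a)\big[\tfrac{\pi^{2}}{16}(N-a-1)+1\big]$; once this is seen, both differences become sign-transparent and only substitution and rearrangement remain. The only other point to treat carefully is stating precisely what the DAS and RIS counterparts are, so that ``outperforms'' is a fair comparison (equal active-antenna budget $a$ for DAS, equal element count $N$ for RIS).
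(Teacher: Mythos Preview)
Your proposal is correct and follows essentially the same approach as the paper: both proofs reduce the comparison to the mean received SNRs, using the factored forms $\mathbb{E}[\gamma_s^{\mathrm{RDARS}}]=\overline{\gamma}_s\big[\gamma^2+(N-a)\tfrac{\pi^{3/2}}{4}\gamma\alpha\beta+(N-a)\big(1+\tfrac{\pi^2}{16}(N-a-1)\big)\alpha^2\beta^2+a\alpha^2\big]$ and its $a=0$ specialization for RIS, then check positivity of the differences. Your write-up is in fact a bit more complete than the paper's Appendix~B: you make the factoring identity linking \eqref{EX} to this form explicit, and your pointwise argument for Part~(1) via \eqref{SNR_SISO} yields the stronger conclusion $\gamma_s^{\mathrm{RDARS}}\ge\gamma_s^{\mathrm{DAS}}$ realization-by-realization (hence for the true ergodic rate, not merely the upper bound $R^U$), whereas the paper only argues at the level of mean SNR.
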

    \begin{proof}
        See Appendix B.
    \end{proof}

    % From the above \textbf{Corollary 3} (1), it can be observed that the condition in (?????????????????????\ref{Lemma1_1}) can be easily met as long as αβ≪γ\alpha\beta \ll \gamma, i.e., the product of UE-RDARS and RDARS-BS large-scale fading coefficients is much smaller than the UE-BS link path-loss coefficient (which is usually satisfied in the practical scenario). This indicates that the RDARS will outperform DAS in most scenarios.

    It is clear that \textbf{Corollary 3}. (1) coincides with physical interpretation by direct inspection of \eqref{SNR_SISO}, i.e., $N-a$ elements acting \textit{reflection mode} on RDARS coherently align the phases of reflection link and direct UE-BS link, and thus provide higher \textit{reflection gain} with term $\sum_{i=1}^{N}(1-a_i)|h_{RB,i}||h_{UR,i}|$ in \eqref{SNR_SISO}. For (\ref{Lemma1_2}), considering the scenario without UE-BS link, we will have $0< N < \frac{8}{\pi^2}(\frac{1}{\beta^2} -1) + \frac{a+1}{2}$. which means the RDARS will outperform the RIS counterpart once $N$ is less than the upper bound, i.e., $\frac{8}{\pi^2}(\frac{1}{\beta^2} -1) + \frac{a+1}{2}$. Note that this upper bound is generally a tremendous value since $\beta$ is relatively small.
    
    To give a clear illustration, we provide a detailed example in the following. Let the transmit SNR be $\overline{\gamma}_s = 90$ dB, $\gamma^2 = \beta^2 = \alpha^2 = -70$ dB\cite{Zijian_Zhang}, and $a = 1$. It can be shown that the upper bound for $N$ in (\ref{Lemma1_2}) is about $8.1 \times 10^6$. The average received SNRs $\mathbb{E}[\gamma_s]$ of the RDARS, the RIS-aided system, and the DAS are presented in Fig. 3 where the results without RDARS are also provided as ``W.O. RDARS".
    
    \begin{figure} [htb] \label{Fig2}
    	\centering
            % \vspace{-0.8cm}
            % \setlength{\belowcaptionskip}{-0.5cm}
            \includegraphics[width=0.5\textwidth]{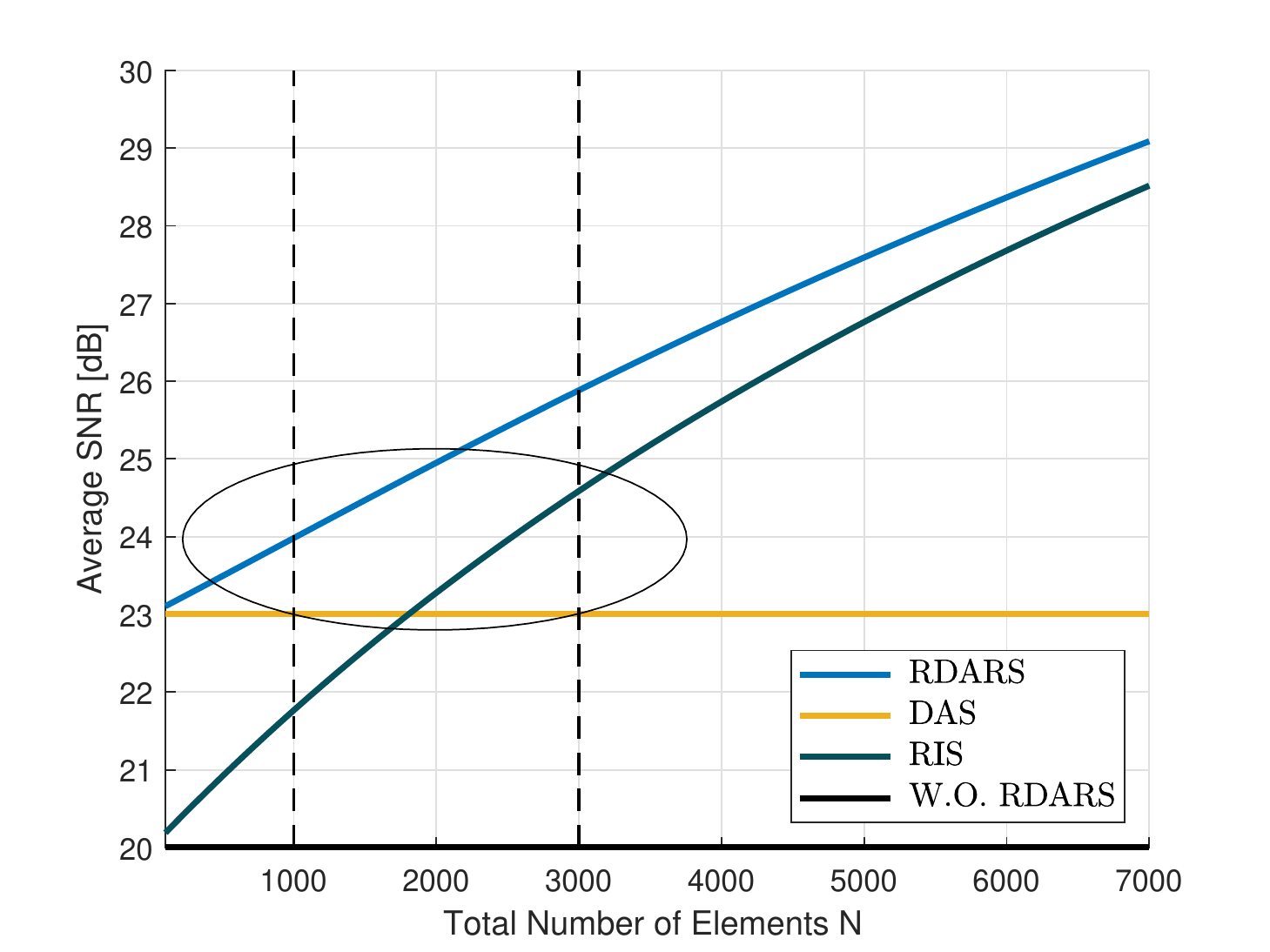}
    	\caption{ An example for comparison between different systems with only one element acting \textit{connected mode} on RDARS, i.e., $a = 1$.}
    \end{figure}

    \begin{remark} \label{Remark 2}
        The merits of RDARS have been clearly demonstrated via the comparison in Fig. 3. For a small $N$, RDARS mainly leverages the \textit{distribution gain} and behaves as DAS and thus combats the ``multiplicative fading'' effect of conventional RIS. For large $N$, RDARS greatly outperforms DAS by exploiting the \textit{reflection gain} inherited from RIS. Most importantly, for practical $N$, e.g., $N \in [1000, 2000]$ in this system setup, RDARS can strike a balance between \textit{reflection gain} and \textit{distribution gain} and, therefore, achieves the highest average received SNR. More simulation results will be presented in Section VI to support this insight. 
    \end{remark}

    \section{Extension to Multi-antenna BS scenario}

    In this section, we extend the performance analysis of the proposed RDARS-aided system into multi-antenna BS scenarios.

    % To account for more practical settings, different from the previous section, both the multi-antenna BS scenario and the LoS channel are taken into account in this section.
    % Although insightful observation can be found when assuming instantaneous CSI available and optimal RDARS configuration in (???????????????????????????????????????????????????????????????????????????????????????????????????????????????????????????????????????\ref{SNR_SISO}) for analysis in the previous section when considering single-antenna BS scenario similar to \cite{Qingqing_Wu, Zijian_Zhang}, such assumption may not be practical in deployment due to large overhead introduced by frequent acquisition of CSI as well as controlling of surface, especially with large NN. As such, we focus on unveiling the potential of RDARS without stringent assumptions of phase shift design for RDARS configuration in this section.
    
    % \footnote{The statistical CSI refers to path-loss parameter, Rician factor as well as the AoAs and AoDs in this paper, which can be easily obtained. Such statistical characteristics could keep invariant within the considered time period and thus can be utilized for system design to avoid high overhead from frequent configuration. To keep derivation tractable, perfect statistical CSI is assumed similar to \cite{Kangda_Zhi2}, while an extension incorporating imperfect CSI will be our future work.} 

    \subsection{System Model}
    Consider an uplink RDARS-aided system where a single-antenna UE communicates with a $L$-antenna BS.
    Let $x \in \mathbb{C}$ be the signal transmitted from the UE to BS with total power $\mathbb{E}[|x|^2] \leq P$, the received signal $\mathbf{y} \in \mathbb{C}^{L \times 1}$ from BS and the received signal $\mathbf{u} \in \mathbb{C}^{a \times 1}$ from the $a$ ($a \leq N$) elements acting with \textit{connected mode} at the RDARS can be expressed as
    
    % \begin{small}
    \begin{align} \label{system_model_SIMO}
                \underbrace{\begin{bmatrix}
                    \mathbf{y}^{T} \
                    \mathbf{u}^{T}
                \end{bmatrix}^{T}}_{\in \mathbb{C}^{(L+a) \times 1}} 
                &=
                    \underbrace{\begin{bmatrix}
                        \overbrace{ 
                        \mathbf{H} \mathbf{B} \mathbf{h} + \mathbf{d} }^ {\in \mathbb{C}^{L \times 1}}  \\
                        \underbrace{ \mathbf{A} \mathbf{h} }_{\in \mathbb{C}^{a \times 1}} 
                    \end{bmatrix}}_{\triangleq \widetilde{\mathbf{h}} \in \mathbb{C}^{(L+a) \times 1}} \underbrace{x}_{\in \mathbb{C}} 
                    +
                    \underbrace{\begin{bmatrix}
                        \mathbf{n}_{B}\\
                        \mathbf{n}_{R}
                    \end{bmatrix}}_{\triangleq \widetilde{\mathbf{n}} \in \mathbb{C}^{(L+a) \times 1}} 
                    = 
                    \widetilde{\mathbf{h}} x + \widetilde{\mathbf{n}}, 
    \end{align}
    % \end{small}
    
    \noindent where $\mathbf{H} \in \mathbb{C}^{L\times N}$, $\mathbf{h} \in \mathbb{C}^{N \times 1}$, and $\mathbf{d} \in \mathbb{C}^{N \times 1}$ denote the channel matrix (or vector) for RDARS-BS link, UE-RDARS link, and UE-BS link, respectively. $\mathbf{n}_{B} \sim \mathcal{CN}(0, \sigma_{B}^2\mathbf{I})$ and $\mathbf{n}_{R} \sim \mathcal{CN}(0, \sigma_{R}^2\mathbf{I})$ are AWGN at BS and the elements performing \textit{connected mode} on RDARS, respectively.

    \subsection{Channel Model}
    To incorporate the LoS propagation between UE-RDARS and RDARS-BS, we adopt the Rician fading channel model for UE-RDARS and RDARS-BS links \cite{Kangda_Zhi2}. As such, the RDARS-BS and UE-RDARS channel models are expressed as 
    % $\mathbf{H} 
    % =
    %     \sqrt{\frac{\beta}{\delta+1}} (\sqrt{\delta} \overline{\mathbf{H}} + \widetilde{\mathbf{H}})$,
    % $\mathbf{h} 
    % = 
    %     \sqrt{\frac{\alpha}{\epsilon+1}} (\sqrt{\epsilon} \overline{\mathbf{h}} + \widetilde{\mathbf{h}})$,
    % $\mathbf{d} 
    % =
    %    \sqrt{\gamma} \widetilde{\mathbf{d}}$,    
    \begin{align}
    \mathbf{H} 
    &=
        \sqrt{\frac{\beta}{\delta+1}} (\sqrt{\delta} \overline{\mathbf{H}} + \widetilde{\mathbf{H}}),
    \\
    \mathbf{h} 
    &= 
        \sqrt{\frac{\alpha}{\epsilon+1}} (\sqrt{\epsilon} \overline{\mathbf{h}} + \widetilde{\mathbf{h}}),
    \end{align}
    where $\delta$ and $\epsilon$ are the Rician factors of the RDARS-BS and UE-RDARS link, respectively. $\overline{\mathbf{H}} \in \mathbb{C}^{L \times N}$ and $\overline{\mathbf{h}} \in \mathbb{C}^{N \times 1}$ denote the LoS components of the RDARS-BS and UE-RDARS links, respectively. In this paper, we adopt uniform planar array (UPA) for both BS and RDARS with the size as $L = L_x \times L_y$ and $N = N_x \times N_y$. Accordingly, we have
    $\overline{\mathbf{H}} = \mathbf{a}_L(\phi_{RB}^{a}, \phi_{RB}^{e}) 
            \mathbf{a}_N^{H}(\varphi_{RB}^{a}, \varphi_{RB}^{e})$,
    $\overline{\mathbf{h}} = \mathbf{a}_N(\varphi_{UR}^{a}, \varphi_{UR}^{e})$,
    % \begin{align}
    %     \overline{\mathbf{H}} &= \mathbf{a}_L(\phi_{RB}^{a}, \phi_{RB}^{e}) 
    %         \mathbf{a}_N^{H}(\varphi_{RB}^{a}, \varphi_{RB}^{e}),
    %     \\
    %     \overline{\mathbf{h}} &= \mathbf{a}_N(\varphi_{UR}^{a}, \varphi_{UR}^{e}), 
    % \end{align}
    where $\varphi_{UR}^{a}$ ($\varphi_{UR}^{e}$) is the azimuth (elevation) angle of the arrival (AoA) of the incident signal at the RDARS from UE, $\varphi_{RB}^{a}$ ($\varphi_{RB}^{e}$) is the azimuth (elevation) angle of the departure (AoD) reflected by the RDARS towards BS, and $\phi_{RB}^{a}$ ($\phi_{RB}^{e}$) is the azimuth (elevation) AoA of the signal received at BS from the RDARS, respectively. Besides, $\mathbf{a}_{X}(\psi^a, \psi^e) \in \mathbb{C}^{X \times 1}$ with $ X \in \{ L, N \}$ and $ \psi \in \{ \phi_{RB}, \varphi_{RB}, \varphi_{UR}  \}$ denote the array response vector, whose x-entry is

    \begin{small}
    \begin{align}
        [\mathbf{a}_{X}(\psi^a, \psi^e)](x) 
        &= {\rm{exp}} \bigg\{ 
            j2\pi\frac{d_s}{\lambda}
        \bigg(
                \lfloor (x-1)/X_y \rfloor \sin \psi^a \sin \psi^e
        \notag
        \\
        &+
                ((x-1){\rm{mod}}X_y) \cos \psi^e
            \bigg)
        \bigg\},
    \end{align}
    \end{small}
    
    \noindent where $d_s$ and $\lambda$ denote the element spacing of the antenna array, and the wavelength, respectively.

    \subsection{Ergodic Achievable Rate}

    Recalling from the system model in (\ref{system_model_SIMO}), the signal after applying MRC can be obtained as  
   \begin{align}
            r &=
                \widetilde{\mathbf{h}}^{H} \widetilde{\mathbf{h}} x + \widetilde{\mathbf{h}}^{H} \widetilde{\mathbf{n}} .
    \end{align}
    Thus, the ergodic achievable rate can be given by
    \begin{align} \label{EAC_SIMO}
        R = \mathbb{E} \bigg\{ \log_2 \bigg( 1 + \frac{P || \widetilde{\mathbf{h}}||^4}{ \widetilde{\mathbf{h}}^{H}\mathbf{R}\widetilde{\mathbf{h}} } \bigg)  \bigg\},
    \end{align}
    where $\mathbf{R} = {\rm{blk}}(\sigma_B^2 \mathbf{I}_L, \sigma_R^2 \mathbf{I}_a) \in \mathbb{C}^{(L+a) \times (L+a)}$ is the noise covariance matrix. Note that it is challenging to derive an exact closed-form expression for (\ref{EAC_SIMO}) due to the expectation outside the logarithmic operation. As such, we resort to \cite[{\rm{Lemma 1}}]{Qi_Zhang} to obtain an closed-form approximation of the ergodic achievable rate as 
    \begin{align} \label{EAC_SIMO_appro}
        R \approx \bigg\{ \log_2 \bigg( 1 + \frac{P \mathbb{E}[|| \widetilde{\mathbf{h}}||^4]}{ \mathbb{E}[\widetilde{\mathbf{h}}^{H}\mathbf{R}\widetilde{\mathbf{h}} ] } \bigg)  \bigg\}.
    \end{align}
    It is clear that (\ref{EAC_SIMO_appro}) only requires the calculation of the moments of $\widetilde{\mathbf{h}}$. However, the derivations of these moments are challenging due to the involvement of the Rician fading model for both UE-RDARS and RDARS-BS links. 
    % As a result, even for conventional RIS-aided system, the derivatives are challenging \cite{Kangda_Zhi2}, let alone considering a more general RDARS-aided system.\footnote{The system model of RDARS can be reduced to RIS-aided system following a similar approach as mentioned in the previous section III.} 
    Nevertheless, by tackling the above challenges, we provide the closed-form approximate expression of the ergodic achievable as follows.

    \begin{theorem}
        The ergodic achievable rate of the RDARS-aided system can be approximated as
        % \begin{small}
        \begin{align} \label{EAC_SIMO_approx}
            R \approx \log_2 \bigg( 
                1   +   \frac{PE^{signal}(\mathbf{A},\mathbf{\Theta})}{E^{noise}(\mathbf{A},\mathbf{\Theta})}
            \bigg),
        \end{align}
        % \end{small}
        
        \noindent where the definitions of $E_{k}^{signal}$ and $E_{k}^{noise}$ are provided in (\ref{E^{signal}}) and (\ref{E^{noise}}), at the top of the next page, respectively. Also, $c \triangleq \frac{\beta\alpha}{(\delta+1)(\epsilon+1)}$, $d \triangleq \frac{\alpha}{\epsilon + 1}$, and finally, $f(\mathbf{A},\mathbf{\Theta}) \in \mathbb{C}$ with 
        \begin{align}
            f(\mathbf{A},\mathbf{\Theta}) \triangleq \sum_{n=1}^{N} (1-a_n) e^{j(\zeta_n + {\rm{arg}}(\boldsymbol{\theta}_n))},
        \end{align}
        where $\zeta_n = 2\pi\frac{d}{\lambda}\big( \lfloor (n-1)/\sqrt{N} \rfloor \big( {\rm{sin}}(\varphi_{UR,k}^e){\rm{sin}}(\varphi_{UR,k}^a) - {\rm{sin}}(\varphi_{RB}^e){\rm{sin}}(\varphi_{RB}^a) \big) + \big( (n-1){\rm{mod}}\sqrt{N} \big) \big({\rm{cos}}(\varphi_{UR,k}^e) - {\rm{cos}}(\varphi_{RB}^e) \big) 
        \big)$.
    \end{theorem}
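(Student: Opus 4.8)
Here is how I would prove Theorem~2.

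The starting point is the approximation \eqref{EAC_SIMO_appro}, which follows from \cite[Lemma~1]{Qi_Zhang} and trades the intractable expectation of the logarithm for a ratio of moments. Hence the whole task reduces to evaluating $\mathbb{E}[\|\widetilde{\mathbf{h}}\|^4]$ and $\mathbb{E}[\widetilde{\mathbf{h}}^{H}\mathbf{R}\widetilde{\mathbf{h}}]$ in closed form, and identifying them with $E^{signal}(\mathbf{A},\mathbf{\Theta})$ and $E^{noise}(\mathbf{A},\mathbf{\Theta})$, respectively. First I would split the augmented channel into its two physical branches, writing $\mathbf{g}\triangleq\mathbf{H}\mathbf{B}\mathbf{h}+\mathbf{d}$ for the reflection/direct branch and $\mathbf{A}\mathbf{h}$ for the connected branch, so that $\|\widetilde{\mathbf{h}}\|^2=\|\mathbf{g}\|^2+\|\mathbf{A}\mathbf{h}\|^2$ and, because $\mathbf{R}={\rm{blk}}(\sigma_B^2\mathbf{I}_L,\sigma_R^2\mathbf{I}_a)$ is block-diagonal, $\widetilde{\mathbf{h}}^{H}\mathbf{R}\widetilde{\mathbf{h}}=\sigma_B^2\|\mathbf{g}\|^2+\sigma_R^2\|\mathbf{A}\mathbf{h}\|^2$. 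Thus the denominator only needs the second moments $\mathbb{E}[\|\mathbf{g}\|^2]$ and $\mathbb{E}[\|\mathbf{A}\mathbf{h}\|^2]$.

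The structural fact that makes the computation manageable is that $\mathbf{B}=(\mathbf{I}-\mathbf{A}^{H}\mathbf{A})\mathbf{\Theta}$ is supported only on the reflection-mode indices whereas $\mathbf{A}$ selects only the connected-mode indices. Since the entries of the Rician vector $\mathbf{h}$ are independent across $n$ (deterministic LoS plus i.i.d.\ complex-Gaussian NLoS), and $\mathbf{H}$ and $\mathbf{d}$ are independent of $\mathbf{h}$, the random variables $\|\mathbf{g}\|^2$ and $\|\mathbf{A}\mathbf{h}\|^2$ are \emph{exactly independent}. Consequently $\mathbb{E}[\|\widetilde{\mathbf{h}}\|^4]=\mathbb{E}[\|\mathbf{g}\|^4]+2\,\mathbb{E}[\|\mathbf{g}\|^2]\,\mathbb{E}[\|\mathbf{A}\mathbf{h}\|^2]+\mathbb{E}[\|\mathbf{A}\mathbf{h}\|^4]$, so no genuinely new cross moments appear beyond those already listed.

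The connected-branch moments are immediate. Each Rician entry has normalized power $\mathbb{E}[|h_n|^2]=\alpha$, giving $\mathbb{E}[\|\mathbf{A}\mathbf{h}\|^2]=a\alpha$; using the non-central complex-Gaussian fourth moment $\mathbb{E}[|h_n|^4]=d^{2}(\epsilon^2+4\epsilon+2)$ (with $d=\alpha/(\epsilon+1)$) together with independence across the $a$ selected entries gives $\mathbb{E}[\|\mathbf{A}\mathbf{h}\|^4]=(a\alpha)^2+a\,d^{2}(2\epsilon+1)$. For $\mathbb{E}[\|\mathbf{g}\|^2]$ I would expand $\mathbf{g}$, kill the zero-mean cross term with $\mathbf{d}$, obtain $\mathbb{E}[\|\mathbf{g}\|^2]=\mathbb{E}[\|\mathbf{H}\mathbf{B}\mathbf{h}\|^2]+\mathbb{E}[\|\mathbf{d}\|^2]$, then substitute the Rician decompositions of $\mathbf{H}$ and $\mathbf{h}$, discard the odd-order Gaussian monomials, and use the rank-one LoS structure $\overline{\mathbf{H}}=\mathbf{a}_L(\cdot)\mathbf{a}_N^{H}(\cdot)$ with $\|\mathbf{a}_L\|^2=L$ to collapse the antenna-index sums into $L$, $N-a$, and $|f(\mathbf{A},\mathbf{\Theta})|^2$; the last term arises precisely because $\mathbf{a}_N^{H}(\varphi_{RB})\,\mathbf{B}\,\overline{\mathbf{h}}=f(\mathbf{A},\mathbf{\Theta})$. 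Combining these pieces already yields $E^{noise}(\mathbf{A},\mathbf{\Theta})=\sigma_B^2\mathbb{E}[\|\mathbf{g}\|^2]+\sigma_R^2 a\alpha$ as in (\ref{E^{noise}}).

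The main obstacle is the fourth moment $\mathbb{E}[\|\mathbf{g}\|^4]=\mathbb{E}[\|\mathbf{H}\mathbf{B}\mathbf{h}+\mathbf{d}\|^4]$. After expanding $\|\mathbf{g}\|^2$ and then its square, and inserting the four-way LoS/NLoS split of $\mathbf{H}$ and of $\mathbf{h}$, one is left with a long list of monomials of order up to eight in the i.i.d.\ complex-Gaussian entries. I would evaluate each via the complex Isserlis (Wick) theorem, which annihilates every term with unmatched conjugates and reduces the survivors to sums of products of second moments; the rank-one form of $\overline{\mathbf{H}}$ and the Kronecker structure of the UPA array responses then let me rewrite all index sums in terms of $L$, $N-a$, $c=\beta\alpha/((\delta+1)(\epsilon+1))$, $d$, ${\rm{Re}}(f(\mathbf{A},\mathbf{\Theta}))$, $|f(\mathbf{A},\mathbf{\Theta})|^2$, and the statistics of $\mathbf{d}$. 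Collecting terms produces $E^{signal}(\mathbf{A},\mathbf{\Theta})$ in (\ref{E^{signal}}); substituting $E^{signal}$ and $E^{noise}$ into \eqref{EAC_SIMO_appro} gives \eqref{EAC_SIMO_approx}. The bulk of the effort, and the only place where care is really needed, is the bookkeeping of which Gaussian pairings survive in this eighth-order expansion and their conversion into the compact closed-form constants above; everything else is routine.
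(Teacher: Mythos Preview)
Your proposal is correct and follows the same overall strategy as the paper: start from the approximation~\eqref{EAC_SIMO_appro}, then evaluate $\mathbb{E}[\|\widetilde{\mathbf{h}}\|^4]$ and $\mathbb{E}[\widetilde{\mathbf{h}}^{H}\mathbf{R}\widetilde{\mathbf{h}}]$ by inserting the Rician decompositions of $\mathbf{H}$ and $\mathbf{h}$, killing odd-order Gaussian monomials, and collapsing the surviving sums via the rank-one LoS structure $\overline{\mathbf{H}}=\mathbf{a}_L\mathbf{a}_N^{H}$ into the scalars $L$, $N-a$, and $|f(\mathbf{A},\mathbf{\Theta})|^2$.

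The one organizational difference worth noting is your explicit independence argument. The paper's Appendix~C instead expands $\|\widetilde{\mathbf{h}}\|^2$ into the sum of $\underline{\mathbf{h}}_B^{H}\underline{\mathbf{h}}_B$, $\underline{\mathbf{h}}_B^{H}\mathbf{d}$, $\mathbf{d}^{H}\underline{\mathbf{h}}_B$, $\mathbf{d}^{H}\mathbf{d}$, and the LoS/NLoS pieces of $\mathbf{h}^{H}\mathbf{A}^{H}\mathbf{A}\mathbf{h}$, squares that sum, and then brute-forces fourteen surviving expectation terms one by one (terms $\widetilde{1}$--$\widetilde{14}$ in its equation~\eqref{E^{signal}_appendix}). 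Your observation that $\mathbf{B}\mathbf{h}$ and $\mathbf{A}\mathbf{h}$ depend on disjoint index sets of the independent-entry vector $\mathbf{h}$, hence $\|\mathbf{g}\|^2$ and $\|\mathbf{A}\mathbf{h}\|^2$ are independent, is a cleaner route: it immediately factorizes all reflection/connected cross terms (the paper's $\widetilde{10}$--$\widetilde{13}$) as products of lower-order moments you have already computed for $E^{noise}$, so the only genuinely new work is $\mathbb{E}[\|\mathbf{g}\|^4]$ and $\mathbb{E}[\|\mathbf{A}\mathbf{h}\|^4]$. The paper arrives at the same factorized expressions but recomputes them term by term without ever stating why they decouple. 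Your route is shorter and more transparent; the paper's is more mechanical but reaches the identical closed form.
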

    \begin{proof}
        See Appendix C.
    \end{proof}

    %   E^{signal}
    \newcounter{E^{signal}}
        \begin{figure*}[!htb]
        \small
        % ensure that we have normalsize text
        % \normalsize
        % Store the current equation number.
        \setcounter{E^{signal}}{\value{equation}}
        \setcounter{equation}{22}
        \begin{align} \label{E^{signal}}
            &E^{signal} =
                L c^2 \bigg\{ 
                    L(\delta\epsilon)^2 |f(\mathbf{A},\mathbf{\Theta})|^4
                    +
                    2 \delta \epsilon |f(\mathbf{A},\mathbf{\Theta})|^2
                \times
                    (
                        2L(N-a)\delta + L(N-a)\epsilon + L(N-a) + 2L + (N-a)\epsilon + N-a + 2
                    )
                \notag
                \\
                &+
                    L(N-a)^2 (
                            2\delta^2 + \epsilon^2 + 2\delta\epsilon + 2\delta + 2\epsilon + 1
                        )
                +
                    (N-a)^2 (
                        \epsilon^2 + 2 \delta\epsilon + 2\delta + 2\epsilon + 1
                    )
                +
                    L(N-a)(2\delta + 2\epsilon + 1)
                +
                    (N-a)(2\delta+2\epsilon+1)
                \bigg\}
                \notag
                \\
                &+
                    Lc
                    \bigg\{ 
                        2 \delta\epsilon|f(\mathbf{A},\mathbf{\Theta})|^2  [(L+1)\gamma + a \alpha ]
                    +
                        2 (N-a)(L+1)(\epsilon+\delta+1) \gamma
                    \bigg\}
                +
                    2 L a \bigg\{ 
                       (N-a) c \alpha (\epsilon + \delta + 1)
                       +
                       \alpha \gamma
                    \bigg\}
                \notag
                \\
                &
                +
                    L(L+1)\gamma^2
                +
                    a d^2 \bigg\{
                        (\epsilon+1)^2 a + 2\epsilon + 1
                    \bigg\}.
        \end{align}
        % Restore the current equation number.
        \setcounter{equation}{\value{E^{signal}}}
        % The IEEE uses as a separator
        % \hrulefill
        \end{figure*}
        \addtocounter{equation}{1}

    %   E^{noise}
    \newcounter{E^{noise}}
        \begin{figure*}[!htb]
        \small
        % ensure that we have normalsize text
        % \normalsize
        % Store the current equation number.
        \setcounter{E^{noise}}{\value{equation}}
        \setcounter{equation}{23}
        \begin{align} \label{E^{noise}}
            E^{noise}(\mathbf{A},\mathbf{\Theta})
        &=    
               \sigma_B^2 L \big[
                    |f(\mathbf{A},\mathbf{\Theta}) |^2 c \delta \epsilon
                    +
                    (N-a) c \delta 
                    +
                    \big( 
                        (N-a)c(\epsilon + 1) + \gamma
                    \big) 
                \big]
            +
               \sigma_R^2 a 
               d (\epsilon + 1).
        \end{align}
        % Restore the current equation number.
        \setcounter{equation}{\value{E^{noise}}}
        % The IEEE uses as a separator
        \hrulefill
        \end{figure*}
        \addtocounter{equation}{1}

    Notice that the evaluation of the ergodic achievable rate in (\ref{EAC_SIMO_approx}) has significantly lower computational complexity as compared to time-consuming Monte Carlo simulations. Moreover, such closed-form expression characterizes the impacts of $L$, $N$, AoA and AoD, path-loss parameters, Rician factors as well as RDARS configuration, i.e., $\mathbf{A}$ and $\mathbf{\Theta}$, on the system performance. This has been exemplified in the following remark.

    \begin{remark}
        \begin{itemize}
            \item [(1)] It can be easily proved that for given $a$ ($a \leq N$), any $\mathbf{A}$ satisfies ${\rm{Tr}}(\mathbf{A}^{H}\mathbf{A}) = a$ and $\mathbf{A}^{H}\mathbf{A}(i,i) \in \{0, 1 \}$ would obtain the same ergodic achievable rate in (\ref{EAC_SIMO_approx}). As a result, arbitrary selection of the elements as \textit{connected mode} is sufficient for the single UE uplink case considered in this paper. 
            \item [(2)] It is observed that $|f(\mathbf{A},\mathbf{\Theta})| \leq N-a$ where equality holds when ${\rm{arg}}(\boldsymbol{\theta}(n)) = \zeta_n$ by leveraging the triangle inequality similar to \cite[(24),(28)]{Qingqing_Wu}. We refer to the phase shift design of RDARS configuration that satisfies $|f(\mathbf{A},\mathbf{\Theta})| = N-a$ when $a < N$ as ``\textit{phase shifts aligned to user}" similar to \cite{Kangda_Zhi2}. When such RDARS configuration is applied, $E^{signal}(\mathbf{A},\mathbf{\Theta})$ and $E^{noise}(\mathbf{A},\mathbf{\Theta})$ scale with $\mathcal{O}(L^2N^4)$ and $\mathcal{O}(LN^2)$, respectively. As a result, the ergodic achievable rate of RDARS-aided system in (\ref{EAC_SIMO_approx}) achieves a performance gain of $\mathcal{O}(\log_2(LN^2))$. This indicates that the RDARS exploits the reflection gain inherited from RIS for large $N$ as reported in \cite{Kangda_Zhi2}.     
        \end{itemize}
    \end{remark}

    Notice that the obtained expression (\ref{EAC_SIMO_approx}) for the RDARS-aided system is also applicable to that of the corresponding RIS-aided system and DAS counterparts. Therefore, we utilize it in the following section to compare the performance of these systems.

    \section{Simulation Results}
    In this section, numerical results are presented to validate the effectiveness of the proposed architecture and the correctness of the performance analysis. 

    \subsection{Simulation Setup}

    In the simulation, the log-distance path loss models from the 3GPP standards \cite[Table B.1.2.1-1]{3GPP} are adopted as
    \begin{align}
        PL_{i} &= C_0 + 10\alpha_{i}\log_{10}d + z_i, i \in \{ UB, UR, RB \},
    \end{align}
    where $C_0$ is the path loss at reference distance, $\alpha_i$ is the path loss exponent, and $z_i$ is shadow fading following Gaussian distribution with standard deviation $\sigma_{sh}$. The 3D-cartesian coordinate is adopted and the BS, RDARS, and UE are located as (0m, 0m, 10m), (20m, 20m, 10m), and (200m, 0m, 1.5m). Unless specified otherwise, the corresponding parameters are set as $C_0 = 30$, $\{ \alpha_{UR}, \alpha_{RB} , \alpha_{UB} \} = \{2.5, 2.0, 3.1 \}$, and $\sigma_{sh} = 3$ dB as \cite{Ngo}. Furthermore, the transmit power and noise power are set as $P = 10$ dBm and $\sigma^2 = \sigma_B^2 = \sigma_R^2 =  -80$ dBm. For the scenario considering the LoS component, the Rician factors are set as $\delta = 10$ and $\epsilon = 10$, unless specified otherwise. The AoA and AoD are generated randomly from [0,2$\pi$] \cite{Kangda_Zhi2} and are fixed once generated.
    % \footnote{In the simulation, the angles are set as ($\phi_{RB}^{a}, \varphi_{RB}^{a}, \varphi_{UR}^a) = (6.28, 4.17, 5.20)$, ($\phi_{RB}^{e}, \varphi_{RB}^{e}, \varphi_{UR}^{e}) = (4.21, 0.09, 4.32)$. } 
    All the theoretical results are verified via Monte-Carlo simulations, each running $10000$ channel realizations. The ``Sim" and ``Alt" in the following figures correspond to the results obtained from the Monte-Carlo simulation and the closed-form expressions derived in the previous sections, respectively. 

    % \footnote{Here the RDARS is deployed near the BS to lower the cost for controlling as well as element operating at \textit{connected mode}.}

    \subsection{Single-Antenna BS Scenario}

    In this subsection, we verify the correctness of the analysis derived in Section IV and study the performance of the single-antenna BS scenarios. To demonstrate the tightness of the proposed approximation based on Gamma moment matching in Section IV, the lower bound based on the ergodic achievable rate derived for the RDARS-aided system via the method from \cite[(54)]{Bound} is also provided as shown in Fig. 4(a) for comparison.

    \begin{figure} [htb]  
        \centering
        \setlength{\belowcaptionskip}{-3mm}
        \subfigure[Rate versus total number of $N$] {
        \includegraphics[width=0.9\columnwidth]{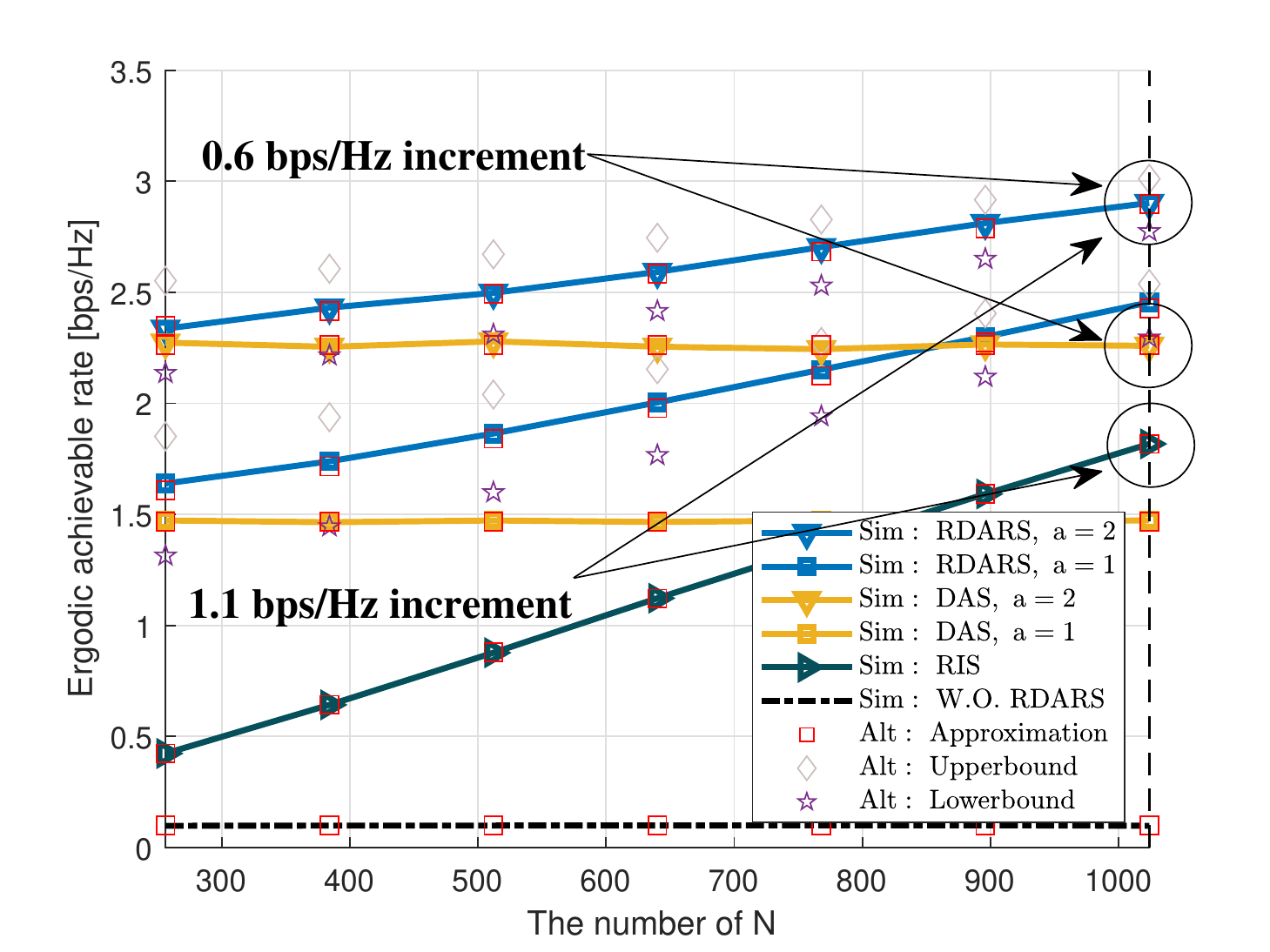}
        }  
        \\
        \subfigure[Rate versus transmit power $P$] {
        \includegraphics[width=0.9\columnwidth]{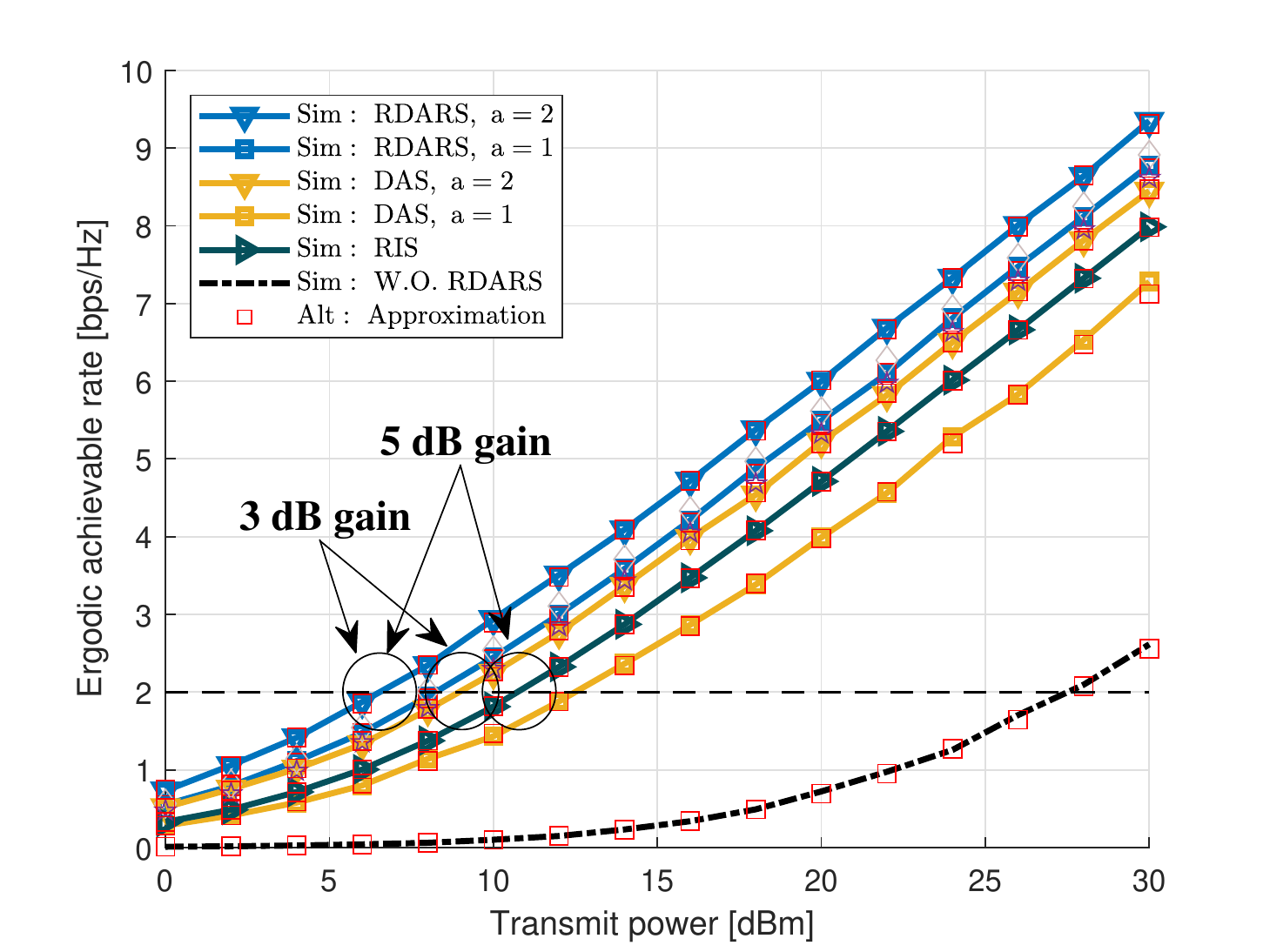} 
        }  
        \caption{ 
        The ergodic achievable rate under single-antenna BS scenarios. $L=1$.} 
    \end{figure}

    % \begin{figure} [htb]
    % 	\centering
    %         % \vspace{-0.8cm}
    %         % \setlength{\belowcaptionskip}{-0.5cm}
    %         \includegraphics[width=0.5\textwidth]{./FigV2/N_SISO}
    % 	\caption{ Rate versus total number of $N$ under single-antenna BS scenarios. $L=1$.}
    % \end{figure}

    % \begin{figure} [htb]
    % 	\centering
    %         % \vspace{-0.8cm}
    %         % \setlength{\belowcaptionskip}{-0.5cm}
    %         \includegraphics[width=0.5\textwidth]{./FigV2/P_SISO}
    % 	\caption{ Rate versus transmit power of $P$ under single-antenna BS scenarios. $L=1$.}
    % \end{figure}
    
    In Fig. 4, we compare the performance of the RDARS-aided, RIS-aided system, and the DAS systems with different numbers of $N$, $a$, and the transmit power $P$. 
    Results show that the derived ergodic achievable rate approximation \eqref{EAC Gamma} matches the simulation results under various $N$. Moreover, from Fig. 4(a), it can be observed that the lower \cite[(54)]{Bound} and the upper bounds \eqref{EAC upper bound} become tighter as both $N$ and $a$ increase. These, thus, verify the correctness of the derived ergodic achievable rate based on the approximation in \eqref{EAC Gamma}, and the effectiveness of the upper bound in \eqref{EAC upper bound} for large $N$.

    As for the system performance comparison, it can be seen that as the number of $N$ increases, the RDARS-aided system and RIS-aided system achieve higher ergodic achievable rates. Nevertheless, the RDARS-aided system significantly outperforms both DAS and RIS-aided systems under the practical setting of $N$, e.g., $N \leq 1024$\cite{Hanzo, Wankai_Tang, Xilong_Pei}. Specifically, for DAS ($a = 2$) and RIS-aided system ($N = 1024$), the rates are $2.3$ bps/Hz and $1.8$ bps/Hz, respectively, while RDARS-aided system with $N = 1024$ and $a = 2$ can achieve up to $2.9$ bps/Hz rate, thus acquiring additional $0.6$ bps/Hz and $1.1$ bps/Hz rate improvements. Moreover, as illustrated in \textbf{Remark 2} and \textbf{Remark 3}, RDARS outperforms its counterparts by simultaneously exploiting both \textit{distribution gain} and \textit{reflection gain} with a controllable trade-off for performance enhancement.
    % (a similar phenomenon is also observed in the experimental result presented in Section VII). 
    
    As can be seen from Fig. 4(b), the RDARS-aided system outperforms the DAS, the RIS-aided system, and the system without the assistance of RDARS with the increasing transmit power. This implies that the power consumption for the RDARS-aided system is much lower than that of the DAS and the RIS-aided system given the same ergodic achievable rate requirement. For example, to achieve a $2$ bps/Hz ergodic achievable rate, the required transmit powers for DAS and RIS-aided systems are about $9$ dBm and $11$ dBm, respectively, while the RDARS-aided system ($N = 1024$) with $a = 2$ only requires $6$ dBm. This demonstrates a $3$ dB and $5$ dB power-saving provided by RDARS-aided systems compared to DAS and RIS-aided systems, respectively.     

    % \subsubsection{Location of UE versus rate}

    % \begin{figure} [htb]
    % 	\centering
    %         % \vspace{-0.8cm}
    %         % \setlength{\belowcaptionskip}{-0.5cm}
    %         \includegraphics[width=0.5\textwidth]{./FigV2/Y_SISO}
    % 	\caption{ Rate versus UE's y-axis, $N = 1024$.}
    % \end{figure}

    % In Fig. X, we plot the ergodic achievable rate versus the different locations of UE, i.e., the UE location is (200m, $y$m, 1.5m) which varies with different $y$. First, it is observed that the derived results match well with the simulation results, these confirm the correctness of the derived results. Second, it is observed that the RDARS-aided system outperforms DAS and RIS-aided systems at a higher rate by leveraging both \textit{distribution gain} and \textit{reflection gain}.

    \subsection{Multi-Antenna BS Scenario}
    
    In this section, we verify the correctness of the analysis derived in Section V and study the performance of the RDARS-aided system under multi-antenna BS scenarios. Here we adopt ``phase shifts aligned to the user" for the phase shift design as mentioned in \textbf{Remark 4} \cite{Kangda_Zhi2} and set $\mathbf{\Theta} = \mathbf{\Theta}_a$. We also provide the result when $\mathbf{\Theta} = \mathbf{I}$ for comparison.\footnote{Though such design for $\mathbf{\Theta}$ is not global optimal, it still serves as an efficient solution as demonstrated in \cite{Kangda_Zhi2} and its numerical results are also given for comparison in this section.} The number of BS antennas is set as $L = 4$ and the total number of elements at RDARS is $N = 512$, unless otherwise specified.

    \subsubsection{Impacts of $N$, $a$, $L$ and $P$}
    
    \begin{figure} [htb]  
        \centering
        \setlength{\belowcaptionskip}{-3mm}
        \subfigure[Rate versus total number of $N$] {
        \includegraphics[width=0.9\columnwidth]{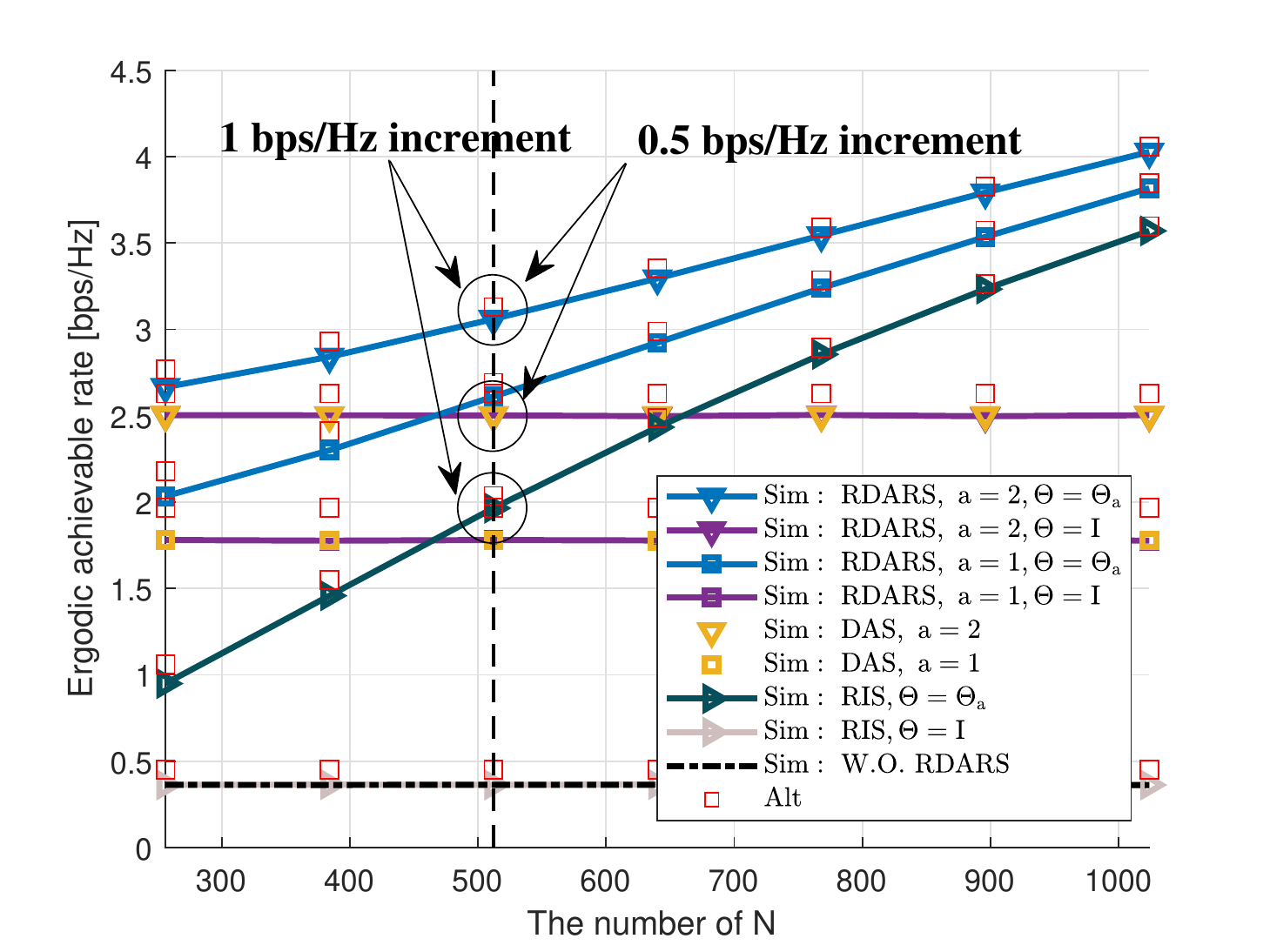}
        } 
        \\
        \subfigure[Rate versus transmit power $P$] {
        \includegraphics[width=0.9\columnwidth]{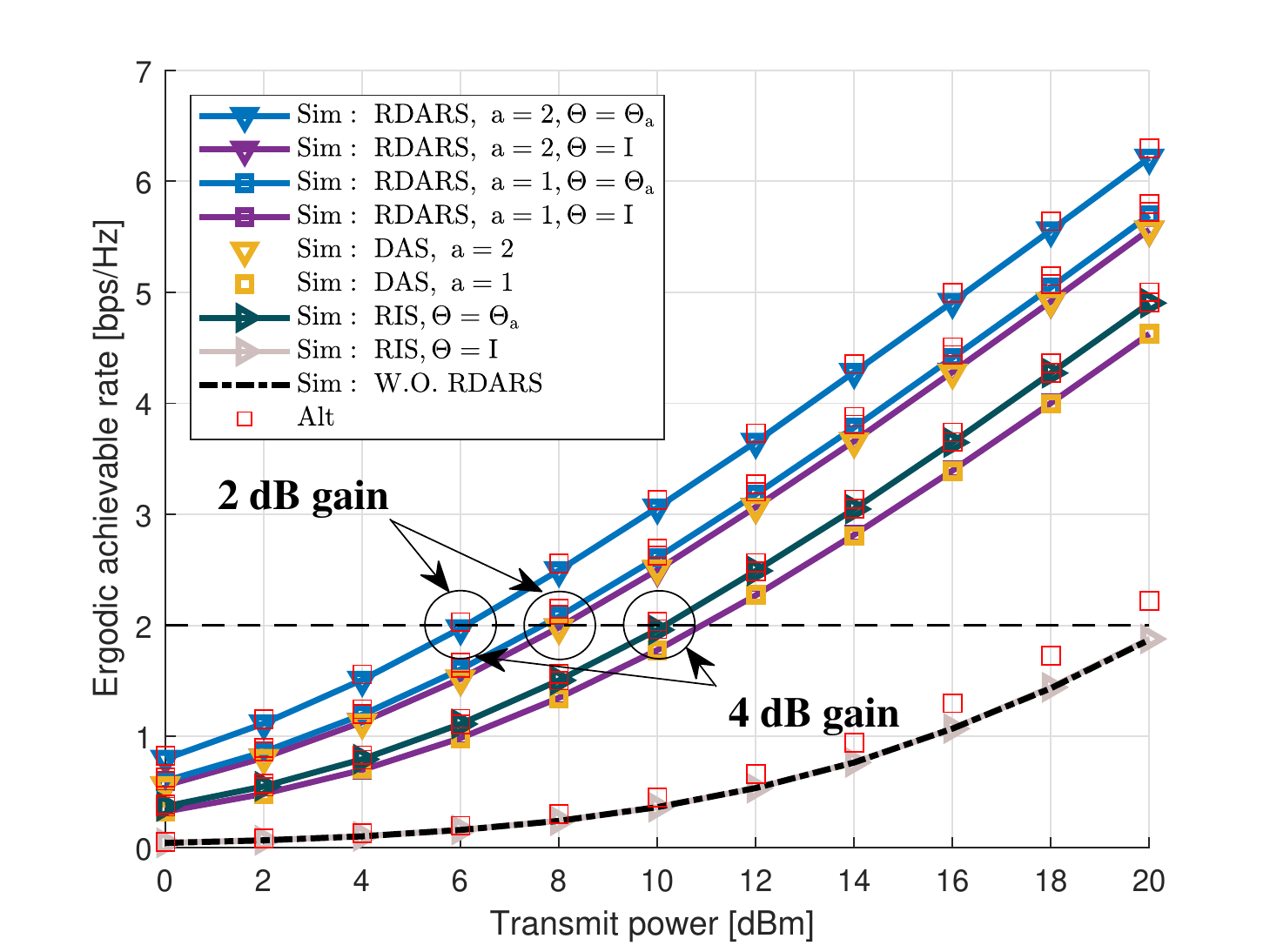} 
        }  
        \caption{ 
        The ergodic achievable rate under multi-antenna BS scenarios. $L=4$.} 
    \end{figure}

    % \begin{figure} [htb]
    % 	\centering
    %         % \vspace{-0.8cm}
    %         % \setlength{\belowcaptionskip}{-0.5cm}
    %         \includegraphics[width=0.5\textwidth]{./FigV2/N_SIMO}
    % 	\caption{ Rate versus total number of $N$ under single-antenna BS scenarios. $L=4$.}
    % \end{figure}

    % \begin{figure} [htb]
    % 	\centering
    %         % \vspace{-0.8cm}
    %         % \setlength{\belowcaptionskip}{-0.5cm}
    %         \includegraphics[width=0.5\textwidth]{./FigV2/P_SIMO}
    % 	\caption{ Rate versus transmit power $P$ under single-antenna BS scenarios. $L=4$.}
    % \end{figure}

    In Fig. 5(a), the performance of RDARS-aided, RIS-aided systems, and DAS are compared under different numbers of $N$ and $a$. First, as shown in Fig. 5(a), the closed-form ergodic achievable rate approximation derived in \eqref{EAC_SIMO_approx} matches well with the simulation results. Second, under the phase shift design of $\mathbf{\Theta}_a$, the RDARS-aided systems outperform the DAS and RIS-aided systems. A $0.5$ bps/Hz rate improvement has been attained when adopting $\mathbf{\Theta} = \mathbf{\Theta}_a$ for RDARS-aided system as compared to $\mathbf{\Theta} = \mathbf{I}$. Besides, an $1$ bps/Hz improvement is obtained by the RDARS as compared to the RIS counterpart, which thus demonstrates the merits of the proposed RDARS. In Fig. 5(b), the ergodic achievable rate is provided under different transmit power $P$. It is evident that by deploying RDARS, the transmit power can be reduced as compared to the counterparts. Therefore, this demonstrates the proposed RDARS's superiority by leveraging both \textit{distribution gain} and \textit{reflection gain}.  

    % Though there is a slight mismatch when evaluating DAS performance, \eqref{EAC_SIMO_approx} still serves as a good approximation for characterizing the system performance.

    \begin{figure} [!htb]  
        % \small
        % \setlength{\belowcaptionskip}{-3mm}  
        \centering
        \includegraphics[width=0.9\columnwidth]{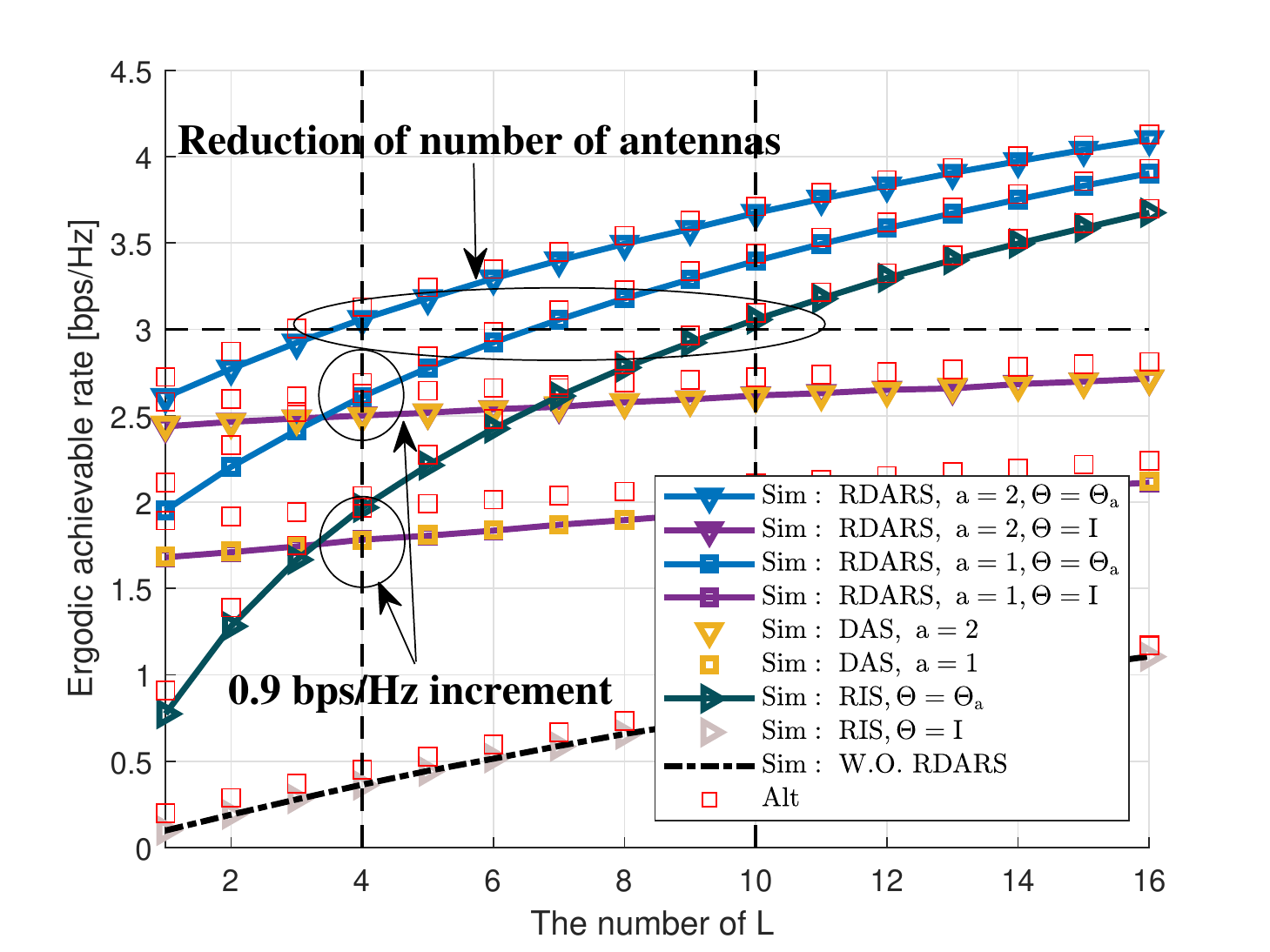} 
        \caption{ 
         The ergodic achievable rate versus the BS antennas $L$. $N = 512$.}     
    \end{figure}

    Fig. 6 investigates the impact of the number of BS antennas $L$ on the system ergodic achievable rate. 
    By leveraging elements performing reflection mode, the RDARS-aided system with $L = 4$ and $a = 1$ outperforms the DAS systems with $4$ antennas at the BS and single remote antenna by $0.9$ bps/Hz rate improvement. 
    Besides, to achieve the same performance as the RDARS-aided system with total $L+a = 6$ active antennas, e.g., $3$ bps/Hz, the RIS-aided system requires $10$ active antennas at the BS. Consequently, the RDARS-aided systems are promising for future DAS systems by reducing the required number of active antennas/RF chains, hardware cost, and energy consumption for DAS deployment while maintaining a satisfactory communication performance.

    \subsubsection{Impact of Rician Factor $\delta$}

    \begin{figure} [!htb]  
        % \small
        % \setlength{\belowcaptionskip}{-3mm}  
        \centering
        \includegraphics[width=0.9\columnwidth]{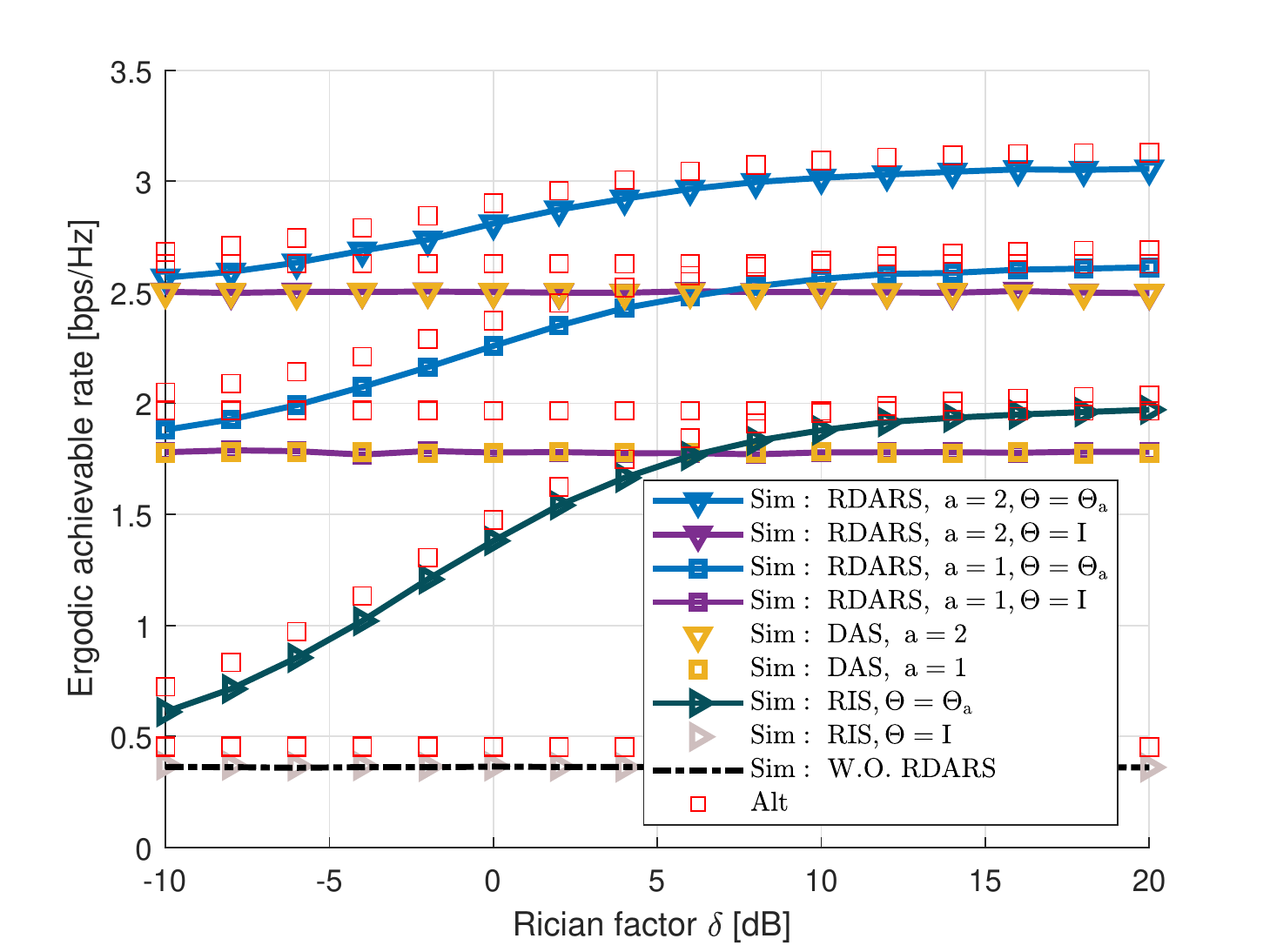} 
        \caption{ 
        The ergodic achievable rate versus Rician factor $\delta$. $N = 512$.}     
    \end{figure}

    In Fig. 7, the impact of the Rician factor $\delta$ on the system performance is investigated. When $\delta$ is small, the LoS path between the BS and RDARS is weak, and the \textit{reflection gain} is negligible for both RDARS and RIS-aided systems. Hence, the performance of the RDARS-aided system is almost identical to the DAS systems with a prominent \textit{distribution gain}. Notice that the performance of the RDARS-aided and RIS-aided systems increases with $\delta$. This is because the phase shift design of the elements acting as \textit{reflection mode} is aligned with the statistical CSI, i.e., $\mathbf{\Theta} = \mathbf{\Theta}_a$. As a result, with increasing $\delta$, the RDARS-BS link becomes strong and contributes positively, and a higher rate is acquired by leveraging the \textit{reflection gain} \cite{MingMin_Zhao}. This further verifies the superiority of the proposed RDARS in terms of the great flexibility in benefiting from the advantages of DAS and RIS-aided systems in different scenarios.    

    \section{Experimental result}
    
    In this section, experimental results are presented using a fabricated prototype of RDARS to verify the performance of this proof-of-concept.
    Fig. 8(a) presents the picture of the proposed RDARS-aided wireless communication system. The system comprises one transmitter (UE), one receiver (BS), and one RDARS. The transmitter and receiver consist of a computer, a Universal Software Radio Peripheral (USRP), and an antenna.
    % \footnote{Note that only one antenna is connected to the USRP at the receiver (BS) as shown in Fig. 10.(b).}
    The detailed system parameters are presented in TABLE I.

    %  \begin{figure} [htb]  
    %     \centering
    %     \setlength{\belowcaptionskip}{-3mm}
    %     % \subfigure[UE (Transmitter)] {
    %     % \includegraphics[width=0.47\columnwidth]{./FigV1/TX.pdf}
    %     % }  
    %     % \subfigure[BS (Receiver)] {
    %     % \includegraphics[width=0.44\columnwidth]{./FigV1/RX.pdf} 
    %     % }   
    %     % \\
    %     % \subfigure[System setup] {    
    %     % \includegraphics[width=1\columnwidth]{./FigV1/system_setup.pdf} 
    %     % }
    %     \centering
    %         % \vspace{-0.8cm}
    %         % \setlength{\belowcaptionskip}{-0.5cm}
    %         \includegraphics[width=1\columnwidth]{./FigV1/system_setup.pdf}
    %     \caption{ 
    %    Illustration of RDARS-aided wireless communication system.} 
    % \end{figure}

    \begin{figure} [htb]  
        \centering
        \setlength{\belowcaptionskip}{-3mm}
        \subfigure[RDARS-aided system] {    
        \includegraphics[width=1\columnwidth]{./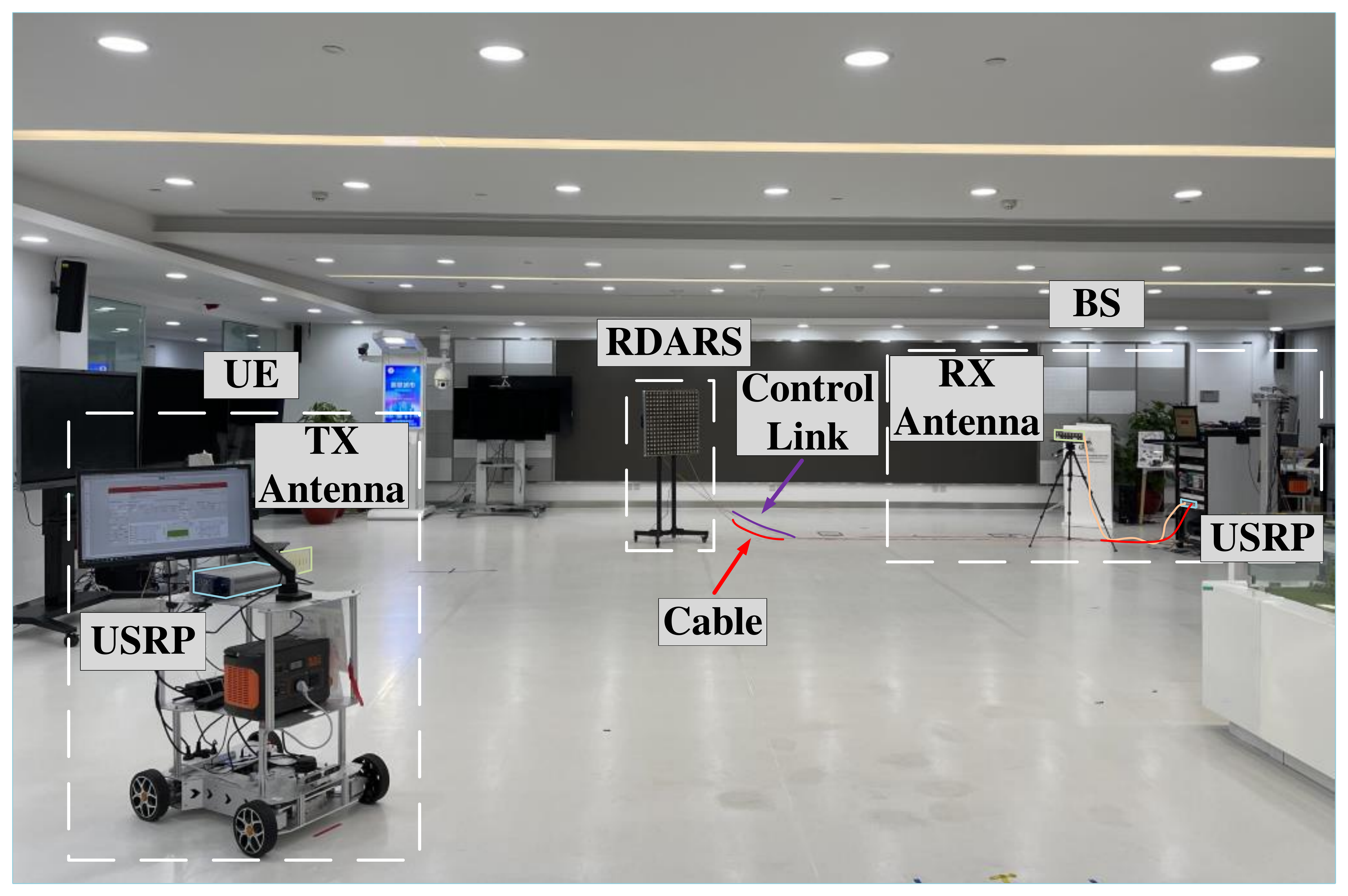} 
        }  
        \\
        \subfigure[RDARS-aided system] {
        \includegraphics[width=0.46\columnwidth]{./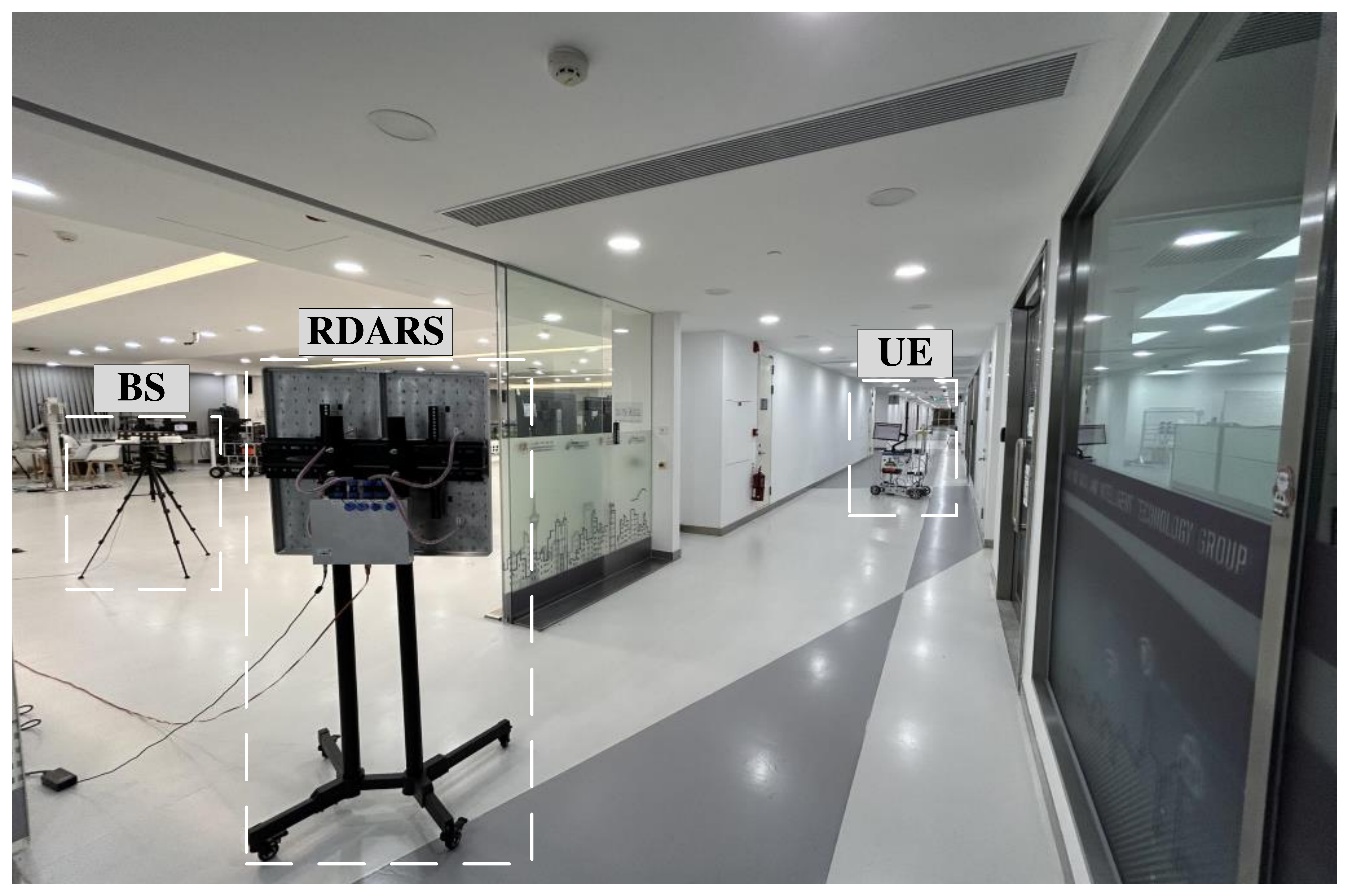} 
        }  
        \subfigure[DAS system] {
        \includegraphics[width=0.46\columnwidth]{./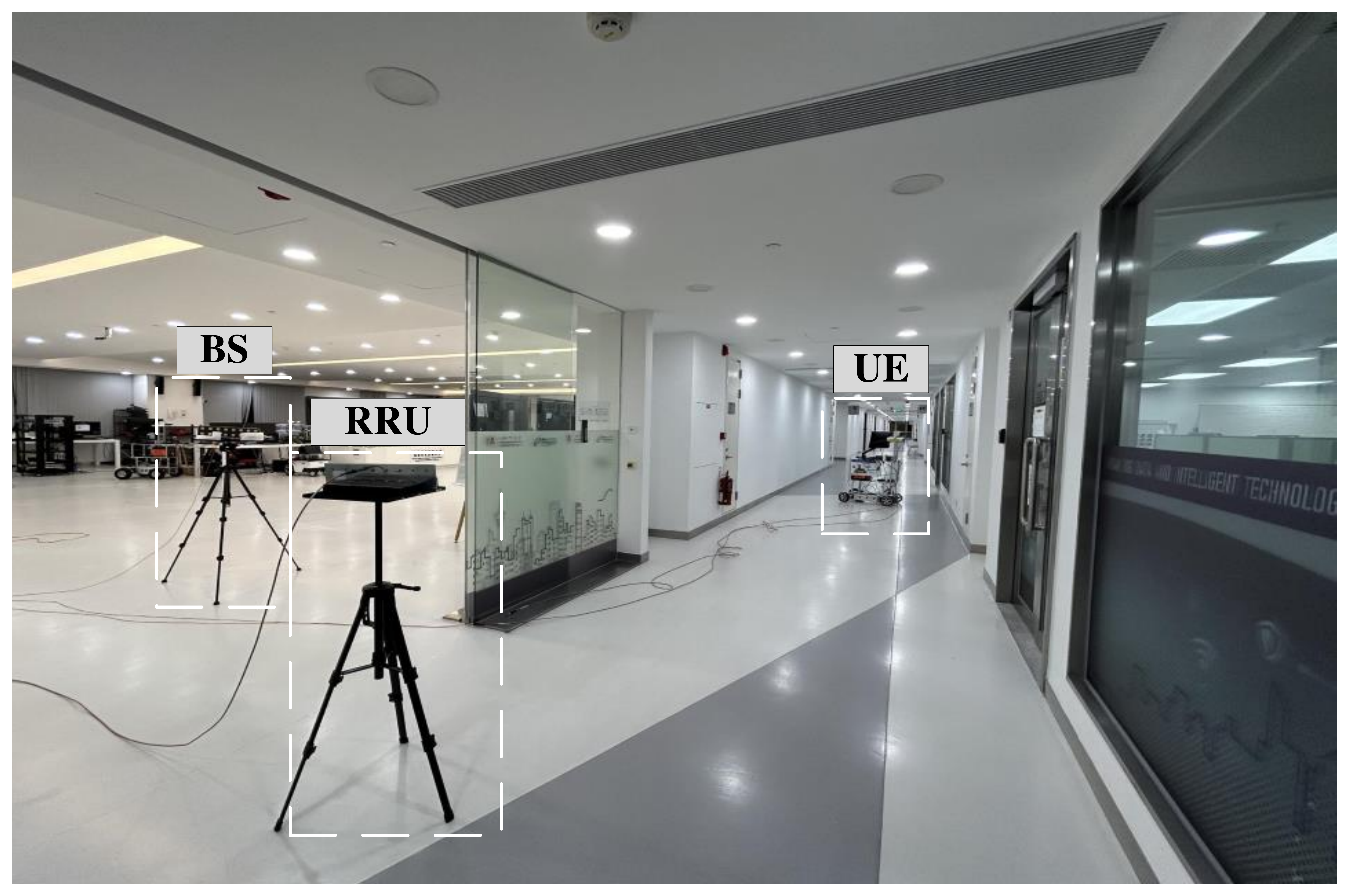} 
        } 
        \caption{ 
        Illustration of experiment scenarios. (a) \textbf{Scenario 1} with direct UE-BS link. (b), (c) \textbf{Scenario 2} without direct UE-BS link.}    
    \end{figure}

    \begin{figure} [htb]  
        \centering
        \setlength{\belowcaptionskip}{-3mm}
        \subfigure[Front view] {    
        \includegraphics[width=0.4\columnwidth]{./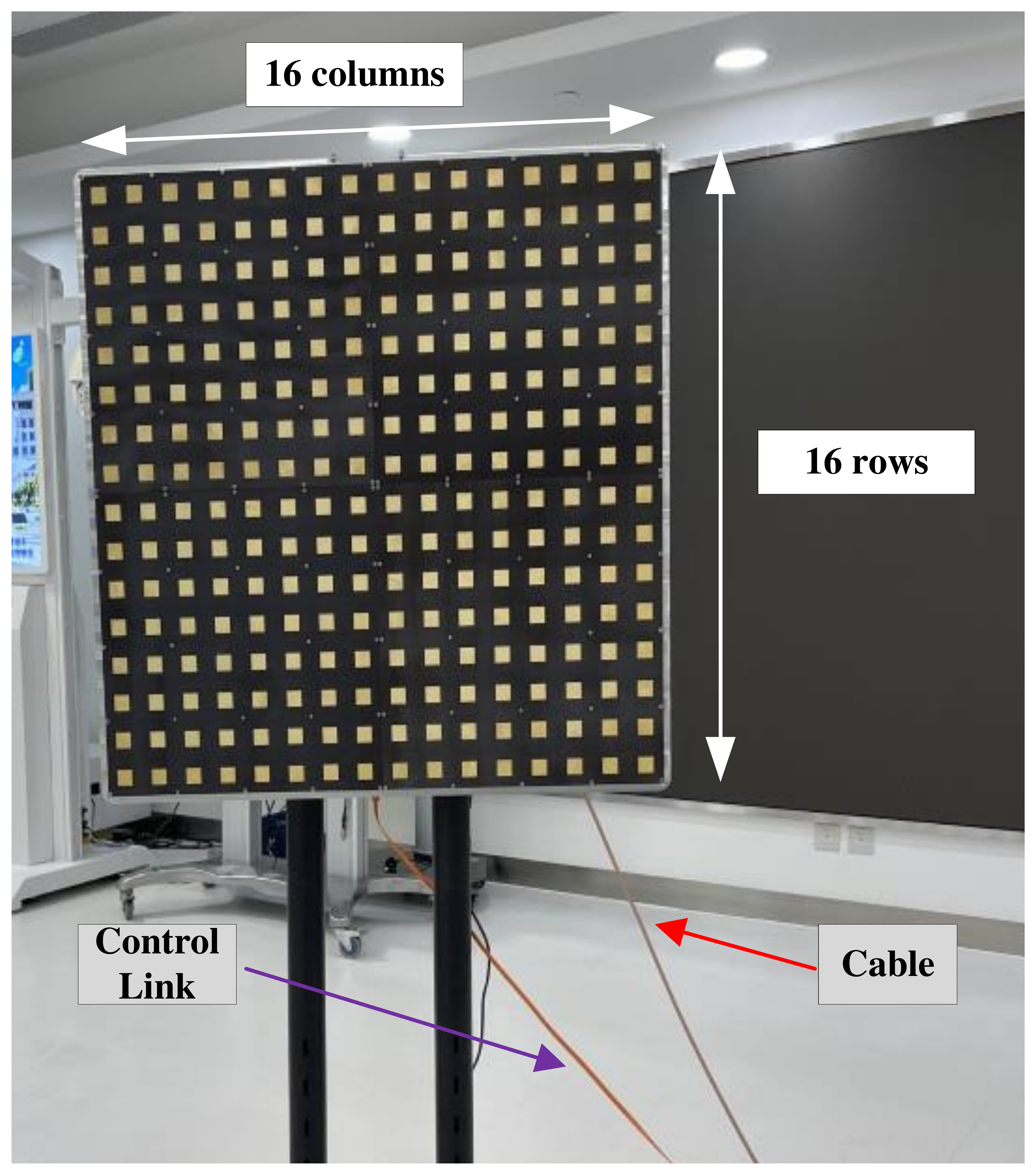} 
        }  
        \subfigure[Rear view] {
        \includegraphics[width=0.4\columnwidth]{./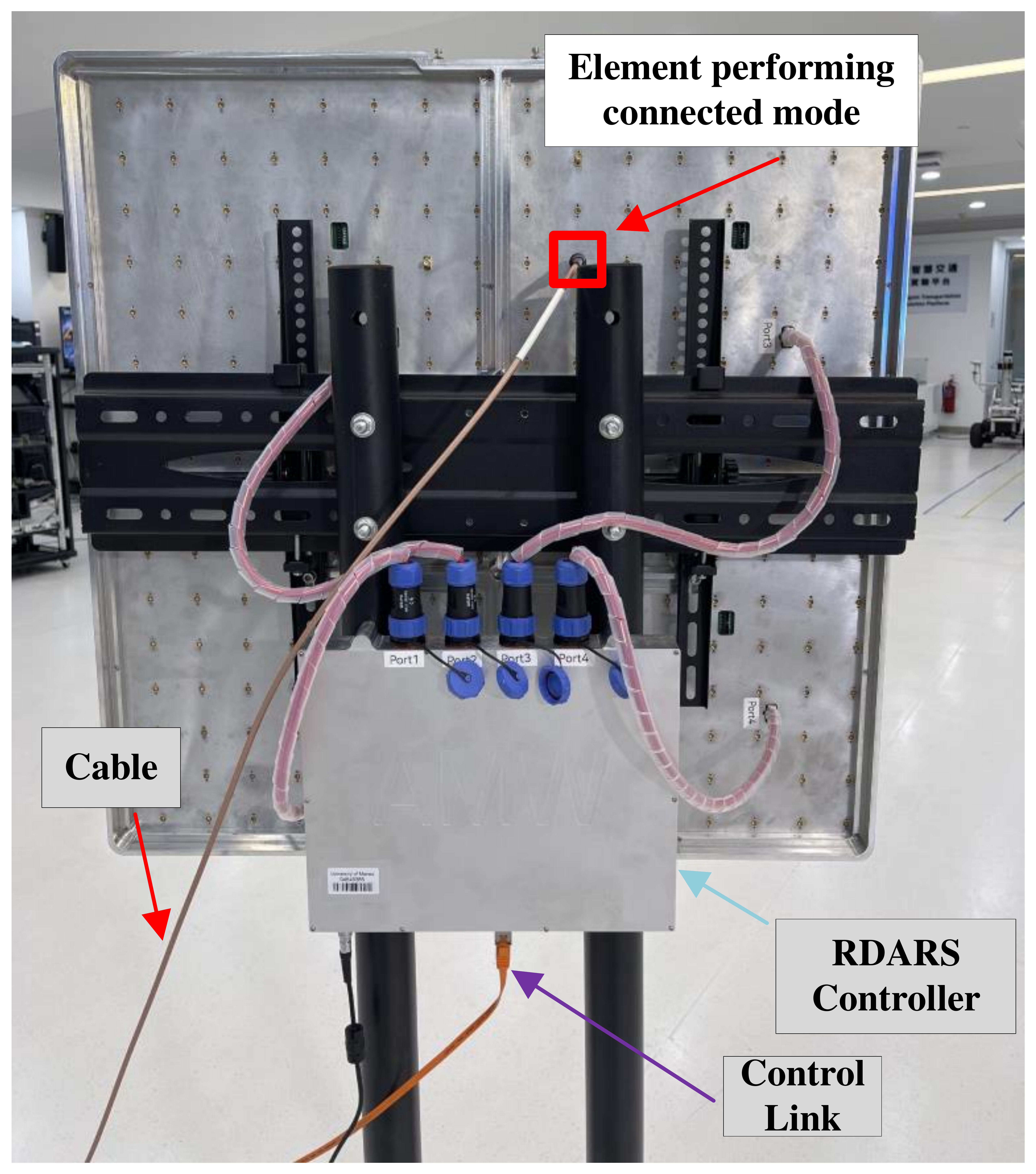} 
        }  
        \caption{ 
       Prototype of $16 \times 16$ RDARS, each element capable of 2-bit phase shifting under \textit{reflection mode}, and is connected to a cable when performing \textit{connected mode}. }    
    \end{figure}
    
    Specifically, the RDARS is made of 16 unit cells per column and per row, i.e., $16\times 16=256$ total elements. Each element of RDARS has $2$-bit phase resolution when performing \textit{reflection mode}, and will switch to the \textit{connected mode} when has been programmed, and cable is connected to its port in advance at the back of the surface as shown in Fig. 9(b). We apply patch antenna design for elements on RDARS similar to \cite{Xilong_Pei}. However, different from the existing realization of RIS where phase response is generated directly on the element with different voltage applied for a phase circuit, to realize the mode selection on RDARS, the patch antenna is first connected to a Complementary Metal-Oxide-Semiconductor (CMOS) switcher through near-lossless microstrip, where different states can be selected with the control circuit. The controlling signal for RDARS is delivered using User Datagram Protocol (UDP) with a total of $512$ bits.

    % \begin{figure} [htb]  
    %     \centering
    %     \setlength{\belowcaptionskip}{-3mm}
    %     \subfigure[RDARS-aided system, $a = 3$] {  
    %     \includegraphics[width=0.46\columnwidth]{./FigV1/RDARS_data.pdf} 
    %     }
    %     \subfigure[RIS-aided system, $a = 0$] {
    %     \includegraphics[width=0.46\columnwidth]{./FigV1/RIS_data.pdf} 
    %     }  
    %     \caption{ 
    %    Illustration of measurement results.}    
    % \end{figure}

    \begin{table} [htb]
        \caption{ System parameters and hardware modules}
        \centering
        \scalebox{0.8}{
        \begin{tabular}{ |c|c|  }
             \hline
             $\textbf{Parameter}$ &$\textbf{Value}$
              \\
             \hline
             Center frequency      & $3.7$ GHz   \\
             \hline
             Modulation scheme     & QPSK  \\
             \hline
             Bandwidth   & $20$ MHz  \\
             \hline
             Transmit power & \textbf{Scenario1}:$-8$ dBm \textbf{Scenario2}:$-11$ dBm \\
             \hline
             $\textbf{Name}$ &$\textbf{Description}$
              \\
             \hline
             Central controller &Computer (Intel Core i7-8700 processor) \\
             \hline
             SDR     & \makecell[c]{NI USRP-$2974$ (TX) \\
             NI USRP-$2954$ (RX)}   \\
             \hline
             RDARS   & \makecell[c]{Size: $70.5$cm $\times$ $70.5$cm ($256$ elements) \\ Operating frequency: $3.6-3.8$GHz} \\
             \hline
             Yagi-uda antenna (TX) & \makecell[c]{Gain $\approx$ $10$ dB $@3.7$ GHz \\ Beamwidth $\approx$ $40^{\circ}$} \\
             \hline
             Microstrip-patch antenna (RX) & \makecell[c]{Gain $\approx$ $6$ dB $@3.7 $ GHz \\ Beamwidth $\approx$ $90^{\circ}$} \\
             \hline
        \end{tabular}
        }
    \end{table} 

    For thorough comparison, two scenarios are tested, i.e., \textbf{Scenario 1} with UE-BS link, as shown in Fig. 8(a), and \textbf{Scenario 2} without UE-BS link (blockage), as shown in Fig. 8(b).
    For both scenarios, the performance of RDARS-aided, RIS-aided, and DAS systems are compared under the same system configurations.\footnote{ For RIS-aided system, the experiment is conducted by defaulting setting $a = 0$ on RDARS. For the DAS system, the RDARS is replaced by a RRU of a similar structure at the same location as shown in Fig. 8(c).} Since the RDARS prototype is designed to have a uniform planer array (UPA) form, the $\textit{optimal phase shifts}$ in both scenarios are determined by searching through the codebook and finding the codeword that gives the maximum received power, as mentioned in\cite{Xilong_Pei}. Once determined, the $\textit{optimal phase shifts}$ are fixed for each scenario during the experiment conducted in a short period for a fair comparison. 
    % where the LOS path dominates. 
    %As such, both incident and desired reflecting directions of the electromagnetic waves (EM waves) can be pre-determined in both vertical and horizontal directions by exploiting the geometric information. 
    % Since the RDARS is designed to have a Uniform Planer Array (UPA) form, the $\textit{optimal phase shift}$  is determined similarly to \cite[(12)]{Xilong_Pei} by first calculating the optimal continuous phase shifts and then applying quantization to $2$-bit resolution. 
    % first calculating the optimal continuous phase shifts and then applying quantization to $2$-bit resolution.
    % The $\textit{random phase shift}$ is also introduced by setting the optimal phase shift with an arbitrary offset.
    
    The experimental results are presented in Table ${\rm\uppercase\expandafter{\romannumeral2}}$ and ${\rm\uppercase\expandafter{\romannumeral3}}$ with three performance metrics, i.e., the received SNR\footnote{
    Here the received SNR refers to the symbol SNR in dB for the given sequence of IQ samples and is calculated based on Error-Vector-Magnitude (EVM), i.e.,  ${\rm{SNR}}$ [dB] $= -20 \log_{10}({\rm{EVM}})$.}, the data transmission throughput, and BLER. The throughput improvement in Table ${\rm\uppercase\expandafter{\romannumeral2}}$ and ${\rm\uppercase\expandafter{\romannumeral3}}$ is calculated between the RDARS-aided and its counterparts based on the $\textit{optimal phase shifts}$. 
    % One example of the throughput measurements in the experiments is shown in Fig. 10 for a clear illustration. 
    As seen in Table ${\rm\uppercase\expandafter{\romannumeral2}}$, RDARS is shown to improve system performance in \textbf{Scenario 1}, e.g., the RDARS with only one element performing \textit{connected mode} brought a $170\%$ throughput improvement, i.e., from $10.0$ Mbps to $27.0$ Mbps, as compared to the RIS-aided system. Also, the received SNR is improved by $2.2$ dB, i.e., from $10.6$ dB to $12.8$ dB, and the BLER is reduced by $63.0\%$, i.e., from $64\%$ to $1\%$. Moreover, a $10\%$ throughput improvement is attained by RDARS as compared to DAS with the same amount of distributed antenna, e.g., from $24.4$ Mbps to $27.0$ Mbps. While in \textbf{Scenario 2}, RDARS is shown to greatly improve system performance compared to its counterparts as seen in Table ${\rm\uppercase\expandafter{\romannumeral3}}$, e.g., the RDARS with only one element performing \textit{connected mode} brought a $560\%$ throughput improvement, i.e., from $3.5$ Mbps to $23.1$ Mbps, as compared to the RIS-aided system. Notably, a $21\%$ throughput improvement is attained by RDARS as compared to DAS with the same amount of distributed antenna, e.g., from $19.0$ Mbps to $23.1$ Mbps. The above results demonstrate the superiority of RDARS by simultaneously exploiting both \textit{distribution gain} and \textit{reflection gain} to combat the ``multiplicative fading'' encountered by the conventional RIS and enhance the performance of conventional DAS via low-cost passive components.

    % This demonstrates the superiority of RDARS by compensating the ``multiplicative fading" effect. Notably, the gain introduced by increasing the number of elements performing \textit{connected mode} shows a diminishing marginal effect, e.g., 4.44.4 Mbps and 11 Mbps throughput improvement from a=1a = 1 to a=2a = 2 and from a=2a = 2 to a=3a = 3, respectively. This indicates that for practical deployment of RDARS, only a few elements performing \textit{connected mode} are sufficient to greatly improve performance with moderate cost.

    The experimental results have validated the effectiveness of the proposed RDARS and its potential to assist wireless communications systems by significantly boosting the performance compared to its counterparts.
    \textit{In fact, in the conventional fully passive RIS setting, a control link is needed for delivering the control signal for phase shift design}. 
    % Though our prototype adopts two separate links for RDARS-controller (or RIS-controller by setting all the elements under \textit{reflection mode}) and the connection of elements performing \textit{connected mode} with BS (using dedicated cable), we envision that these two links could be integrated together.
    Research in \cite{Kangda_Zhi} has pointed out that the phase shift design based on statistical CSI can provide a similar gain compared to instantaneous CSI. Since the fronthaul used by \textit{connected mode} on RDARS and the control link of RDARS can be integrated with the DAS system. As such, the RDARS configuration (both $\boldsymbol{\theta}$ and $\mathbf{A}$) delivery through the control link only needs to be updated on a large-time scale where the remaining time block could be utilized to deliver the signal received at elements performing \textit{connected mode} on RDARS. This could potentially further boost the RDARS-aided system's performance in terms of achievable rate.

    \begin{table*} [htb]
    \caption{Experimental results in \textbf{Scenario 1}}
        \centering
        \begin{tabular}{ |c|c|c|c|c|c|  }
             \hline
             \multicolumn{2}{|c|}{\diagbox[innerwidth = 2cm]{Scenario}{Metric}}
             &Throughput[Mbps]
             &BLER 
             &Received SNR[dB]
             &\makecell[c]{Throughput \\ improvement} \\
             \cline{1-6}
             \multirow{1}*{RDARS} 
             &$a = 1$  &$27.0$ &$0.01$ &$12.8$  &\diagbox{}{}   \\
             \cline{1-6}
             \multirow{1}*{DAS} 
             &$a = 1$  &$24.4$ &$0.10$ &$12.5$  &$10\%$   \\
             \cline{1-6}
             \multirow{1}*{RIS} 
             &$a = 0$ &$10.0$ &$0.64$ &$10.6$ &$170\%$  \\
             \cline{1-6}
             {\makecell[c]{Without RIS/ RDARS \\ (removed)}}
             &\diagbox{}{} &$2.0$ &$0.92$ &$9.9$ &$1250\%$ \\
              \cline{1-6}
        \end{tabular}
    \end{table*} 

    \begin{table*} [htb]
    \caption{Experimental results in \textbf{Scenario 2}}
        \centering
        \begin{tabular}{ |c|c|c|c|c|c|  }
             \hline
             \multicolumn{2}{|c|}{\diagbox[innerwidth = 2cm]{Scenario}{Metric}}
             &Throughput[Mbps]
             &BLER 
             &Received SNR[dB]
             &\makecell[c]{Throughput \\ improvement} \\
             \cline{1-6}
             \multirow{1}*{RDARS} 
             &$a = 1$  &$23.1$ &$0.15$ &$11.4$  &\diagbox{}{}   \\
             \cline{1-6}
             \multirow{1}*{DAS} 
             &$a = 1$  &$19.0$ &$0.30$ &$10.9$  &$21\%$   \\
             \cline{1-6}
             \multirow{1}*{RIS} 
             &$a = 0$ &$3.5$ &$0.86$ &$9.7$ &$560\%$  \\
              \cline{1-6}
        \end{tabular}
    \end{table*} 

    \section{conclusion}

    This paper proposed a novel architecture, namely ``Reconfigurable distributed antennas and reflecting surfaces (RDARS)'', as a flexible and reconfigurable combination of distributed antennas and reflecting surfaces. Closed-form expressions for ergodic achievable rate have been derived with optimal and arbitrary RDARS configuration using MRC. Meaningful insights have been found and its superiority has been demonstrated by simulation results. A showcase using a prototype of the RDARS-aided system with 256 elements has been provided with experimental results. Both theoretical analysis and experimental results confirmed the effectiveness and the practicability of the proposed RDARS.

    \begin{appendices}
    
    \section{Proof of \textbf{Proposition 2}}
        Recall that $\gamma_s = \overline{\gamma}_s ( (\gamma_1 + \gamma_2)^2 + \gamma_3 )$, where we define $\gamma_1 = |h_{UB}|$, $\gamma_2 = \sum_{i=1}^{N}(1-a_i)|h_{RB,i}||h_{UR,i}|$, and $\gamma_3 = \mathbf{h}_{UR}^{H} \mathbf{A}^{H} \mathbf{A} \mathbf{h}_{UR}$. To obtain the first and second moments of $\gamma$, we need the following statistical information: the first to the fourth moments of $\gamma_1$, $\gamma_2$, and the first and the second moments of $\gamma_3$.
    To simplify the notation, we define $x_i=(1-a_i)|h_{UR,i}||h_{RB,i}|$. 
    Below we derive the aforementioned required statistics.

    For $\gamma_1$, the first to fourth moments of $h_{UB}$ can be easily obtained as $\mathbb{E}[\gamma_1] = \frac{1}{2}\sqrt{\pi} \gamma$, $\mathbb{E}[\gamma_1^2] =   \gamma^2$, $\mathbb{E}[\gamma_1^3] = \frac{3}{4} \sqrt{\pi} \gamma^{3}$ and $\mathbb{E}[\gamma_1^4] =  2\gamma^{4}$.

    For $\gamma_2$, the first to second moments are calculated as 
    $\mathbb{E}[\gamma_2]
            =
            \sum_{i=1}^{N}(1-a_i) \mathbb{E}[|h_{UR,i}|] \mathbb{E}[|h_{RB,i}|]=
            (N-a) \frac{\pi}{4} \alpha \beta$.
    $\mathbb{E}[\gamma_2^2] 
            =
            \mathbb{E}[\sum_{i=1}^{N} (1-a_i)^{2} |h_{UR,i}|^{2} |h_{RB,i}|^{2} \notag 
            + 
             \sum_{i \neq j}^{N} (1-a_i) (1-a_j) |h_{UR,i}| |h_{RB,i}| |h_{UR,j}| |h_{RB,j}|]
            =
            (N-a)(1+ \frac{\pi^2}{16}(N-a-1)) \alpha^{2} \beta^{2}$.
    $\mathbb{E}[\gamma_2^3] 
            =
            \mathbb{E}[\sum_{i=1}^{N} x_i^3 + \sum_{i\neq j}^N x_i^2 x_j + \sum_{i \neq j \neq k}^{N} x_i x_j x_k ]$.
    %%%
    For the third moment of $\gamma_2$, we have
    $ \mathbb{E}[\gamma_2^3] 
            =
            \mathbb{E}[\sum_{i=1}^{N} x_i^3 + \sum_{i\neq j}^N x_i^2 x_j + \sum_{i \neq j \neq k}^{N} x_i x_j x_k ]$,
    $\mathbb{E}[\sum_{i=1}^{N} x_i^3] = (N-a) \frac{9}{16} \pi \alpha^3 \beta^3$, $\mathbb{E}[\sum_{i\neq j}^N x_i^2 x_j] = {\rm{C}_{3}^{1}} {\rm{C}_{N-a}^{2}} {\rm{C}_{2}^{1}} \frac{1}{4} \pi \alpha^3 \beta^3$, and $\mathbb{E} [\sum_{i \neq j \neq k}^{N} x_i x_j x_k]  ={\rm{C}_{N-a}^{3}} {\rm{C}_{3}^{1}} {\rm{C}_{2}^{1}}
        \frac{1}{64} \pi^3 \alpha^3 \beta^3$ where ${\rm{C}_{N}^{M}} = \frac{N!}{M!(N-M)!}$. 
    Therefore we have $\mathbb{E} [\gamma_2^3] 
            =
            C_3 \alpha^3 \beta^3$, where $C_3 = (N-a) \frac{9}{16} \pi +
            3(N-a)(N-a-1) \frac{1}{4} \pi  +
            (N-a)(N-a-1)(N-a-2)  \frac{1}{64} \pi^3$.
    %%%
    For the fourth moment of $\gamma_2$, we have
    $\mathbb{E}[\gamma_2^4]
        = \mathbb{E}[
        \sum_{i=1}^{N} x_i^{4} 
        + \sum_{i \neq j}^{N} x_i^{3} x_j^{1} 
        + \sum_{i \neq j}^{N} x_i^{2} x_j^{2} 
        + \sum_{i \neq j \neq k }^{N} x_i^{2} x_j x_l
        + \sum_{i \neq j \neq k \neq l}^{N} x_i^{2} x_j^{2} x_k^{2} x_l^{2} ]$, 
    where $\mathbb{E} [ \sum_{i=1}^{N} x_i^{4}]  = 4(N-a) \alpha^{4} \beta^{4}$, $\mathbb{E} [ \sum_{i \neq j}^{N} x_i^{3} x_j^{1}] = {\rm{C}_{4}^{1}} {\rm{C}_{N-a}^{2}} {\rm{C}_{2}^{1}}        \frac{9}{64} \pi^{2} \alpha^{4} \beta^{4}$, $\mathbb{E} [ \sum_{i \neq j}^{N} x_i^{2} x_j^{2}] =  {\rm{C}_{4}^{2}} {\rm{C}_{N-a}^{2}}  \alpha^{4} \beta^{4}$, $\mathbb{E}[\sum_{i \neq j \neq k }^{N} x_i^{2} x_j x_l] = {\rm{C}_{N-a}^{3}} {\rm{C}_{4}^{2}} {\rm{C}_{3}^{1}} {\rm{C}_{2}^{1}} {\rm{C}_{2}^{1}} \frac{{\pi}^2}{16} \alpha^{4} \beta^{4}$, and $\mathbb{E} [ \sum_{i \neq j \neq k \neq l}^{N} x_i x_j x_k x_l] = {\rm{C}_{N-a}^{4}} {\rm{C}_{4}^{1}} {\rm{C}_{3}^{1}} {\rm{C}_{2}^{1}} 
        \frac{1}{256} \pi^{4} \alpha^{4} \beta^{4}$.
    Thus it yields $ \mathbb{E} [\gamma_2^4] = C_4 \alpha^{4} \beta^{4}$, where 
    $C_4 = 4(N-a) + \frac{9}{16}(N - a)(N-a-1) \pi^{2} + 3 (N-a) (N-a-1) + 6(N-a)(N-a-1)(N-a-2) \frac{{\pi}^2}{16} + \frac{1}{256}(N-a)(N-a-1)(N-a-2)(N-a-3) \pi^{4}$.
    %%%
    
    For the calculation of $\gamma_3$, we have $\mathbf{h}_{UI}^{H} \mathbf{A}^{H} \mathbf{A} \mathbf{h}_{UI} 
        =
        \sum_{i=1}^{N} a_i^2 |h_{UR,i}|^2 =
        \sum_{i=1}^{N} a_i^2 y_i$,
    where $a_i = \mathbf{A}(i,i)$ and $y_i = |h_{UR,i}|^2$. Note that $\frac{y_i}{ \frac{1}{2} \alpha^2 }$ follows Chi-square distribution with two degrees of freedom, i.e., $\frac{y_i}{ \frac{1}{2} \alpha^2 } \sim \chi^2(2)$. And $y_i = \frac{\alpha^2}{2} \frac{y_i}{ \frac{1}{2} \alpha^2 }$ follows Gamma distribution, i.e., $y_i \sim \Gamma(1, \alpha^2)$. Therefore we have $\mathbf{h}_{UR}^{H} \mathbf{A}^{H} \mathbf{A} \mathbf{h}_{UR} \sim \Gamma(a, \alpha^2)$, and the first and second moments are obtained as
    $\mathbb{E}[\gamma_3] = 
            \mathbb{E}[\mathbf{h}_{UR}^{H} \mathbf{A}^{H} \mathbf{A} \mathbf{h}_{UR}] = a \alpha^2$, 
    $\mathbb{E}[\gamma_3^2] = \mathbb{E}[(\mathbf{h}_{UR}^{H} \mathbf{A}^{H} \mathbf{A} \mathbf{h}_{UR})^2] =
            \sum_{i=1}^N a_i^2 \mathbb{E}[|h_{UR,i}|^4] + 
            \sum_{i \neq j}^N a_i a_j \mathbb{E}[|h_{UR,i}|^2] \mathbb{E}[|h_{UR,j}|^2] = a(a+1) \alpha^4$.
    With the above statistics, $\gamma$'s first and second moments are derived.

    \section{Proof of \textbf{Corollary 3}}

    The average received SNRs of the RDARS-aided system, RIS-aided system and DAS are obtained as 
    $\mathbb{E}[\gamma_{s}^{RDARS}] = \overline{\gamma}_s \big[ \gamma^2 +    (N-a) \frac{\pi}{4} \sqrt{\pi} \gamma \alpha \beta + (N-a)(1+ \frac{\pi^2}{16}(N-a-1)) \alpha^{2} \beta^{2} + a\alpha^2 \big]$, 
    $\mathbb{E}[\gamma_{s}^{RIS}] = \overline{\gamma}_s \big[ \gamma^2 + N\frac{\pi}{4} \sqrt{\pi} \gamma \alpha \beta 
            + (N)(1+ \frac{\pi^2}{16}(N-1)) \alpha^{2} \beta^{2}  \big]$,
    $\mathbb{E}[\gamma_{s}^{DAS}] =
            \overline{\gamma}_s \big[ \gamma^2 +  a\alpha^2 \big]$. 
    Let $\mathbb{E}[\gamma_{s}^{RDARS}]> \mathbb{E}[\gamma_{s}^{DAS}]$, we have $N > -\frac{4\sqrt{\pi}}{\pi}\frac{\gamma}{\alpha\beta} - \frac{16}{\pi^2} + a + 1$. It can be proved that $-\frac{4\sqrt{\pi}}{\pi}\frac{\gamma}{\alpha\beta} - \frac{16}{\pi^2} + a + 1 < a$. As such, we always have $\mathbb{E}[\gamma_{s}^{RDARS}]> \mathbb{E}[\gamma_{s}^{DAS}]$. Let $\mathbb{E}[\gamma_{s}^{RDARS}]> \mathbb{E}[\gamma_{s}^{RIS}]$, and further through some simple derivation we obtain (\ref{Lemma1_2}). Thus we complete the proof. 

    \section{Proof of \textbf{Theorem 4}}

    Here, we need to derive $E^{signal}$ and $E^{noise}$. We start by calculating $E^{signal} = \mathbb{E}[||\widetilde{\mathbf{h}}||^4]$ in the following. First, by identifying the non-zero terms, $\mathbb{E}[||\widetilde{\mathbf{h}}||^4]$ is decomposed as \eqref{E^{signal}_appendix} at the top of next page,  where we define $\underline{\mathbf{h}}_B \triangleq \mathbf{H}\mathbf{B}\mathbf{h} \in \mathbb{C}^{L \times 1}$. Then, the derivation of $E^{signal}$ follows by calculating these fourteen terms in \eqref{E^{signal}_appendix}. The first term $\widetilde{1}$ can be derived using a similar technique in \cite{Kangda_Zhi2} via the calculation of non-zero expectations among total $256$ terms, and therefore are omitted here for brevity. The remaining $\widetilde{2}$-$\widetilde{14}$ terms are calculated as $\widetilde{2} = \widetilde{3} = \gamma L 
            \big[ 
                    |f(\mathbf{A},\mathbf{\Theta})|^2 c\delta\epsilon 
                    +
                    (N-a) c\delta 
                    +
                    (N-a) c (\epsilon + 1 ) 
            \big]$,
    $\widetilde{4} = \gamma^2L(L+1)$,
    $\widetilde{5} = a^2 d^2 \epsilon^2$,
    $\widetilde{6} = a d^2 \epsilon$,
    $\widetilde{7} = a d^2 \epsilon$,
    $\widetilde{8} = a^2 d^2  + a d^2$,
    $\widetilde{9} = 
            2 L^2 |f(\mathbf{A},\mathbf{\Theta}) |^2 c \delta \epsilon \gamma 
            +
            2 L^2 (N-a) |f(\mathbf{A},\mathbf{\Theta}) |^2 c \delta \gamma 
            +
            2 L^2 (N-a) |f_{k}(\mathbf{A},\mathbf{\Theta}) |^2 c (\epsilon + 1) \gamma$,
    $\widetilde{10} = 
            2 L a |f(\mathbf{A},\mathbf{\Theta}) |^2 c d \delta \epsilon^2
            +
            2 L a (N-a) c d \delta \epsilon 
            +
            2 L a (N-a) c d \epsilon^2 
            +
            2 L a (N-a) c d \epsilon$,
    $\widetilde{11} = 
            2 L a |f(\mathbf{A},\mathbf{\Theta}) |^2 c d \delta \epsilon 
            +
            2 L a (N-a) c d \delta 
            +
            2 L a (N-a) c d \epsilon_k 
            +
            2 L a (N-a) c d$,
    $\widetilde{12} = 
            2 L a d \gamma \epsilon$,
    $\widetilde{13} = 
            2 L a d \gamma$,
    $\widetilde{14} = 
            2 a^2 d^2 \epsilon$. By rearranging the above derived results into compact form, we arrive at (\ref{E^{signal}}). 
    Next, $E^{noise} = \mathbb{E}[||\widetilde{\mathbf{h}}^{H}\mathbf{R}\widetilde{\mathbf{h}}||^2]$ is calculated as $\mathbb{E}[||\widetilde{\mathbf{h}}^{H}\mathbf{R}\widetilde{\mathbf{h}}||^2] = \mathbb{E} [ \sigma_B^2 (\underline{\mathbf{h}}_B + \mathbf{d} )^{H}
                        (\underline{\mathbf{h}}_B + \mathbf{d} )
                        +
                        \sigma_R^2
                        (\mathbf{A} \mathbf{h})^{H}
                        (\mathbf{A} \mathbf{h})]
    =
                        \sigma_B^2 \big[ 
                        \mathbb{E}[\underline{\mathbf{h}}_B^{H}\underline{\mathbf{h}}_B]
                        +
                        \mathbb{E}[\mathbf{d}^{H}\mathbf{d}]
                \big]
                +
                \sigma_R^2 \big[ 
                        \mathbb{E}[ \mathbf{h}^{H}\mathbf{A}^{H}
                        \mathbf{A} \mathbf{h}]
                \big]
    =
                \sigma_B^2 L \big[
                    |f(\mathbf{A},\mathbf{\Theta}) |^2 c \delta \epsilon
                    +
                    (N-a) c \delta 
                    +
                    \big( 
                        (N-a)c(\epsilon + 1) + \gamma
                    \big) 
                \big]
            +
               \sigma_R^2 a 
               d (\epsilon + 1)$. Thus we complete the proof.

    %   E^{signal}_appendix
    \newcounter{E^{signal}_appendix}
        \begin{figure*}[!htb]
        \small
        % ensure that we have normalsize text
        % \normalsize
        % Store the current equation number.
        \setcounter{E^{signal}_appendix}{\value{equation}}
        \setcounter{equation}{25}
        % \hrulefill
        \begin{align} \label{E^{signal}_appendix}
           \mathbb{E}[ ||\widetilde{\mathbf{h}}||^4]
           &=
                \underbrace{
                    \mathbb{E} \bigg[ 
                        \bigg| 
                            \underline{\mathbf{h}}_{B}^{H} \underline{\mathbf{h}}_{B}
                        \bigg|^2
                    \bigg]
                }_{\widetilde{1}}
            +
                \underbrace{
                    \mathbb{E} \bigg[ 
                        \bigg| 
                            \underline{\mathbf{h}}_{B}^{H} 
                            \mathbf{d}
                        \bigg|^2
                    \bigg]
                }_{\widetilde{2}}
            +   
                \underbrace{
                    \mathbb{E} \bigg[ 
                        \bigg| 
                            \mathbf{d}^{H}
                            \underline{\mathbf{h}}_{B} 
                        \bigg|^2
                    \bigg]
                }_{\widetilde{3}}
            +
                \underbrace{
                    \mathbb{E} \bigg[ 
                        \bigg| 
                            \mathbf{d}^{H}
                            \mathbf{d}
                        \bigg|^2
                    \bigg]
                }_{\widetilde{4}}
        \notag
            \\
            %%%%
        &+
                \underbrace{
                    d^2 \epsilon^2
                    \mathbb{E} \bigg[ 
                        \bigg| 
                            \overline{\mathbf{h}}^{H} 
                            \mathbf{A}^{H} \mathbf{A}
                            \overline{\mathbf{h}}
                        \bigg|^2
                    \bigg]
                }_{\widetilde{5}}
            +
                \underbrace{
                    d^2 \epsilon 
                    \mathbb{E} \bigg[ 
                        \bigg| 
                            \overline{\mathbf{h}}^{H} 
                            \mathbf{A}^{H} \mathbf{A}
                            \widetilde{\mathbf{h}}
                        \bigg|^2
                    \bigg]
                }_{\widetilde{6}}
            +
                \underbrace{
                    d^2 \epsilon 
                    \mathbb{E} \bigg[ 
                        \bigg| 
                            \widetilde{\mathbf{h}}^{H} 
                            \mathbf{A}^{H} \mathbf{A}
                            \overline{\mathbf{h}}
                        \bigg|^2
                    \bigg]
                }_{\widetilde{7}}
            +
                \underbrace{
                    d^2 
                    \mathbb{E} \bigg[ 
                        \bigg| 
                            \widetilde{\mathbf{h}}^{H} 
                            \mathbf{A}^{H} \mathbf{A}
                            \widetilde{\mathbf{h}}
                        \bigg|^2
                    \bigg]
                }_{\widetilde{8}}
        \notag
            \\
            %%%%
        &+  
                \underbrace{
                    2 {\rm{Re}} \bigg\{
                        \mathbb{E} \bigg[ 
                                \underline{\mathbf{h}}_{B}^{H} 
                                \underline{\mathbf{h}}_{B}
                                \mathbf{d}^{H}
                                \mathbf{d}
                        \bigg]
                    \bigg\}
                }_{\widetilde{9}}
            +
                \underbrace{
                    2 {\rm{Re}} \bigg\{ 
                        d \epsilon
                        \mathbb{E} \bigg[ 
                                \hat{\underline{\mathbf{h}}}_{B}^{H} 
                                \underline{\mathbf{h}}_{B}
                                \overline{\mathbf{h}}^{H}
                                \mathbf{A}^{H} \mathbf{A}
                                \overline{\mathbf{h}}
                        \bigg]
                    \bigg\}
                }_{\widetilde{10}}
            +
                \underbrace{
                    2 {\rm{Re}} \bigg\{ 
                        d 
                        \mathbb{E} \bigg[ 
                                \hat{\underline{\mathbf{h}}}_{B}^{H} 
                                \underline{\mathbf{h}}_{B}
                                \widetilde{\mathbf{h}}^{H}
                                \mathbf{A}^{H} \mathbf{A}
                                \widetilde{\mathbf{h}}
                        \bigg]
                    \bigg\}
                }_{\widetilde{11}}
        \notag
            \\
        &+
                \underbrace{
                    2 {\rm{Re}} \bigg\{ 
                        d \epsilon
                        \mathbb{E} \bigg[ 
                                \mathbf{d}^{H}
                                \mathbf{d}
                                \overline{\mathbf{h}}^{H}
                                \mathbf{A}^{H} \mathbf{A}
                                \overline{\mathbf{h}}
                        \bigg]
                    \bigg\}
                }_{\widetilde{12}}
            +
                \underbrace{
                    2 {\rm{Re}} \bigg\{ 
                        d 
                        \mathbb{E} \bigg[ 
                                \mathbf{d}^{H}
                                \mathbf{d}
                                \widetilde{\mathbf{h}}^{H}
                                \mathbf{A}^{H} \mathbf{A}
                                \widetilde{\mathbf{h}}
                        \bigg]
                    \bigg\}
                }_{\widetilde{13}}
            +   
                \underbrace{
                    2 {\rm{Re}} \bigg\{ 
                        d^2 \epsilon 
                        \mathbb{E} \bigg[ 
                                \overline{\mathbf{h}}^{H}
                                \mathbf{A}^{H}\mathbf{A}
                                \overline{\mathbf{h}}
                                \widetilde{\mathbf{h}}^{H}
                                \mathbf{A}^{H} \mathbf{A}
                                \widetilde{\mathbf{h}}
                        \bigg]
                    \bigg\}
                }_{\widetilde{14}}
        \end{align}
        % Restore the current equation number.
        \setcounter{equation}{\value{E^{signal}_appendix}}
        % The IEEE uses as a separator
        \hrulefill
        \end{figure*}
    \addtocounter{equation}{1}

    \end{appendices}

    \bibliographystyle{IEEEtran}
    \bibliography{main}

% Generated by IEEEtran.bst, version: 1.14 (2015/08/26)
\begin{thebibliography}{10}
\providecommand{\url}[1]{#1}
\csname url@samestyle\endcsname
\providecommand{\newblock}{\relax}
\providecommand{\bibinfo}[2]{#2}
\providecommand{\BIBentrySTDinterwordspacing}{\spaceskip=0pt\relax}
\providecommand{\BIBentryALTinterwordstretchfactor}{4}
\providecommand{\BIBentryALTinterwordspacing}{\spaceskip=\fontdimen2\font plus
\BIBentryALTinterwordstretchfactor\fontdimen3\font minus \fontdimen4\font\relax}
\providecommand{\BIBforeignlanguage}[2]{{%
\expandafter\ifx\csname l@#1\endcsname\relax
\typeout{** WARNING: IEEEtran.bst: No hyphenation pattern has been}%
\typeout{** loaded for the language `#1'. Using the pattern for}%
\typeout{** the default language instead.}%
\else
\language=\csname l@#1\endcsname
\fi
#2}}
\providecommand{\BIBdecl}{\relax}
\BIBdecl

\bibitem{Robert_Heath}
R.~Heath, S.~Peters, Y.~Wang, and J.~Zhang, ``A current perspective on distributed antenna systems for the downlink of cellular systems,'' \emph{IEEE Commun. Mag.}, vol.~51, no.~4, pp. 161--167, Apr. 2013.

\bibitem{Lisu_Yu}
L.~Yu, J.~Wu, A.~Zhou, E.~G. Larsson, and P.~Fan, ``Massively distributed antenna systems with nonideal optical fiber fronthauls: {A} promising technology for {6G} wireless communication systems,'' \emph{IEEE Veh. Technol. Mag.}, vol.~15, no.~4, pp. 43--51, Dec. 2020.

\bibitem{Lin_Dai}
L.~Dai, S.~Zhou, and Y.~Yao, ``Capacity analysis in {CDMA} distributed antenna systems,'' \emph{IEEE Trans. Wireless Commun.}, vol.~4, no.~6, pp. 2613--2620, Nov. 2005.

\bibitem{ROF1}
A.~Achanta, B.~A. Rao, S.~S. Kandarpa, S.~S.~S. Sanagapati, K.~C. Vishnubhatla, and K.~Kumar, ``Radio over fiber link design for the fronthaul of cellular communication systems,'' in \emph{Proceedings of 2017 IEEE International Conference on Advanced Networks and Telecommunications Systems (ANTS)}, 2017, pp. 1--4.

\bibitem{ROF2}
F.~M.~A. Al-Zubaidi, J.~D. López~Cardona, D.~S. Montero, and C.~Vázquez, ``Optically powered radio-over-fiber systems in support of {5G} cellular networks and {IoT},'' \emph{J. Lightwave Technol.}, vol.~39, no.~13, pp. 4262--4269, Jul. 2021.

\bibitem{Feng_Wei}
W.~Feng, Y.~Chen, R.~Shi, N.~Ge, and J.~Lu, ``Exploiting macrodiversity in massively distributed antenna systems: {A} controllable coordination perspective,'' \emph{IEEE Trans. Veh. Technol.}, vol.~65, no.~10, pp. 8720--8724, Oct. 2016.

\bibitem{Saleh}
A.~Saleh, A.~Rustako, and R.~Roman, ``Distributed antennas for indoor radio communications,'' \emph{IEEE Trans. Commun.}, vol.~35, no.~12, pp. 1245--1251, Dec. 1987.

\bibitem{Lin_Dai2}
L.~Dai, ``A comparative study on uplink sum capacity with co-located and distributed antennas,'' \emph{IEEE J. Sel. Areas Commun.}, vol.~29, no.~6, pp. 1200--1213, June 2011.

\bibitem{6G}
W.~Saad, M.~Bennis, and M.~Chen, ``{A} vision of {6G} wireless systems: {Applications}, trends, technologies, and open research problems,'' \emph{IEEE Network}, vol.~34, no.~3, pp. 134--142, May. 2020.

\bibitem{CZ}
C.~Ma, H.~Zhang, X.~Yang, and S.~Ma, ``Massive {MIMO} empowered wireless powered sensor networks: An optimal design with statistical {CSI},'' \emph{IEEE Wireless Commun. Lett.}, vol.~11, no.~10, pp. 2105--2109, Oct. 2022.

\bibitem{Ngo}
H.~Q. Ngo, A.~Ashikhmin, H.~Yang, E.~G. Larsson, and T.~L. Marzetta, ``Cell-free massive {MIMO} versus small cells,'' \emph{IEEE Trans. Wireless Commun.}, vol.~16, no.~3, pp. 1834--1850, Mar. 2017.

\bibitem{Qingqing_Wu}
Q.~Wu and R.~Zhang, ``{Intelligent} reflecting surface enhanced wireless network via joint active and passive beamforming,'' \emph{IEEE Trans. Wireless Commun.}, vol.~18, no.~11, pp. 5394--5409, Nov. 2019.

\bibitem{Chongwen_Huang}
C.~Huang, A.~Zappone, G.~C. Alexandropoulos, M.~Debbah, and C.~Yuen, ``{Reconfigurable} intelligent surfaces for energy efficiency in wireless communication,'' \emph{IEEE Trans. Wireless Commun.}, vol.~18, no.~8, pp. 4157--4170, Aug. 2019.

\bibitem{LIS}
E.~Basar, ``Transmission through large intelligent surfaces: {A} new frontier in wireless communications,'' in \emph{Proceedings of 2019 European Conference on Networks and Communications (EuCNC)}, 2019, pp. 112--117.

\bibitem{Zijian_Zhang}
Z.~Zhang, L.~Dai, X.~Chen, C.~Liu, F.~Yang, R.~Schober, and H.~V. Poor, ``Active {RIS} vs. passive {RIS}: {W}hich will prevail in 6{G}?'' \emph{IEEE Trans. Commun.}, vol.~71, no.~3, pp. 1707--1725, Mar. 2023.

\bibitem{Van}
T.~Van~Chien, H.~Q. Ngo, S.~Chatzinotas, M.~Di~Renzo, and B.~Ottersten, ``Reconfigurable intelligent surface-assisted cell-free massive {MIMO} systems over spatially-correlated channels,'' \emph{IEEE Trans. Wireless Commun.}, vol.~21, no.~7, pp. 5106--5128, Jul. 2022.

\bibitem{Hu_Chen}
C.~Hu, L.~Dai, S.~Han, and X.~Wang, ``Two-timescale channel estimation for reconfigurable intelligent surface aided wireless communications,'' \emph{IEEE Trans. Commun.}, vol.~69, no.~11, pp. 7736--7747, Nov. 2021.

\bibitem{Cunhua_Pan}
C.~Pan, H.~Ren, K.~Wang, W.~Xu, M.~Elkashlan, A.~Nallanathan, and L.~Hanzo, ``Multicell {MIMO} communications relying on intelligent reflecting surfaces,'' \emph{IEEE Trans. Wireless Commun.}, vol.~19, no.~8, pp. 5218--5233, Aug. 2020.

\bibitem{Nguyen}
N.~T. Nguyen, V.-D. Nguyen, H.~V. Nguyen, H.~Q. Ngo, S.~Chatzinotas, and M.~Juntti, ``Spectral efficiency analysis of hybrid relay-reflecting intelligent surface-assisted cell-free massive {MIMO} systems,'' \emph{IEEE Trans. Wireless Commun.}, vol.~22, no.~5, pp. 3397--3416, May 2023.

\bibitem{Qing_Xue}
Q.~Xue, Y.-J. Liu, Y.~Sun, J.~Wang, L.~Yan, G.~Feng, and S.~Ma, ``Beam management in ultra-dense mmwave network via federated reinforcement learning: {An} intelligent and secure approach,'' \emph{IEEE Trans. Cognit. Commun. Networking}, vol.~9, no.~1, pp. 185--197, Feb. 2023.

\bibitem{ROF3}
B.~Schrenk, G.~Humer, M.~Stierle, and H.~Leopold, ``Fully passive remote radio head for uplink cell densification in wireless access networks,'' \emph{IEEE Photonics Technol. Lett.}, vol.~27, no.~9, pp. 970--973, May 2015.

\bibitem{Abdelrahman}
A.~Taha, M.~Alrabeiah, and A.~Alkhateeb, ``Enabling large intelligent surfaces with compressive sensing and deep learning,'' \emph{IEEE Access}, vol.~9, pp. 44\,304--44\,321, 2021.

\bibitem{YinHaifan}
J.~Hu, H.~Yin, and E.~Bj{\"o}rnson, ``Millimeter wave {MIMO} communication with semi-passive {RIS}: {A} low-complexity channel estimation scheme,'' in \emph{Proceedings of 2021 IEEE Global Communications Conference (GLOBECOM)}, 2021, pp. 01--06.

\bibitem{RIS_Overview}
C.~Pan, G.~Zhou, K.~Zhi, S.~Hong, T.~Wu, Y.~Pan, H.~Ren, M.~D. Renzo, A.~Lee~Swindlehurst, R.~Zhang, and A.~Y. Zhang, ``An overview of signal processing techniques for {RIS}/{IRS}-aided wireless systems,'' \emph{IEEE J. Sel. Top. Signal Process.}, vol.~16, no.~5, pp. 883--917, Aug. 2022.

\bibitem{RDARS_demo}
J.~Wang, C.~Ji, J.~Guo, and S.~Ma, ``Demo: Reconfigurable distributed antennas and reflecting surface ({RDARS})-aided integrated sensing and communication system,'' https://arxiv.org/abs/2308.07991, 2023.

\bibitem{Van_Chien}
T.~Van~Chien, L.~T. Tu, S.~Chatzinotas, and B.~Ottersten, ``Coverage probability and ergodic capacity of intelligent reflecting surface-enhanced communication systems,'' \emph{IEEE Commun. Lett.}, vol.~25, no.~1, pp. 69--73, Jan. 2021.

\bibitem{Charishma}
M.~Charishma, A.~Subhash, S.~Shekhar, and S.~Kalyani, ``Outage probability expressions for an {IRS}-assisted system with and without source-destination link for the case of quantized phase shifts in k - u fading,'' \emph{IEEE Trans. Commun.}, vol.~70, no.~1, pp. 101--117, Jan. 2022.

\bibitem{Fox1}
Y.~Abo~Rahama, M.~H. Ismail, and M.~S. Hassan, ``On the sum of independent fox's {$H$} -function variates with applications,'' \emph{IEEE Trans. Veh. Technol.}, vol.~67, no.~8, pp. 6752--6760, Aug. 2018.

\bibitem{Fox2}
A.~A. Kilbas, \emph{{H}-Transforms Theory and Applications}, 1st~ed.\hskip 1em plus 0.5em minus 0.4em\relax Boca Raton, 2004.

\bibitem{Kangda_Zhi2}
K.~Zhi, C.~Pan, H.~Ren, and K.~Wang, ``Statistical {CSI}-based design for reconfigurable intelligent surface-aided massive {MIMO} systems with direct links,'' \emph{IEEE Wireless Commun. Lett.}, vol.~10, no.~5, pp. 1128--1132, May 2021.

\bibitem{Qi_Zhang}
Q.~Zhang, S.~Jin, K.-K. Wong, H.~Zhu, and M.~Matthaiou, ``Power scaling of uplink massive {MIMO} systems with arbitrary-rank channel means,'' \emph{IEEE J. Sel. Top. Signal Process.}, vol.~8, no.~5, pp. 966--981, Oct. 2014.

\bibitem{3GPP}
3GPP, ``{Futher advancements for Evolved Universal Terrestrial Radio Access (E-UTRA) physical layer aspects},'' {3rd Generation Partnership Project (3GPP)}, TS 36.814, Mar. 2010.

\bibitem{Bound}
Q.~Zhang, S.~Jin, M.~McKay, D.~Morales-Jimenez, and H.~Zhu, ``Power allocation schemes for multicell massive {MIMO} systems,'' \emph{IEEE Trans. Wireless Commun.}, vol.~14, no.~11, pp. 5941--5955, Nov. 2015.

\bibitem{Hanzo}
L.~Dai, B.~Wang, M.~Wang, X.~Yang, J.~Tan, S.~Bi, S.~Xu, F.~Yang, Z.~Chen, M.~D. Renzo, C.-B. Chae, and L.~Hanzo, ``Reconfigurable intelligent surface-based wireless communications: {Antenna Design}, prototyping, and experimental results,'' \emph{IEEE Access}, vol.~8, pp. 45\,913--45\,923, 2020.

\bibitem{Wankai_Tang}
W.~Tang, J.~Y. Dai, M.~Z. Chen, K.-K. Wong, X.~Li, X.~Zhao, S.~Jin, Q.~Cheng, and T.~J. Cui, ``{MIMO} transmission through reconfigurable intelligent surface: {System} design, analysis, and implementation,'' \emph{IEEE J. Sel. Areas Commun.}, vol.~38, no.~11, pp. 2683--2699, Nov. 2020.

\bibitem{Xilong_Pei}
X.~Pei, H.~Yin, L.~Tan, L.~Cao, Z.~Li, K.~Wang, K.~Zhang, and E.~Bj{\"o}rnson, ``{RIS}-aided wireless communications: Prototyping, adaptive beamforming, and indoor/outdoor field trials,'' \emph{IEEE Trans. Commun.}, vol.~69, no.~12, pp. 8627--8640, Dec. 2021.

\bibitem{MingMin_Zhao}
M.-M. Zhao, Q.~Wu, M.-J. Zhao, and R.~Zhang, ``{I}ntelligent {R}eflecting {S}urface enhanced wireless networks: Two-timescale beamforming optimization,'' \emph{IEEE Trans. Wireless Commun.}, vol.~20, no.~1, pp. 2--17, Jan. 2021.

\bibitem{Kangda_Zhi}
K.~Zhi, C.~Pan, H.~Ren, K.~Wang, M.~Elkashlan, M.~Di~Renzo, R.~Schober, H.~Vincent~Poor, J.~Wang, and L.~Hanzo, ``Two-timescale design for reconfigurable intelligent surface-aided massive {MIMO} systems with imperfect {CSI},'' \emph{IEEE Trans. Inf. Theory}, pp. 1--1, 2022.

\end{thebibliography}

    \end{document}